\definecolor{mygreen}{RGB}{10,150,110}
\definecolor{myred}{RGB}{150,10,20}
\renewcommand{\epsilon}{\varepsilon}
\DeclareMathOperator{\E}{\ensuremath{\normalfont \textbf{E}}}
\newcommand{\hiddencomment}[1]{}
\newcommand{\mc}[1]{\ensuremath{\mathcal{#1}}}
\newcommand{\wt}[1]{\ensuremath{\widetilde{#1}}}
\newcommand{\cev}[1]{\reflectbox{\ensuremath{\vec{\reflectbox{\ensuremath{#1}}}}}}
\newcommand{\true}[0]{\ensuremath{\textsc{True}}}
\newcommand{\false}[0]{\ensuremath{\textsc{False}}}
\newcommand{\RGMM}[0]{\ensuremath{\textup{RGMM}}}
\newcommand{\ThSC}[0]{\ensuremath{\textsf{ThSC}}}
\newcommand{\ThSCfull}[0]{Threshold Set Cover}
\newcommand{\SC}[0]{\ensuremath{\textsf{SC}}}
\newcommand{\ST}[0]{\ensuremath{\textsf{ST}}}
\newcommand{\VO}[0]{\ensuremath{\textsf{VO}}}
\newcommand{\EO}[0]{\ensuremath{\textsf{EO}}}
\def\sU{\mathcal{U}}
\def\sF{\mathcal{F}}
\def\sW{\mathcal{W}}
\def\sol{\mathrm{SOL}}
\DeclareMathOperator{\poly}{poly}
\crefname{lemma}{Lemma}{Lemmas}
\crefname{theorem}{Theorem}{Theorems}
\crefname{property}{Property}{Properties}
\crefname{claim}{Claim}{Claims}
\crefname{result}{Result}{Results}
\crefname{definition}{Definition}{Definitions}
\crefname{observation}{Observation}{Observations}
\crefname{proposition}{Proposition}{Propositions}
\crefname{assumption}{Assumption}{Assumptions}
\crefname{line}{Line}{Lines}
\crefname{figure}{Figure}{Figures}
\crefname{equation}{}{}
\crefname{section}{Section}{Sections}
\crefname{appendix}{Appendix}{Appendices}
\crefname{algCounter}{Algorithm}{Algorithms}
\Crefname{algCounter}{Algorithm}{Algorithms}
\newtheorem{theorem}{Theorem}
\newtheorem{lemma}{Lemma}[section]
\newtheorem{proposition}[lemma]{Proposition}
\newtheorem{corollary}[lemma]{Corollary}
\newtheorem{definition}[lemma]{Definition}
\newtheorem{claim}[lemma]{Claim}
\newtheorem{observation}[lemma]{Observation}
\newtheorem{remark}[lemma]{Remark}
\newtheorem*{remark*}{Remark}
\definecolor{mylightgray}{RGB}{230,230,230}
\algnewcommand{\IIf}[2]{\textbf{if} #1 \textbf{then} #2}
\algnewcommand{\EndIIf}{\unskip\ \algorithmicend\ \algorithmicif}
\newenvironment{whitetbox}{
\par\addvspace{0.1cm}
\begin{tcolorbox}[width=\textwidth,
                  boxsep=5pt,
                  left=1pt,
                  right=1pt,
                  top=2pt,
                  bottom=2pt,
                  boxrule=1pt,
                  arc=0pt,
                  colframe=black,
                  colback=white
                  ]
}{
\end{tcolorbox}
}
\newcounter{algCounter}
\providecommand{\email}[1]{\href{mailto:#1}{\nolinkurl{#1}\xspace}}
\renewcommand{\paragraph}{%
  \@startsection{paragraph}{4}%
  {\z@}{10pt}{-1em}%
  {\normalfont\normalsize\bfseries}%
}
 \title{Sublinear Metric Steiner Tree via Improved Bounds for Set Cover}
\author{Sepideh Mahabadi\thanks{Microsoft Research. E-mail: \email{smahabadi@microsoft.com}.} \and Mohammad Roghani\thanks{Stanford University. E-mail: \email{roghani@stanford.edu}. The work was done while the author was an intern at Microsoft Research.} \and Jakub Tarnawski\thanks{Microsoft Research. E-mail: \email{jakub.tarnawski@microsoft.com}.} \and Ali Vakilian\thanks{Toyota Technological Institute at Chicago (TTIC). 
E-mail: \email{vakilian@ttic.edu}.}}
\date{}
\begin{document}

\maketitle

\thispagestyle{empty}

\begin{abstract}
    We study the metric Steiner tree problem in the sublinear query model. In this problem, for a set of $n$ points $V$ in a metric space given to us by means of query access to an $n\times n$ matrix $w$, and a set of terminals $T\subseteq V$, the goal is to find the minimum-weight subset of the edges that connects all the terminal vertices.

    \medskip
    Recently, Chen, Khanna and Tan [SODA'23] gave an algorithm that uses $\widetilde{O}(n^{13/7})$ queries and outputs a $(2-\eta)$-estimate of the metric Steiner tree weight, where $\eta>0$ is a universal constant. A key component in their algorithm is a sublinear algorithm for a particular set cover problem where, given a set system $\mathcal(\mathcal{U}, \mathcal{F})$, the goal is to provide a multiplicative-additive estimate for $|\mathcal{U}|-\textsf{SC}(\mathcal{U}, \mathcal{F})$. Here $\mathcal{U}$ is the set of elements, $\mathcal{F}$ is the collection of sets, and $\textsf{SC}(\mathcal{U}, \mathcal{F})$ denotes the optimal set cover size of $(\mathcal{U}, \mathcal{F})$. In particular, their algorithm returns a $(1/4, \varepsilon\cdot|\mathcal{U}|)$-multiplicative-additive estimate for this set cover problem using $\widetilde{O}(|\mathcal{F}|^{7/4})$ membership oracle queries (querying whether a set $S \in \mathcal{S}$ contains an element $e \in \mathcal{U}$), where $\varepsilon$ is a fixed constant. 

    \medskip
    In this work, we improve the query complexity of $(2-\eta)$-estimating the metric Steiner tree weight to $\widetilde{O}(n^{5/3})$ by showing a $(1/2, \varepsilon \cdot |\mathcal{U}|)$-estimate for the above set cover problem using $\widetilde{O}(|\mathcal{F}|^{5/3})$ membership queries. 
    To design our set cover algorithm, we estimate the size of a random greedy maximal matching for an auxiliary multigraph that the algorithm constructs implicitly, without access to its adjacency list or matrix. Previous analyses of random greedy maximal matching have focused on simple graphs, assuming access to their adjacency list or matrix. To address this, we extend the analysis of Behnezhad [FOCS'21] of random greedy maximal matching on simple graphs to multigraphs, and prove additional properties that may be of independent interest.
\end{abstract}

{
\clearpage
\hypersetup{hidelinks}
\vspace{1cm}
\renewcommand{\baselinestretch}{0.1}
\setcounter{tocdepth}{2}
\thispagestyle{empty}
\clearpage
}

\setcounter{page}{1}
\section{Introduction}
In the Steiner tree problem, we are given an undirected graph $G = (V, E)$, where each edge $e$ has an associated cost $w(e)$, and a specified set of terminal vertices $T \subseteq V$. Then, the objective is to find a minimum-cost subgraph $H$ of $G$ that connects all terminals in $T$. The Steiner tree problem is one of the most fundamental problems in combinatorial optimization and has been extensively studied by the TCS community since it was included among Karp's 21 NP-Complete problems~\cite{Karp72}. The state-of-the-art approximation factor for the Steiner tree problem is $\ln{4} + \varepsilon <1.39$~\cite{byrka2010improved}, and it is known that approximating it to a factor better than $96/95$ is NP-hard~\cite{chlebik2008steiner}. This Steiner tree problem has been studied in various domains, including approximation algorithms~\cite{robins2005tighter,byrka2010improved}, online algorithms~\cite{imase1991dynamic,awerbuch2004line,megow2016power,gupta2014online}, stochastic algorithms~\cite{gupta2005stochastic,gupta2007stochastic,GargGLS08}, and massive data analysis models~\cite{chazelle2005approximating,czumaj2009estimating,chen2023query}.

\begin{definition}[Sublinear Metric Steiner Tree]\label{def:steiner-tree}
In the metric Steiner tree problem, we are given a set of points $V$, a set of terminal points $T\subseteq V$, and query access to an oracle $\mathcal{O}$ to the $|V|\times |V|$ distance matrix of a metric space $(V, w)$, where $\mathcal{O}(u,v)$ returns the weight $w(u,v)$ of the edge $(u,v)$.

Let $\ST(V, T, w)$ denote the weight of a minimum-weight Steiner tree on instance $(V, T, w)$.    
Then, the goal is to design an algorithm that estimates $\ST(V, T, w)$ using the fewest possible queries to the distance matrix via the oracle $\mathcal{O}$.
\end{definition}

Czumaj and Sohler~\cite{czumaj2009estimating} presented the first sublinear query algorithm for the metric Steiner tree problem, showing a $(2+\varepsilon)$-approximation using $\Tilde{O}(k/\varepsilon^{O(1)})$ queries through their improved algorithm for the minimum spanning tree (MST) problem. Specifically, this follows their sublinear $(1+\varepsilon)$-approximation for MST together with the well-known result by Gilbert and Pollak~\cite{gilbert1968steiner} showing that an $\alpha$-approximation for MST over the metric induced on the terminals $T$ is a $(2\alpha)$-approximation for the metric Steiner tree instance with $T$ as the terminal set. 

Recently, Chen, Khanna, and Tan~\cite{chen2023query} studied the design of sublinear algorithms with strictly better-than-$2$ approximation for the metric Steiner tree problem. On the lower bound side, they showed that for any $\varepsilon > 0$, estimating the Steiner tree cost to within a $(5/3 - \epsilon)$-factor requires $\Omega(n^2)$ queries, even when the number of terminals $|T|$ is constant. Moreover, they showed that for any $\varepsilon > 0$, estimating the Steiner tree cost to within a $(2 - \epsilon)$-factor requires $\Omega(n + |T|^{6/5})$ queries. Additionally, they proved that for any $0 < \varepsilon < 1/3$, any algorithm that outputs a $(2-\epsilon)$-approximate Steiner tree (not just its cost) requires $\Omega(n|T|)$ queries. On the upper bound side, they showed that it is possible to achieve a better-than-$2$ estimate of the Steiner tree cost in sublinear time: there exists an algorithm that, with high probability, computes a $(2-\eta)$-approximation of the Steiner tree cost using $\wt{O}(n^{13/7})$ queries, where $\eta > 0$ is a universal constant. At the core of their sublinear algorithm for metric Steiner tree with improved approximation guarantee, they relate the problem of achieving a better-than-$2$ estimation for the Steiner tree to a variant of set cover problem with a different objective.

\begin{definition}[\ThSCfull]\label{def:threshold-SetCover}
Given a universe of elements $\sU$ and a collection $\sF$ of subsets of $\sU$, in the \ThSCfull{} problem the goal is to estimate $\ThSC(\sU, \sF) \coloneqq |\sU| - \SC(\sU, \sF)$, where $\SC(\sU, \sF)$ denotes the size of an optimal set cover solution for $(\sU, \sF)$, i.e.,  the minimal number of sets in $\sF$ whose union equals $\sU$.
\end{definition}
Following the notation of~\cite{chen2023query} and for simplicity, in our technical sections, we also refer to this problem as set cover.

Specifically, given access to the adjacency matrix of the graph representation of $(\sU, \sF)$, where there is an edge between $e \in \sU$ and $S \in \sF$ if and only if $e \in S$, Chen, Khanna, and Tan~\cite{chen2023query} designed an algorithm that, for any constant $0 < \varepsilon < 1$, with high probability, outputs a {\em multiplicative-additive} $(1/4,\varepsilon |\sU|)$-approximation for estimation of $\ThSC(\sU, \sF)$ using $\wt{O}_{\varepsilon}(|\sF|^{3/2} + |\sF|^{3/4} \cdot |\sU|)$ queries to the adjacency matrix (or, membership queries). An estimate $\sol$ for \ThSCfull{} on $(\sU, \sF)$ is a {\em multiplicative-additive} $(\gamma_1, \gamma_2)$-approximation, if $\gamma_1 \cdot \ThSC(\sU, \sF) - \gamma_2 \le \sol \le \ThSC(\sU, \sF)$.   
 
More broadly, there has been a large body of work on solving set cover problems in the massive data models of computation over the past decade \cite{saha2009maximum, demaine2014streaming, Har-PeledIMV16, emek2016semi, assadi2016tight, indyk2017fractional, assadi2017tight, bateni2017almost, IndykMRVY18, GrunauMRV20}. In particular the work of \cite{IndykMRVY18, GrunauMRV20} consider the set cover problem in the sublinear query model. However their algorithms assumes that it has an access to the adjacency list model as opposed to the adjacency matrix model, and thus cannot be directly employed here.

\subsection{Our Results}
Our key contribution is an algorithm for \ThSCfull{}, offering improved approximation guarantees and query complexity, as detailed below:
\begin{restatable}[Our Algorithm for \ThSCfull{}]{theorem}{maintheorem}\label{thm:setcover-general}
There exists an algorithm that,
given a set system $(\mc{U}, \mc{F})$ with oracle access to its adjacency matrix (also known as membership queries), outputs a multiplicative-additive $(1/2, \epsilon \cdot |\mc{U}|)$-approximation to \ThSCfull{},
in $\wt{O}(|\mc{F}|^{5/3})$ time, with high probability. 
\end{restatable}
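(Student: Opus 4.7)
The plan is to reduce \ThSCfull{} to estimating the maximum matching of an auxiliary multigraph $H$ whose vertex set is $\sU$. For every $S\in\sF$ we install, within $H$, a (random) pairing of the elements of $S$ into roughly $|S|/2$ edges; doing this for every set makes $H$ a multigraph in which the same pair $\{u,v\}$ can appear once per set containing both endpoints. The crucial combinatorial sandwich is
\[
    \tfrac{1}{2}\,\ThSC(\sU,\sF)\ \le\ \mu(H)\ \le\ \ThSC(\sU,\sF),
\]
where $\mu(H)$ denotes the maximum matching size. The upper bound is immediate: each edge of a matching determines a set of size $\ge 2$ that can be taken into a cover, and singletons handle the remaining elements, so any matching of size $m$ gives $\SC(\sU,\sF)\le |\sU|-m$. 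For the lower bound, fix an optimal cover $\mathcal{C}^*$ of size $k=\SC(\sU,\sF)$; assign each element to one covering set, obtaining a partition $S_1',\dots,S_k'$ of $\sU$ with $S_i'\subseteq S_i^*\in\mathcal{C}^*$, and pair up the elements of each $S_i'$ using edges available in $H$. This produces a matching of size $\sum_i \lfloor|S_i'|/2\rfloor \ge (|\sU|-k)/2 = \ThSC(\sU,\sF)/2$. Combining this sandwich with a $(1\pm\varepsilon)$-estimate of $\mu(H)$ yields the claimed $(1/2, \varepsilon|\sU|)$-approximation.

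To estimate $\mu(H)$ sublinearly, the next step is to simulate Behnezhad's $\wt{O}(n)$-query estimator based on the \RGMM{} on $H$. In the standard setting, \RGMM{} orders edges uniformly at random and greedily augments the matching; Behnezhad shows that one can decide, at query cost $\poly(\log n,1/\varepsilon)$ per vertex, whether a given vertex is matched in \RGMM{}, yielding a $(1\pm\varepsilon)$-approximation of $\mu$ in $\wt{O}(n)$ queries. We cannot apply this result as a black box: $H$ is never materialized, and its edges must be discovered through membership queries on $(\sU,\sF)$. A random neighbor of an element $u$ in $H$ is obtained by first sampling a random set $S\ni u$ and then a random element $v\in S\setminus\{u\}$, which requires rejection sampling whose success probability is tied to the multigraph degree of $u$. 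We therefore extend Behnezhad's analysis along two axes: from simple graphs to multigraphs (so that repeated pairs are treated correctly in the arrival-time argument), and from explicit adjacency-list access to implicit access mediated by $(\sU,\sF)$.

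To reach the $\wt{O}(|\sF|^{5/3})$ budget, we bucket the sets of $\sF$ by size. Large sets, of size at least $\Delta=|\sF|^{2/3}$, can only appear in few independent copies inside any good cover; their contribution to $\mu(H)$ is estimated by direct sampling of element pairs, at $\wt{O}(|\sF|\cdot\Delta)$ queries. For small sets of size $<\Delta$, the multigraph $H$ restricted to them has vertex-degrees bounded in terms of $\Delta$, and the implicit-\RGMM{} simulation also costs $\wt{O}(|\sF|\cdot\Delta)$ queries. Balancing at $\Delta=|\sF|^{2/3}$ gives $\wt{O}(|\sF|^{5/3})$ overall; we may assume $|\sU|\le|\sF|^{O(1)}$ by a standard preprocessing step that aggregates identical sets and discards trivially coverable elements.

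The hard part will be the analysis of \RGMM{} on the auxiliary multigraph under implicit oracle access. Behnezhad's proof leverages uniform access to a vertex's neighbor list and a clean random-rank model on simple edges. Two issues arise in our setting: (i) parallel edges distort the arrival-time distribution, since a single pair can be proposed by many different sets and inflate effective degrees; (ii) the implicit oracle forces us to replace a random-neighbor query by a random-set query followed by a random-element pick, introducing a degree-dependent rejection step. I plan to resolve (i) by re-weighting the edge arrival process to account for multiplicities, in a dedicated section that proves the required multigraph version of Behnezhad's query lemma, and (ii) by absorbing the rejection-sampling cost into the per-vertex query budget so that the amortized cost stays within the $\wt{O}(|\sF|^{5/3})$ envelope.
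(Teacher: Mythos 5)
There is a genuine gap at the very first step: your auxiliary multigraph is built by installing, for each set $S$, only a single (random) pairing of its elements, and with that construction the lower-bound half of your sandwich $\tfrac12\ThSC(\sU,\sF)\le\mu(H)$ is simply false. Your own argument for it (``pair up the elements of each $S_i'$ using edges available in $H$'') tacitly assumes that \emph{every} pair inside a set is an edge of $H$, which is not what you built. Concretely, take $\sU=\{v\}\cup\{u_1,w_1,\dots,u_m,w_m\}$ and sets $S_j=\{v,u_j,w_j\}$: here $\SC=m$ and $\ThSC=m+1$, but each $S_j$ contributes a single edge, which with probability $2/3$ is incident to $v$; the resulting graph has maximum matching of expected size about $m/3+1<(m+1)/2$, and in the worst case (all pairings choosing $(v,u_j)$) the matching has size $1$. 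The paper's multigraph instead joins \emph{every} two elements that co-occur in a set (one parallel edge per common set); the multiplicity is there only to make implicit uniform neighbor sampling unbiased, not to thin the graph. With the clique-per-set construction your partition-and-pair argument does give the maximum-matching sandwich, but note a second mismatch: Behnezhad's algorithm estimates the size of a random greedy \emph{maximal} matching, not a $(1\pm\varepsilon)$-approximation of $\mu(H)$ as you assert. Chaining ``maximum-matching sandwich with factor $1/2$'' with ``maximal matching is only a $2$-approximation of maximum'' degrades the multiplicative constant to $1/4$. The paper avoids this by proving the sandwich directly for \emph{any maximal} matching $M$ of the clique-per-set graph: maximality plus the fact that two unmatched elements can never share a set gives $|\sU_{low}|-2|M|\le\SC$, so the RGMM estimate itself is a $(1/2,\cdot)$-approximation.

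The complexity side is also missing the ingredients that actually yield $\wt{O}(|\sF|^{5/3})$. Bucketing sets by size cannot be done for free (set sizes are not known without queries); the paper instead runs a sampling-based sparsification that \emph{removes} sets with $\wt\Omega(n^{1/3})$ surviving elements together with all their elements, charging the at most $k/n^{1/3}=o(k)$ removed sets to the additive error, and—crucially—also sparsifies \emph{elements} that lie in $\wt\Omega(n^{1/3})$ sets (these are covered by a small random subfamily, again charged to the additive error). Without the element-side sparsification your degrees in $H$, the cost of listing the sets containing a visited element, and the rejection-sampling cost for a random neighbor are all unbounded, so ``absorbing the rejection-sampling cost into the per-vertex budget'' does not go through. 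Finally, the amortization you wave at is exactly the technical heart of the paper: one must prove that for a random permutation the oracle requests neighbors of a vertex $v$ only $\wt{O}(\deg_H(v)/n)$ times in expectation (via the query-path/permutation-mapping analysis extended to multigraphs), so that the $\wt{O}(n^{1+x}/\deg_H(v))$ cost per request cancels against the visit frequency; without that lemma the bound degrades back to roughly $n^{7/4}$ rather than $n^{5/3}$.
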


Note that both the query complexity and the running time of the algorithm are bounded by $\wt{O}(|\mc{F}|^{5/3})$, improving upon the algorithm by Chen, Khanna, and Tan~\cite{chen2023query} for large values of $|\sU|$, which uses $\wt{O}_{\varepsilon}(|\sF|^{3/2} + |\sF|^{3/4} \cdot |\sU|)$ membership queries and provides a multiplicative-additive $(1/4, \varepsilon |\mc{U}|)$-approximation for the problem. Notably, when $|\mc{U}| = \omega(|\mc{F}|^{2/3})$, the algorithm becomes sublinear in $|\mc{U}| \cdot |\mc{F}|$, making it especially relevant for applications in the metric Steiner tree problem. More specifically, our new algorithm for \ThSCfull{} results in the following improved sublinear query algorithm for the metric Steiner tree problem, which we show in \cref{sec:steiner-tree}.

\begin{restatable}[Sublinear Algorithm for Metric Steiner Tree]{theorem}{maintheoremsteiner}\label{thm:steiner-tree}
There exists an algorithm that,
given an instance of metric Steiner tree denoted by $(V,T,w)$
with oracle access $\mathcal{O}$ to the distance matrix of $(V,w)$,
outputs a $(2-\eta)$-estimate of $\ST(V,T,w)$ using $\wt{O}(n^{5/3})$ queries to $\mathcal{O}$,
where $\eta > 0$ is a universal constant,
with high probability.
\end{restatable}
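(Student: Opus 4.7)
The plan is to follow the framework of Chen, Khanna, and Tan~\cite{chen2023query}, who reduce $(2-\eta)$-estimation of metric Steiner tree weight to two pieces: (i) estimating the weight of a minimum spanning tree on the metric induced on the terminal set $T$, and (ii) estimating the ``savings'' achievable by routing through non-terminal (Steiner) vertices, where the savings at each distance scale are captured by an instance of \ThSCfull{} on an implicit set system. The first piece can be handled in $\wt{O}(|T|)$ queries using the classical sublinear MST algorithm of Czumaj and Sohler~\cite{czumaj2009estimating} and is therefore not the bottleneck.

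For the second piece, at each of $O(\log n)$ geometrically spaced distance scales $2^i$, I follow \cite{chen2023query} in constructing an implicit set system $(\sU, \sF)$ in which $\sU$ indexes terminal pairs that could potentially be ``shortcutted'' at that scale and $\sF$ indexes non-terminal vertices that are sufficiently close to the endpoints of those pairs. A single membership query to $(\sU, \sF)$ is simulated by $O(1)$ distance queries to $\mathcal{O}$, and $|\sF| \le n$, so invoking \cref{thm:setcover-general} once per scale costs $\wt{O}(n^{5/3})$ queries. Summing across the $O(\log n)$ scales and applying a union bound yields the claimed overall query complexity of $\wt{O}(n^{5/3})$.

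To obtain the improved approximation factor, I plan to plug the multiplicative-additive $(1/2, \epsilon \cdot |\sU|)$-estimate delivered by \cref{thm:setcover-general} into the approximation analysis of~\cite{chen2023query}. That analysis shows that a $(\gamma_1, \gamma_2)$-approximation to the per-scale savings quantities, combined with a $(1+\varepsilon)$-estimate of the MST, yields a $(2 - \Omega(\gamma_1))$-estimate of $\ST(V,T,w)$ whenever $\gamma_2$ is a sufficiently small constant multiple of $|\sU|$. With $\gamma_1 = 1/2$ and $\varepsilon$ chosen as a small absolute constant, the resulting $\eta > 0$ is a universal constant, improving on (and in particular matching the structure of) the $\eta$ implicit in the $(1/4, \epsilon\cdot|\sU|)$-approximation of~\cite{chen2023query}.

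The step that requires the most care is verifying that the~\cite{chen2023query} reduction can be instantiated so that (a) for every scale the implicit set system has $|\sF| \le O(n)$ with $O(1)$-distance-query access to membership, and (b) the $|\sU|$-independence of our bound $\wt{O}(|\sF|^{5/3})$ is fully exploited. Point (b) is exactly where our improvement over the $\wt{O}(|\sF|^{3/2} + |\sF|^{3/4}\cdot|\sU|)$ bound of~\cite{chen2023query} shows up in the Steiner tree complexity, since $|\sU|$ may be as large as $\Theta(|T|^2)$ in this reduction and would otherwise dominate the query count. Once (a) and (b) are checked, the remainder is routine bookkeeping over the $O(\log n)$ scales together with standard concentration to ensure all estimates succeed simultaneously with high probability.
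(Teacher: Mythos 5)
The central step of your plan---``a single membership query to $(\sU,\sF)$ is simulated by $O(1)$ distance queries to $\mathcal{O}$, so invoke \cref{thm:setcover-general} once per scale as a black box''---does not hold for the reduction of~\cite{chen2023query}, and this is precisely where the real work lies. In that reduction the elements of $\sU_i$ are not terminal pairs but connected components of a threshold graph $H'_i$ on the terminals, and a set (a Steiner vertex $v$) contains an element (a component $S$) when $v$ is within distance roughly $\tau=(3/5)(1+\varepsilon)^i$ of some \emph{representative} terminal of $S$. Determining membership therefore requires knowing the component structure and its net of representatives: even sampling a uniformly random element of $\sU_i$, or answering one membership query, requires {\sf Find}/{\sf BFS} subroutines costing $\wt{O}(k)$ distance queries each, not $O(1)$. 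Consequently one cannot simply charge $\wt{O}(|\sF|^{5/3})$ per scale; the set cover algorithm has to be opened up and re-implemented inside the Steiner tree instance. Concretely, the paper simulates \Cref{alg:sparsification-sets}, \Cref{alg:sparsification-elements} and the \RGMM{} oracle of \cref{lem:rgmm} directly on the implicit instance, with threshold parameters $\kappa, M, R, P$: sampling a random small component costs $\wt{O}(k^2/M)$ via BFS, the degree-based partitions of terminals and Steiner vertices cost $\wt{O}(nk/R)$ and $\wt{O}(nk/P)$, and the matching estimate costs $\wt{O}(RP(n+k))$, under constraints such as $n/R < \wt{O}(\varepsilon M)$ and $k/P \le \varepsilon M$ (the latter relaxed from $kR/P\le \varepsilon M$ of~\cite{chen2023query} thanks to the adaptive set sparsification); optimizing $\kappa=M=n^{2/3}$ and $R=\wt\Theta(P)=\wt\Theta(n^{1/3})$ is what yields $\wt{O}(n^{5/3})$. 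Your proposal skips all of this accounting, and with the (false) $O(1)$-per-membership assumption the claimed bound is unsupported.

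Two further inaccuracies: (i) the universe size in this reduction is at most the number of terminals $k$ (components of a graph on $T$), not $\Theta(|T|^2)$, so your point (b) about where the improvement enters is misattributed---the gain comes from replacing the $(RP)^2k$ term of~\cite{chen2023query} by $RP(n+k)$ in the white-box \RGMM{} simulation, together with the relaxed sparsification condition; and (ii) the Steiner tree reduction actually needs an estimate of $k-\SC(\sU,\sF_{\neq 2})$, i.e.\ with sets of size exactly $2$ excluded, which a black-box call to \cref{thm:setcover-general} does not provide---the paper handles this via the modified oracle of \cref{thm:setcover-without-two}. Your appeal to the approximation analysis of~\cite{chen2023query} for the $(2-\eta)$ guarantee is fine in spirit (the paper does the same), but the query-complexity argument needs to be redone along the lines above.
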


Notably, the query complexity of our algorithm improves upon the $\wt{O}(n^{13/7})$ query complexity of the algorithm of~Chen, Khanna, and Tan~\cite{chen2023query}.

For a detailed overview of our technical contribution, see~\Cref{sec:tech-overview}.

\section{Technical Overview}\label{sec:tech-overview}

In this section, we provide a brief overview of the technical challenges involved in designing our algorithms. To design a sublinear time algorithm for the Steiner tree problem, we use the framework developed by \citet*{chen2023query}. They demonstrated that breaking the 2-approximation barrier for the Steiner tree problem can be reduced to solving an instance of a set cover problem. We refer the reader to Section 4.1 of \cite{chen2023query} for details on this reduction.

We denote the variant of the set cover problem as {\em \ThSCfull{}}. Given a collection of sets $\mc{F}$ over a universe of elements $\mc{U}$, we aim to estimate the value of $\ThSC(\sU, \sF) = |\mc{U}| - \SC(\mc{U}, \mc{F})$, where $\SC(\mc{U}, \mc{F})$ denotes the size of the optimal set cover of the given instance. To achieve our goal of breaking the 2-approximation barrier for the Steiner tree problem, we need to estimate $\ThSC(\sU, \sF)$ with a $(\gamma, \epsilon \cdot |\mc{U}|)$ multiplicative-additive error, where $\gamma$ must be a constant and $\epsilon$ is any (small enough) constant. For this problem, we only have access to a membership oracle of the instance, meaning we can query whether a particular element $e$ is in a particular set $S$ or not. Note that this type of access is generally considered more challenging compared to an adjacency list oracle, where the algorithm can access either the $i$th element of a set $S$, or the $i$th set containing an element $e$. The reason is that if an element is included in only a constant number of sets, the algorithm is required to spend $\Omega(|\mc{F}|)$ queries to find just one set that contains the element. Consequently, we cannot use the results from the literature on sublinear set cover \cite{GrunauMRV20, IndykMRVY18} because they all rely on an adjacency list access model.

We will now provide an informal, step-by-step description of our algorithm for \ThSCfull{}, highlighting its differences, innovations, and technical challenges in comparison to the algorithm of \citet*{chen2023query}. For simplicity, in this technical overview we assume that $|\mc{F}| = \wt{\Theta}(|\mc{U}|)$, since this represents the worst-case scenario for the Steiner tree problem. However, our formal proof does not depend on this assumption. We let $n = |\mc{F}|$.

\paragraph{First step: sparsification of the \ThSCfull{} instance.} The goal of this step is to produce a new instance where each element and each set has a low degree—specifically, where each element is in only a few sets, and each set contains only a few elements. This step is standard in designing sublinear algorithms for the set cover problem for different access models, and a slightly different version of it is also used in the algorithm by \cite{chen2023query}. Our slight modification of the sparsification step allows us to relax some constraints in the reduction from the Steiner tree problem to \ThSCfull, enabling us to achieve the same query complexity for both problems.

Let $x > 0$ be some constant that we optimize later. Consider a set $S \in \mc{F}$ and suppose we randomly sample $\wt{O}(n^{1-x})$ elements from the universe and query the membership of all the sampled elements in $S$. If the size of $S$ is at least $\wt{\Omega}(n^{x})$, we expect to see a large intersection. Conversely, if the size of $S$ is much smaller, we expect to see a small intersection. If a large intersection exists, we can remove the set $S$ and all its elements from the instance. Since this event occurs at most $\wt{O}(n^{1-x})$ times, we can account for the removed elements and sets using the additive error in our estimation. Similarly, we can show that all elements belonging to more than $\wt{\Omega}(n^{x})$ sets can be covered by a random subcollection of sets of size $\wt{O}(n^{1-x})$. Therefore, without loss of generality, by spending $\wt{O}(n^{2 - x})$ time, we can assume that each set contains at most $\wt{O}(n^{x})$ elements, and each element is included in at most $\wt{O}(n^{x})$ sets.

\paragraph{Second step: constructing an auxiliary graph $H$ and estimating the size of its maximum matching.} Similar to \cite{chen2023query}, we construct a graph $H$ with a vertex set where each vertex corresponds to an element of $\mc{U}$. We connect two vertices if their corresponding elements appear together in at least one set from $\mc{F}$. It is important to note that we do not construct $H$ explicitly, as doing so would be computationally expensive and require $\Omega(n^2)$ time. As shown by \cite{chen2023query}, if the size of the maximum matching of $H$ is large, it is evident that $\ThSC(\mc{U}, \mc{F})$ is significantly smaller than $|\mc{U}|$. Conversely, if the size of the maximum matching of $H$ is close to zero, then $\ThSC(\mc{U}, \mc{F})$ is also close to zero. This is sufficient for our purposes, as our goal is to obtain a constant-factor approximation. Intuitively, each matching edge in $H$ indicates that there are two elements that can be covered together, which increases the value of $\ThSC(\mc{U}, \mc{F})$.

There is extensive literature on estimating the size of maximum matching in sublinear time \cite{AzarmehrBR24, Behnezhad21, BehnezhadRR23a, BehnezhadRR23b, BehnezhadRR24, BhattacharyaKS23, BhattacharyaKSW23, KapralovMNT20, LeviRY15,  OnakSODA12, ParnasR07, YoshidaYISTOC09}, with significant progress made in recent years. For our application, we use the algorithm of \citet*{Behnezhad21} to estimate the size of a random greedy maximal matching (\RGMM) of the graph. In summary, this algorithm can estimate the size of the \RGMM{} of a graph in $\wt{O}(\bar{d})$ time \underline{if given access to the adjacency list} of the graph, where $\bar{d}$ denotes the average degree of the graph. We can now use this algorithm as a black box: 
\begin{itemize}
    \item The average degree of $H$ is $\wt{O}(n^{2x})$, since each element is in $\wt{O}(n^{x})$ sets and each set contains $\wt{O}(n^{x})$ elements.
    \item Each time the algorithm visits a vertex in $H$ (corresponding to an element), we can spend $\wt{O}(n^{1+x})$ time to find its adjacency list in $H$. This involves first querying all sets that include the element, and then making queries between those sets and all elements.
\end{itemize}
Therefore, we can simulate the algorithm from \cite{Behnezhad21} in $\widetilde{O}(\bar{d} \cdot n^{1+x}) = \widetilde{O}(n^{1+3x})$ time. By balancing this with the sparsification step, which requires $\widetilde{O}(n^{2-x})$ time, we can set $x = 1/4$ to achieve an algorithm with a running time of $\widetilde{O}(n^{7/4})$. This is essentially the running time of the algorithm by \citet*{chen2023query}.

\paragraph{Third step: using the algorithm of \citet{Behnezhad21} in a white-box manner.} To improve the running time of our algorithm, we need to open up the RGMM algorithm from \cite{Behnezhad21} and utilize its properties to apply it more effectively. The \RGMM{} algorithm is a local algorithm that explores the neighborhood of a given vertex to determine whether it is matched. A key observation is that during each exploration, the algorithm requires a random neighbor of the vertex that has not been explored yet. However, in the previous approach, we constructed the entire adjacency list of the vertex, which is redundant and inefficient. Intuitively, we only need to randomly identify one of the vertex's neighbors in each step.

However, the first challenge we encounter is that we cannot select a neighbor uniformly at random. To illustrate this, consider the following example. 
Suppose that we have five elements $\mc{U} = \{e_1, \ldots, e_5\}$ and three sets: $S_1 = \{e_1, e_2, e_3 \}$, $S_2 = \{ e_1, e_2, e_4\}$, and $S_3 = \{e_1, e_2, e_5\}$. Suppose that we want to find a random neighbor of $e_1$ in $H$. If we first find all sets that include $e_1$ and then query between those sets and all elements uniformly at random until we find an edge in $H$, we are likely to see the edge $(e_1,e_2)$ because it appears in all sets. Consequently, the algorithm has a bias towards finding neighbors that appear in more sets with the element.

To overcome this challenge, rather than defining an auxiliary simple graph $H$, we define an auxiliary multigraph $H$. In this multigraph, if two elements appear in the same set multiple times, we add an edge for each of those occurrences. Note that the average degree of $H$ remains at most $\wt{O}(n^{2x})$. However, the algorithm and analysis for \RGMM{} from \cite{Behnezhad21} are designed for simple graphs. We extend these results to multigraphs and show that we can estimate the size of \RGMM{} for a multigraph, given access to its adjacency list. This extension may be of independent interest and could be useful for tackling other problems in sublinear time. To establish this, we build on the exquisite approach first introduced by \citet*{YoshidaYISTOC09} and further explored in various settings \cite{Behnezhad21, BehnezhadRRS-SODA23, TSP-icalp24}. We employ techniques such as the analysis of the round-complexity of maximal independent sets \cite{FischerN18}, double-counting arguments to bound the average complexity of RGMM on multigraphs, and others; we encourage the reader to refer to \Cref{sec:rgmm} for further details. 

Now, suppose that for each vertex the \RGMM{} algorithm explores in \( H \), we first query all sets to identify those that include the corresponding element. Since the algorithm explores at most \(\widetilde{O}(\bar{d})\) vertices in \( H \), this step will cost at most \(\widetilde{O}(\bar{d} \cdot n) = \widetilde{O}(n^{1+2x})\) in total. Let $v$ be a vertex that the \RGMM{} algorithm is exploring at the moment. Define $\mc{S}_v$ to be the collection of sets that include element $v$. Now, if we query uniformly at random between all elements and the collection $\mc{S}_v$, each incident edge of $v$ in the multigraph $H$ has an equal probability of being sampled, which resolves the first challenge. For now, assume that the degree of all vertices in \( H \) is \(\bar{d}\). Since \(|S_v| = \widetilde{O}(n^{x})\) and there are \(\widetilde{O}(n)\) elements in total, we expect to find an element in one of the sets of \(S_v\) every \(\widetilde{O}(n^{1+x}/\bar{d})\) queries. Thus, to identify a random neighbor of a vertex in \(H\), we need to spend \(\widetilde{O}(n^{1+x}/\bar{d})\) time. The \RGMM{} algorithm queries for a random neighbor of a vertex \(\widetilde{O}(\bar{d})\) times, since the exploration size is \(\widetilde{O}(\bar{d})\); therefore, the total cost is \(\widetilde{O}(n^{1+x})\). Combining this with the cost of sparsification, the total cost of the algorithm is \(\widetilde{O}(\max(n^{2-x}, n^{1+2x}))\), which is \(\widetilde{O}(n^{5/3})\) if we set \(x = 1/3\).

The second challenge arises because the \RGMM{} algorithm may predominantly visit vertices with a very low degree in \(H\). For such vertices, finding a random neighbor can be much more time-consuming. Generally, if a vertex \(v\) in \(H\) has degree \(\deg_H(v)\), then each time the algorithm finds a random neighbor of \(v\), it needs to spend \(\widetilde{O}(n^{1+x}/\deg_H(v))\) time. Therefore, if the algorithm frequently encounters vertices with constant degree, each query to find a random neighbor can cost \(\widetilde{O}(n^{1+x})\). With the exploration size being \(\widetilde{O}(\bar{d})\), this can significantly increase the query complexity of the algorithm. As a property of the local \RGMM{} algorithm, we demonstrate that each vertex is visited by the algorithm in proportion to its degree in \(H\). More formally, we prove that \RGMM{} requires \(\widetilde{O}(\deg_H(v)/n)\) neighbors of \(v\) on average, for a uniformly random permutation of edges. Thus, the degree-dependent factors cancel each other out, and the average cost of this part can be upper-bounded by \(\widetilde{O}(n^{1+x})\), which is enough for us to get the \(\widetilde{O}(n^{5/3})\) running time for the \ThSCfull.

\section{Preliminaries}\label{sec:preliminaries}

As is common in the literature, we use the term ``with high probability'' to refer to a probability of at least $1-n^{-\alpha}$, for a sufficiently large constant $\alpha \geq 2$. Moreover, we use $\wt{O}(\cdot)$, $\wt{\Theta}(\cdot)$, and $\wt{\Omega}(\cdot)$ to hide the dependency on $\poly(\log n)$. For a maximization problem of estimating some value $\chi$, and for $\gamma_1 \in (0, 1]$ and $\gamma_2 > 0$, we say that $\wt\chi$ is a multiplicative-additive $(\gamma_1, \gamma_2)$-approximation of the value $\chi$ if $\gamma_1\chi - \gamma_2 \leq \wt\chi \leq \chi$.

\subsection{Probabilistic Tools}
We use the following standard concentration inequalities in our proof.

\begin{proposition}[Chernoff Bound]
    Let $X_1, X_2, \ldots, X_n$ be $n$ independent Bernoulli random variables. Let $X = \sum_{i=1}^n X_i$. For any $k > 0$, it holds that
    \begin{align*}
        \Pr[|X - \E[X]| \geq k] \leq 2 \exp \left(- \frac{k^2}{3\E[X]}\right).
    \end{align*}
\end{proposition}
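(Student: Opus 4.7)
The plan is to use the classical Chernoff--Hoeffding moment generating function (MGF) method, bounding the two tails separately and combining them via a union bound. Set $\mu := \E[X]$ and $p_i := \E[X_i]$, so $\mu = \sum_i p_i$. For the upper tail, I would fix a parameter $t > 0$ and apply Markov's inequality to the nonnegative random variable $e^{tX}$:
\[
\Pr[X \geq \mu + k] \;=\; \Pr[e^{tX} \geq e^{t(\mu + k)}] \;\leq\; e^{-t(\mu + k)} \, \E[e^{tX}].
\]
Independence of the $X_i$ makes the MGF factor as $\E[e^{tX}] = \prod_i (1 - p_i + p_i e^t)$, and applying the pointwise inequality $1 + x \leq e^x$ termwise gives the clean estimate $\E[e^{tX}] \leq \exp(\mu(e^t - 1))$.

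Next I would optimize in $t$. The canonical choice $t = \ln(1 + k/\mu)$ produces the Cram\'er-type bound $\Pr[X \geq \mu + k] \leq \exp(-\mu \cdot \psi(k/\mu))$, where $\psi(\delta) := (1+\delta)\ln(1+\delta) - \delta$. A short calculus check---observing $\psi(0) = \psi'(0) = 0$ and $\psi''(\delta) = 1/(1+\delta)$---yields $\psi(\delta) \geq \delta^2 / (2(1+\delta))$, which is at least $\delta^2/3$ in the meaningful regime $\delta \leq 1$, i.e., $k \leq \mu$. Substituting $\delta = k/\mu$ then gives the upper-tail bound $\exp(-k^2/(3\mu))$. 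The lower tail $\Pr[X \leq \mu - k]$ is handled symmetrically by taking $t < 0$ and repeating the derivation, and in fact produces the slightly stronger estimate $\exp(-k^2/(2\mu))$; summing the two tail probabilities contributes the factor of $2$ in the final bound.

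Because this is a classical inequality, no genuine obstacle is anticipated. The only mildly fiddly step is the elementary estimate $\psi(\delta) \geq \delta^2/3$, which determines the constant $3$ appearing in the denominator of the stated bound; in writing this up I would either include this one-line calculus argument or simply cite a standard reference (e.g., the Mitzenmacher--Upfal or Dubhashi--Panconesi textbooks), since the proposition is used as a black-box concentration tool in the remainder of the paper.
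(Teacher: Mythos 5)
The paper never proves this proposition at all: it is stated in the preliminaries as a standard concentration tool and used as a black box, so there is no internal proof to compare against and your proposal must be judged on its own merits. Your exponential-moment route is the canonical derivation, and its overall structure (Markov's inequality applied to $e^{tX}$, factoring the MGF by independence, $1+x\le e^x$, the optimal choice $t=\ln(1+k/\mu)$, a calculus lower bound on $\psi(\delta)=(1+\delta)\ln(1+\delta)-\delta$, and a union bound over the two tails giving the factor $2$) is sound.

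Two concrete issues remain. First, the range of validity: your argument establishes the bound only for $k\le\mu$ (your ``meaningful regime''), whereas the proposition claims it for every $k>0$, and in that generality the stated inequality is actually false: take $n=1$ and $X_1$ Bernoulli with $p=0.01$ and $k=0.99$; then $\Pr[|X-\E[X]|\ge k]=0.01$ while $2\exp\left(-k^2/(3\E[X])\right)=2\exp(-32.67)<10^{-13}$. So a correct writeup must either restrict to $k\le\E[X]$ (the regime in which the paper actually invokes it) or weaken the exponent, e.g.\ to $k^2/(2\E[X]+k)$; silently assuming $\delta\le 1$ does not discharge the statement as written. Second, a small slip in the calculus step: from $\psi''(\delta)=1/(1+\delta)$ you correctly obtain $\psi(\delta)\ge \delta^2/(2(1+\delta))$, but this is at least $\delta^2/3$ only when $\delta\le 1/2$ (at $\delta=1$ it gives $1/4<1/3$), so your chain does not cover $\delta\in(1/2,1]$. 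The conclusion $\psi(\delta)\ge\delta^2/3$ on $[0,1]$ is still true, but you need the slightly sharper standard estimate $\psi(\delta)\ge \delta^2/(2+2\delta/3)$, which holds for all $\delta\ge 0$ and is at least $\delta^2/3$ whenever $\delta\le 3/2$, or a direct monotonicity check. The lower-tail computation and the union-bound factor of $2$ are fine.
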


\begin{definition}[Negative Association \cite{kumarDevProschen, saxenaKhursheed, wajc2017negative}]
    Let $X_1, X_2, \ldots, X_n$ be a set of random variables. We say this set is negatively associated if for any two disjoint index sets $I, J \subseteq [n]$, and two functions $f$ and $g$, both either monotonically increasing or monotonically decreasing, the following condition is satisfied:
    \begin{align*}
        \E[f(X_i: i \in I) \cdot g(X_j: j \in J)] \leq \E[f(X_i: i \in I)] \cdot \E[g(X_j: j \in J)].
    \end{align*}
\end{definition}

\begin{proposition}[Chernoff Bound for Negatively Associated Variables]\label{pro:NA-chernoff}
    Let $X_1, X_2, \ldots, X_n$ be a set of negatively associated Bernoulli random variables. Let $X = \sum_{i=1}^n X_i$. Then
    \begin{align*}
        \Pr\left[X \geq (1+\alpha) \E[X]\right] \leq \left( \frac{e^\alpha}{(1+\alpha)^{1+\alpha}} \right)^{\E[X]}
        \end{align*}
and
    \begin{align*}        
        \Pr\left[X \leq (1-\alpha) \E[X]\right] \leq \left( \frac{e^{-\alpha}}{(1-\alpha)^{1-\alpha}} \right)^{\E[X]}
    \end{align*}
\end{proposition}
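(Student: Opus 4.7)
The plan is to follow the classical Chernoff argument via the moment generating function, using the negative association hypothesis in place of full independence. Let $\mu = \E[X]$. For any $t > 0$, Markov's inequality gives
\[
\Pr[X \geq (1+\alpha)\mu] \;=\; \Pr\bigl[e^{tX} \geq e^{t(1+\alpha)\mu}\bigr] \;\leq\; \frac{\E[e^{tX}]}{e^{t(1+\alpha)\mu}},
\]
so the entire argument reduces to upper-bounding the moment generating function $\E[e^{tX}] = \E\bigl[\prod_{i=1}^n e^{tX_i}\bigr]$.

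The crucial step, which is where negative association enters, is to establish
\[
\E\Bigl[\prod_{i=1}^n e^{tX_i}\Bigr] \;\leq\; \prod_{i=1}^n \E\bigl[e^{tX_i}\bigr].
\]
The definition stated in the paper applies to two monotone functions on two disjoint index sets, so I would promote it to an $n$-fold product by a simple induction on $n$: split the product as $f(X_n) \cdot g(X_1,\ldots,X_{n-1})$ with $f(x) = e^{tx}$ and $g = \prod_{i<n} e^{tX_i}$, both monotonically increasing (since $t > 0$); apply the two-function negative association inequality; and recurse on the remaining $n-1$ variables, which inherit negative association as a subset of the original family.

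Once that factorization is in hand, I would use the standard Bernoulli estimate $\E[e^{tX_i}] = 1 + p_i(e^t - 1) \leq \exp\bigl(p_i(e^t - 1)\bigr)$, multiply over $i$, and obtain $\E[e^{tX}] \leq \exp\bigl(\mu(e^t-1)\bigr)$. Plugging this in and optimizing by taking $t = \ln(1+\alpha)$ yields the first inequality
\[
\Pr[X \geq (1+\alpha)\mu] \;\leq\; \left(\frac{e^\alpha}{(1+\alpha)^{1+\alpha}}\right)^{\mu}.
\]
The lower-tail bound is entirely symmetric: apply Markov to $e^{-tX}$ for $t > 0$, noting that each $x \mapsto e^{-tx}$ is monotonically \emph{decreasing}, which is still permitted by the negative-association definition (it only requires both functions to share the same direction of monotonicity); then factorize, bound each $\E[e^{-tX_i}] \leq \exp\bigl(p_i(e^{-t}-1)\bigr)$, and optimize with $t = -\ln(1-\alpha)$.

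The only real obstacle is the inductive extension of the two-function negative-association inequality to an $n$-fold product of monotone functions; this is a well-known consequence of the definition, and after it is set up the remainder of the proof is a verbatim transcription of the classical independent Chernoff derivation.
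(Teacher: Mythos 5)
Your proposal is correct: the paper itself does not prove this proposition but imports it as a standard tool (citing the negative-association literature), and your argument — Markov on the moment generating function, the product inequality $\E[\prod_i e^{tX_i}] \leq \prod_i \E[e^{tX_i}]$ obtained by induction from the two-function definition (using that subsets of a negatively associated family remain negatively associated, and that both-decreasing functions are allowed for the lower tail), followed by the usual Bernoulli MGF bound and the optimizations $t=\ln(1+\alpha)$ and $t=-\ln(1-\alpha)$ — is exactly the standard derivation that those references give. No gaps; the only implicit caveat is the usual one that the lower-tail expression is meaningful for $\alpha\in(0,1)$, which is how the paper applies it.
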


\begin{proposition}[Markov Inequality]
    Let $X$ be a non-negative random variable. For any $\alpha > 0$, it holds that
    \begin{align*}
        \Pr[X \geq \alpha] \leq \frac{\E[X]}{\alpha}.
    \end{align*}
\end{proposition}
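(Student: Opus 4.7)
The plan is to establish Markov's inequality via the standard indicator-function argument: find a pointwise lower bound on $X$ in terms of the tail event $\{X \geq \alpha\}$, then take expectations. There is no real obstacle here; the only subtlety is locating precisely where the hypothesis $X \geq 0$ enters, since otherwise the inequality fails (any real-valued $X$ with large negative mass can have $\E[X] = 0$ yet $\Pr[X \geq \alpha] = 1$).

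I would start by introducing the indicator $Y = \mathbb{1}[X \geq \alpha]$, so that $\Pr[X \geq \alpha] = \E[Y]$. The key step is the pointwise inequality $X \geq \alpha \cdot Y$. On the event $\{X \geq \alpha\}$, the right-hand side equals $\alpha$, and $X \geq \alpha$ by definition of the event; on the complementary event $\{X < \alpha\}$, the right-hand side equals $0$, and the non-negativity hypothesis $X \geq 0$ then gives $X \geq 0 = \alpha Y$. This second case is the unique place where non-negativity of $X$ is invoked, and I would flag it as the essential ingredient of the argument.

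Taking expectations and using monotonicity and linearity of $\E[\cdot]$ yields $\E[X] \geq \alpha \cdot \E[Y] = \alpha \cdot \Pr[X \geq \alpha]$. Since $\alpha > 0$, dividing by $\alpha$ gives $\Pr[X \geq \alpha] \leq \E[X]/\alpha$, as claimed.

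An equivalent route is via the layer-cake identity $\E[X] = \int_0^\infty \Pr[X \geq t]\,dt$ (itself a consequence of $X \geq 0$), combined with the fact that $\Pr[X \geq t] \geq \Pr[X \geq \alpha]$ for all $t \in [0, \alpha]$, which gives $\E[X] \geq \int_0^\alpha \Pr[X \geq \alpha]\,dt = \alpha \cdot \Pr[X \geq \alpha]$. I would prefer the indicator-based proof, since it is fully elementary and avoids invoking the tail-integral formula as a black box.
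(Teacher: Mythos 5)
Your proof is correct: the pointwise bound $X \geq \alpha \cdot \mathbb{1}[X \geq \alpha]$, justified by non-negativity on the event $\{X < \alpha\}$, followed by taking expectations and dividing by $\alpha > 0$, is the standard and complete argument, and you correctly isolate the one place where $X \geq 0$ is needed. Note that the paper itself states Markov's inequality without proof, as a standard probabilistic tool, so there is no paper proof to compare against; your indicator-based argument is the canonical one and nothing further is required.
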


\subsection{Graph Theory}

A {\em multigraph} is a type of graph in which multiple edges, also known as parallel edges, are allowed between any pair of vertices. A {\em line graph} of a graph $G$ is a graph that represents the adjacencies between the edges of $G$. More formally, given a graph $G$, the line graph $L(G)$ is constructed as follows:
\begin{itemize}
    \item \textbf{Vertices}: Each vertex in $L(G)$ corresponds to an edge in $G$.
    \item \textbf{Edges}: Two vertices in $L(G)$ are connected by an edge if and only if their corresponding edges in $G$ share a common endpoint (i.e., they are incident to the same vertex in $G$).
\end{itemize}
For a graph $G$, we let $\deg_G(v)$ be the degree of vertex $v$. In the case of multigraphs, $\deg_G(v)$ counts parallel edges multiple times.

\paragraph{Random Greedy Maximal Matching (RGMM):} Given a graph \(G = (V, E)\), a random greedy maximal matching is constructed by first selecting a random permutation \(\pi\) of the edges \(E\). The algorithm then iterates over the edges in the order specified by \(\pi\), adding each edge to the matching if neither of its endpoints is already matched (i.e., if none of its adjacent edges have been included in the matching so far). This process continues until all edges have been considered. The term "random" comes from the fact that the permutation \(\pi\) is chosen uniformly at random among all possible permutations of the edges.

\paragraph{Random Greedy Maximal Independent Set (MIS):} Given a graph \(G = (V, E)\), a random greedy maximal independent set is constructed by first selecting a random permutation \(\pi\) of the vertices \(V\). The algorithm then iterates over the vertices in the order specified by \(\pi\), adding each vertex to the independent set if none of its neighbors are already in the set (i.e., it does not share an edge with any vertex already included in the independent set). This process continues until all vertices have been considered. The term "random" comes from the fact that the permutation \(\pi\) is chosen uniformly at random among all possible permutations of the vertices.

\paragraph{Parallel Randomized Greedy Maximal Independent Set:} Let \(G\) be a graph, and let \(\pi\) be a permutation of its vertices. In each iteration, we select all vertices whose rank is lower than that of all their neighbors and then remove these vertices along with their neighbors from the graph. The number of iterations required for \(G\) to become empty is referred to as the round complexity, denoted by \(\rho(G, \pi)\). It is not hard to see that the MIS produced by the parallel randomized greedy maximal independent set is the same as the output of the random greedy maximal independent set for a fixed permutation $\pi$.

\section{Sublinear Algorithm for Set Cover}\label{sec:set-cover}

In this section, we formalize and analyze our algorithm for the set cover variant described above. Throughout this section, we assume that $\mc{U}$ denotes the universe and $\mc{F}$ denotes the collection of sets. We will slightly abuse notation by letting $k = |\mc{U}|$ and $n = |\mc{F}|$. Without loss of generality, we can assume that $n \geq k$.\footnote{For the sake of this problem, for each element in the universe we can add a set that only contains the element. The same assumption is also made in \cite{chen2023query}.} We use $\SC(\mc{U}, \mc{F})$ for the size of the minimum set cover of the input instance. Our goal is to design an algorithm that estimates the value of $\chi = k - \SC(\mc{U}, \mc{F})$ with at most $\epsilon k$ additive error and a constant multiplicative factor. For $\gamma_1 \in (0, 1]$ and $\gamma_2 > 0$, we say that $\wt\chi$ is a multiplicative-additive $(\gamma_1, \gamma_2)$-approximation of the value $\chi$ if $\gamma_1\chi - \gamma_2 \leq \wt\chi \leq \chi$. Similar to the reduction from the Steiner tree problem to set cover in \cite{chen2023query}, we need to estimate $k - \SC(\mc{U}, \mc{F}_{\neq 2})$ where $\mc{F}_{\neq 2}$ denotes the collection of all sets in $\mc{F}$ except those of size exactly 2. For simplicity, we focus on estimating $\chi$, and in the final step of this section, we will explain how to handle sets of size 2 in our algorithm. For our application, we need $\gamma_1$ to be constant and $\gamma_2 = \epsilon k$ where $\epsilon$ is a small fixed constant.

\paragraph{A High-Level Description of the Algorithm:} Our algorithm for estimating the value of $\chi$ is formalized in \Cref{alg:set-cover}. Apart from the collection of sets $\mc{F}$ and the universe of elements $\mc{U}$, the algorithm runs with two parameters, $\alpha$ and $\beta$, both of which can be determined based on the values of $x$ and $y$, which we optimize in the final step of the analysis. The algorithm has three phases: 1) sparsification of the sets, 2) sparsification of the elements, and 3) estimating the size of a maximum matching of the auxiliary graph $H$ (\Cref{def:graph-h}). In the following, we first describe each component in words before moving on to the formal proofs.

\paragraph{(Step 1) Set Sparsification:} This component of the algorithm is formalized in \Cref{alg:sparsification-sets}. The algorithm maintains a collection of sets $\hat{\mc{F}}$ and a universe of elements $\hat{\mc{U}}$ that are initially equal to $\mc{F}$ and $\mc{U}$, respectively. We iterate over all sets in $\mc{F}$ one by one and for each set $S$, we sample $r_1 = |\hat{\mc{U}}|/\alpha$ random elements from $\hat{\mc{U}}$. Intuitively, if $|S \cap \hat{\mc{U}}| \ge \wt\Omega(\alpha)$, we expect to have an element of $S$ in the $r_1$ random sampled  elements. Having this in mind, if there is a large enough intersection ($\Omega(\log n)$) between the sampled elements and $S$, we remove set $S$ and all its elements from $\hat{\mc{F}}$ and $\hat{\mc{U}}$, respectively (we add this set to our solution). Therefore, after the execution of the algorithm, each remaining set in $\hat{\mc{F}}$ has at most $\wt{O}(\alpha)$ elements. On the other hand, if $|S \cap \hat{\mc{U}}|$ is smaller than $\alpha$, we expect to see a small intersection with the $r_1$ sampled elements. Consequently, the number of times the algorithm removes a set and its elements from $\hat{\mc{F}}$ and $\hat{\mc{U}}$ is at most $k / \alpha = o(k)$, which can be accounted for by the additive error in the estimation. Also, if at any point the size of the maintained universe becomes smaller than some threshold (in the algorithm the value of the threshold is $\wt{\Theta}(\alpha)$, the algorithm stops processing the rest of the sets (\Cref{ln:end-condition}) since $\hat{\mc{U}} = \wt{O}(\alpha)$. This step differs from the algorithm in \cite{chen2023query} because we sequentially sparsify the sets, whereas their approach is non-adaptive.

\paragraph{(Step 2) Sparsification of Elements:} This part of the algorithm is formalized in \Cref{alg:sparsification-elements}. Similar to the previous step, we want to sparsify our instance such that each element in the remaining instance appears in at most $\wt{O}(\beta)$ sets. Let $\hat{\mc{U}}$ and $\hat{\mc{F}}$ be the output of \Cref{alg:sparsification-sets}. We sample $r_2 = |\hat{\mc{U}}|/\beta$ random sets from the collection $\hat{\mc{F}}$. With the same intuition as the previous step, if some element is in at least $\wt\Omega(\beta)$ sets of $\hat{\mc{F}}$, we expect to see it in many sampled sets. We partition the elements of $\hat{\mc{U}}$ into $\mc{U}_{low}$ and $\mc{U}_{high}$, depending on whether their intersection with the randomly sampled elements is smaller than a given threshold or not. With high probability, each element in $\mc{U}_{low}$ appears in at most $\wt{O}(\beta)$ sets of $\hat{\mc{F}}$, and each element in $\mc{U}_{high}$ appears in at least $\wt\Omega(\beta)$ sets of $\hat{\mc{F}}$. This suffices to show that any random subset of $\hat{\mc{F}}$ of size $\epsilon k / 2$ can cover all elements of $\mc{U}_{high}$, which can be included in the additive error of the estimation. This step is similar to the approach used in \cite{chen2023query}.

After steps 1 and 2 of the algorithm, we have the property that each set in the remaining instance has at most $\wt{O}(\alpha)$ elements, and each element in the remaining instance is in at most $\wt{O}(\beta)$ sets.

\paragraph{(Step 3) Estimating the Maximum Matching of Auxiliary Graph $H$:} We construct an auxiliary multigraph $H$ with vertex set $\mc{U}_{low}$. For each set $S \in \hat{\mc{F}}$, we add an edge in $H$ between every two elements of $S$. Note that we do not explicitly construct the multigraph $H$ because doing so would require $\Omega(nk)$ time, which is not feasible. We now estimate the size of the maximum matching in $H$ to produce our final estimate. Intuitively, if $\chi$ is very small (close to zero), meaning that nearly $k$ sets are required to cover the universe, then the maximum matching in $H$ will also be small. To see this, note that a matching edge implies that its two endpoints can be covered by the same set. Conversely, if $\chi$ is large (almost equal to $k$), then $H$ will have a large matching because each set in the set cover solution covers many new elements, which can be almost paired up in $H$. Thus, obtaining a constant approximation for the size of the maximum matching in $H$ is sufficient to achieve our goal, and we do this by estimating the size of a random greedy maximal matching in $H$. To that end, we modify and analyze the algorithm from \cite{Behnezhad21} for multigraphs and adapt it to our access model for the multigraph $H$, as discussed in detail in \Cref{sec:rgmm}. The algorithm in \cite{chen2023query} constructs a similar graph, but it is not a multigraph in the sense that, for any two elements that appear together in multiple sets, only a single edge is added between them in $H$.

\begin{definition}[Auxiliary Multigraph $H$]\label{def:graph-h}
    Let $\hat{\mc{F}}$ and $\mc{U}_{low}$ be as defined in \Cref{ln:line4,ln:line3}, respectively, of \Cref{alg:set-cover}. We construct an auxiliary graph $H$ with vertex set $\mc{U}_{low}$ such that for each set $S$ in $\hat{\mc{F}}$ and each two different elements $e, e' \in \mc{U}_{low}$, we add an edge $(e, e')$ to $H$. Note that $H$ is a multigraph: multiple sets may contain both elements $e$ and $e'$, and we add an edge for each of these sets.
\end{definition}

\begin{algorithm}[H]
\caption{Sublinear Time Algorithm for Set Cover}
\label{alg:set-cover}
\textbf{Input:} Collection of sets $\mc{F}$ and universe of elements $\mc{U}$.

\textbf{Parameter:} $\alpha \gets n^x, \quad \beta \gets 10 \max(k/n^{1-y}, 1) \cdot n \log(n) / k$. \label{ln:def_of_beta}

Let $\hat{\mc{F}}$ and $\hat{\mc{U}}$ be the output of \Cref{alg:sparsification-sets} with input $\mc{F}$, $\mc{U}$ and $\alpha$. \label{ln:line3}

Let $\mc{U}_{low}$ and $\mc{U}_{high}$ be the output of \Cref{alg:sparsification-elements} with input $\hat{\mc{F}}$, $\hat{\mc{U}}$, and $\beta$. \label{ln:line4}

Let $H$ be the auxiliary multigraph defined in \Cref{def:graph-h}. \Comment{We do not build $H$ explicitly.}

Let $\wt\mu$ be the estimate of $\E_\pi | \RGMM(H, \pi) |$ using the algorithm in \Cref{sec:rgmm}.

Let $\wt\chi \gets \wt\mu + |\mc{U} \setminus \mc{U}_{low} | -\epsilon k / 2$. \label{ln:def_of_wtchi}

\Return $\wt\chi$.

\end{algorithm}

\begin{algorithm}[H]
\caption{Sparsification of Sets}
\label{alg:sparsification-sets}
\textbf{Input:} Collection of sets $\mc{F}$, universe of elements $\mc{U}$, and parameter $\alpha$ that controls the sparsification ratio.

$\hat{\mc{F}} \gets \mc{F}, \quad \hat{\mc{U}} \gets \mc{U}, \quad c \gets 0$. \Comment{We use $c$ only for the analysis.}

\For{$S \in \mc{F}$}{
    \If{$|\hat{\mc{U}}| < 10 \alpha \log n$ \label{ln:end-condition}}{
    \textbf{break}
    }

    $r_1 \gets |\hat{\mc{U}}| / \alpha$. \label{ln:sample-size-alg2} 

    Let $e_1, e_2, \ldots, e_{r_1}$ be $r_1$ random elements of $\hat{\mc{U}}$.

    Make queries between $S$ and elements $e_1, \ldots, e_{r_1}$.

    \If{$| \{e_1, \ldots , e_{r_1} \} \cap S| \geq 10\log n $ \label{ln:dense}}{
        $\hat{\mc{F}} \gets \hat{\mc{F}} \setminus \{S\}$.

        $c \gets c + 1$.

        \For{$e \in \hat{\mc{U}}$}{
            \If{$e \in S$}{
                $\hat{\mc{U}} \gets \hat{\mc{U}} \setminus \{e\}$.
            }
        }
    }
}
\Return $\hat{\mc{F}}, \hat{\mc{U}}$

\end{algorithm}

\begin{algorithm}[H]
\caption{Sparsification of Elements}
\label{alg:sparsification-elements}
\textbf{Input:} Collection of sets $\hat{\mc{F}}$, universe of elements $\hat{\mc{U}}$, and parameter $\beta$ that controls the sparsification ratio.

$r_2 \gets |\hat{\mc{U}}|/\beta$.

\If{$r_2 < 20 \log n / \epsilon$}{
    $\mc{U}_{low} \gets \hat{\mc{U}}, \quad \mc{U}_{high} \gets \emptyset$.
    
    \Return $\mc{U}_{low}$, $\mc{U}_{high}$.
}

Let $\{ S_1, S_2, \ldots, S_{r_2} \}$ be $r_2$ random sets from $\hat{\mc{F}}$.

Make queries between all elements in $\hat{\mc{U}}$ and sets in $\{ S_1, S_2, \ldots, S_{r_2} \}$.

Let $\mc{U}_{low}$ be the elements that appeared in at most $20 \log n / \epsilon$ many sets.

$\mc{U}_{high} \gets \hat{\mc{U}} \setminus \mc{U}_{low}$.

\Return $\mc{U}_{low}$, $\mc{U}_{high}$.

\end{algorithm}

\subsection{Proof of Correctness}

In this section, we prove the correctness of \Cref{alg:set-cover}.

\begin{claim}\label{clm:deletion-bound}
    For any set $S$ such that the condition of \Cref{ln:dense} holds during the execution of \Cref{alg:sparsification-sets}, with high probability it holds that $\lvert S \cap \hat{\mc{U}} \rvert \geq \alpha$, where $\hat{\mc{U}}$ denotes the universe that \Cref{alg:sparsification-sets} maintains at the time it processes $S$.
\end{claim}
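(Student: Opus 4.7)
The plan is to argue by contraposition: I show that conditional on $|S \cap \hat{\mc{U}}| < \alpha$ at the moment \Cref{alg:sparsification-sets} processes $S$, the test on \Cref{ln:dense} fails with high probability. Fix any outcome of the algorithm's randomness prior to the iteration at which $S$ is processed, and condition on the resulting $\hat{\mc{U}}$; the fresh sample $e_1,\ldots,e_{r_1}$ drawn at this iteration is then independent of the history.

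Let $X = |\{e_1, \ldots, e_{r_1}\} \cap S|$ with $r_1 = |\hat{\mc{U}}|/\alpha$. Then $X$ is a sum of $r_1$ Bernoulli indicators $\mathbf{1}[e_i \in S]$; these are independent if the sampling is with replacement and negatively associated if without replacement, so in either case the multiplicative Chernoff bound of \Cref{pro:NA-chernoff} applies. Under the contrapositive hypothesis $|S \cap \hat{\mc{U}}| < \alpha$, the mean of $X$ satisfies
$$\mu \;:=\; \E[X] \;=\; r_1 \cdot \frac{|S \cap \hat{\mc{U}}|}{|\hat{\mc{U}}|} \;<\; 1.$$

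I then invoke the upper tail of \Cref{pro:NA-chernoff} with deviation $\delta$ chosen so that $(1+\delta)\mu = 10\log n$. A short calculation (using $\mu < 1$) yields
$$\Pr[X \geq 10\log n] \;\leq\; \left(\frac{e^{\delta}}{(1+\delta)^{1+\delta}}\right)^{\mu} \;=\; \frac{e^{10\log n - \mu}}{(10\log n/\mu)^{10\log n}} \;\leq\; \left(\frac{e}{10\log n}\right)^{10\log n},$$
which is $n^{-\omega(1)}$, and in particular smaller than $n^{-c}$ for any constant $c$ once $n$ is large enough.

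Taking a union bound over the at most $|\mc{F}| \leq n$ sets $S$ ever processed then gives the claim. I do not expect any substantive obstacle here: the only mildly delicate point is that $\hat{\mc{U}}$ depends adaptively on prior iterations' randomness, but this is handled by the conditioning above. The crux of the argument is the calibration $r_1 = |\hat{\mc{U}}|/\alpha$, which forces $\mu < 1$ precisely when $|S \cap \hat{\mc{U}}| < \alpha$, after which Chernoff delivers a polynomially-small tail.
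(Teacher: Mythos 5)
Your proposal is correct and follows essentially the same route as the paper: argue the contrapositive for each set (mean of the intersection count is below $1$ when $|S\cap\hat{\mc{U}}|<\alpha$), apply the multiplicative Chernoff bound for negatively associated indicators at the $10\log n$ threshold, and finish with a union bound over the at most $n$ sets. Your explicit conditioning on the history to handle the adaptivity of $\hat{\mc{U}}$ is a nice touch that the paper leaves implicit, but the argument is otherwise the same (the paper parametrizes the deviation as $\lambda=(9\log n)/\E[X]$ rather than calibrating $(1+\delta)\mu=10\log n$, which is immaterial).
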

\begin{proof}
Suppose that $S$ is a set such that   $\lvert S \cap \hat{\mc{U}} \rvert < \alpha$  at the time that \Cref{alg:sparsification-sets} processes this set. Let $X_i$ be the random variable that indicates $e_i \in S$. Thus, we have $\Pr[X_i] \leq |S \cap \hat{\mc{U}} | / |\hat{\mc{U}}|$. Let $X = \sum_{i = 1}^{r_1} X_i$. By linearity of expectation, we have $\E[X] < r_1 \alpha /|\hat{\mc{U}}| = 1$. Also, note that $X_i$'s are negatively associated random variables. Let $\lambda = (9 \log n) / \E[X]$. Therefore, using the Chernoff bound for negatively associated random variables (\Cref{pro:NA-chernoff}) we have
\begin{align*}
    \Pr[X \geq (1 + \lambda) \E[X]] &\leq \left( \frac{e^\lambda}{(1+\lambda)^{1+\lambda}}\right)^{\E[X]}\\
    & \leq \left( \frac{e^\lambda}{\lambda^{\lambda}}\right)^{\E[X]} & (\text{Since } \lambda > 1) \\
    & = \left( \frac{e}{\lambda}\right)^{9 \log n} & (\text{Since } \lambda = (9 \log n) / \E[X])\\
    & \leq \frac{1}{n^{9}} & (\text{Since } \lambda > e^2)
\end{align*}
which implies that with probability of at least $1 - n^{-9}$,
\begin{align*}
    X < (1+\lambda)\E[X] = \E[X] + 9\log n < 10\log n.
\end{align*}
Since we have $n$ sets, using a union bound, with a probability at least $1-n^{-8}$, for any set such that the condition of \Cref{ln:dense} holds, we have $\lvert S \cap \hat{\mc{U}} \rvert \geq \alpha$.
\end{proof}

\begin{claim}\label{clm:bound-on-if-cond}
Let $c$ be the variable used in \Cref{alg:sparsification-sets}. With high probability, we have $c \leq k / \alpha$.
\end{claim}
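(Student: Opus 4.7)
The plan is to derive this directly from \Cref{clm:deletion-bound}: every time $c$ is incremented, at least $\alpha$ elements are removed from $\hat{\mc{U}}$ (with high probability), and since $\hat{\mc{U}}$ starts with at most $k$ elements and only shrinks, the counter can be incremented at most $k/\alpha$ times.

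More concretely, I would first condition on the high-probability event $\mathcal{E}$ from \Cref{clm:deletion-bound}, namely that for every set $S$ processed in \Cref{alg:sparsification-sets} for which the condition on \Cref{ln:dense} holds, we have $|S \cap \hat{\mc{U}}| \geq \alpha$ (where $\hat{\mc{U}}$ is the maintained universe at the moment $S$ is processed). This event holds with probability at least $1 - n^{-8}$ by the union bound already carried out in the proof of \Cref{clm:deletion-bound}.

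Next, under $\mathcal{E}$, I would argue by a potential-style counting argument. Each iteration of the outer loop that increments $c$ is, by definition, an iteration in which the condition on \Cref{ln:dense} is triggered for the current set $S$. In that iteration, the inner loop removes from $\hat{\mc{U}}$ every element contained in $S$, and hence removes exactly $|S \cap \hat{\mc{U}}|$ elements. By event $\mathcal{E}$, this is at least $\alpha$. Let $c^\star$ be the final value of $c$. Summing over all such iterations gives that the total number of elements removed from $\hat{\mc{U}}$ is at least $c^\star \cdot \alpha$. Since $\hat{\mc{U}} \subseteq \mc{U}$ initially and $|\mc{U}| = k$, and the removed elements are distinct (each deletion is from the currently maintained $\hat{\mc{U}}$, which only shrinks), we obtain $c^\star \cdot \alpha \leq k$, i.e.\ $c^\star \leq k/\alpha$.

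There is no real obstacle here beyond bookkeeping: the only thing to be slightly careful about is that the set of elements removed across different triggering iterations are disjoint (because they are each removed from the current $\hat{\mc{U}}$, which excludes previously removed elements), so their counts add up without overcounting. The failure probability of the statement is inherited from $\mathcal{E}$, giving $\Pr[c \leq k/\alpha] \geq 1 - n^{-8}$, which qualifies as high probability in the sense used in \Cref{sec:preliminaries}.
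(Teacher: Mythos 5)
Your proposal is correct and follows essentially the same route as the paper: invoke \Cref{clm:deletion-bound} to conclude that each increment of $c$ removes at least $\alpha$ elements from $\hat{\mc{U}}$, and since $|\mc{U}| = k$ and the removals are disjoint, $c \leq k/\alpha$. Your version just spells out the disjointness bookkeeping and failure probability a bit more explicitly than the paper does.
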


\begin{proof}
    By \Cref{clm:deletion-bound}, for every set $S$ such that the condition on \Cref{ln:dense} of \Cref{alg:sparsification-sets} holds, with high probability we have that $\lvert S \cap \hat{\mc{U}} \rvert \geq \alpha$. Hence, each time the algorithm increases $c$, the size of $\hat{\mc{U}}$ decreases by $\alpha$. Therefore, the total number of times the algorithm increases $c$ is upper-bounded by $k / \alpha$.
\end{proof}

\begin{claim}\label{clm:sparse-set-alg2}
    Let $\hat{\mc{F}}$ and $\hat{\mc{U}}$ be the output of \Cref{alg:sparsification-sets}. Then, each set $S \in \hat{\mc{F}}$ has at most $20 \alpha \log n$ elements in $\hat{\mc{U}}$ with high probability.
\end{claim}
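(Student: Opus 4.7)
The plan is to prove the contrapositive: if at the moment the algorithm processes a set $S$ in \Cref{alg:sparsification-sets} the current universe $\hat{\mc{U}}_S$ satisfies $|S \cap \hat{\mc{U}}_S| \geq 20\alpha \log n$, then with high probability the test on \Cref{ln:dense} is satisfied and $S$ is removed from $\hat{\mc{F}}$. Since $\hat{\mc{U}}$ only shrinks over the course of the algorithm, this immediately gives that every surviving set intersects the final $\hat{\mc{U}}$ in fewer than $20\alpha \log n$ elements.

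First I would dispose of the trivial case: if the loop terminates via the \textbf{break} on \Cref{ln:end-condition} before $S$ is processed, then at that point $|\hat{\mc{U}}| < 10\alpha \log n$ and $\hat{\mc{U}}$ never grows, so $|S \cap \hat{\mc{U}}| \leq |\hat{\mc{U}}| < 20\alpha\log n$ trivially. Otherwise, fix a set $S$ that is actually processed, let $\hat{\mc{U}}_S$ denote the universe at that moment, and assume $|S \cap \hat{\mc{U}}_S| \geq 20\alpha \log n$. Let $X_i$ be the indicator that the $i$-th sampled element $e_i$ lies in $S$, and let $X = \sum_{i=1}^{r_1} X_i$ with $r_1 = |\hat{\mc{U}}_S|/\alpha$. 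Since the $r_1$ elements are sampled uniformly from $\hat{\mc{U}}_S$, we have $\Pr[X_i = 1] = |S \cap \hat{\mc{U}}_S|/|\hat{\mc{U}}_S|$, so
\begin{align*}
    \E[X] \;=\; \frac{|S \cap \hat{\mc{U}}_S|}{\alpha} \;\geq\; 20 \log n.
\end{align*}

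Next I would apply a lower-tail concentration argument. The indicators $X_i$ are negatively associated (either independent if the sample is with replacement, or permutation-based NA if it is drawn without replacement), so the Chernoff bound of \Cref{pro:NA-chernoff} with multiplicative deviation $\delta = 1/2$ gives
\begin{align*}
    \Pr[X < 10 \log n] \;\leq\; \Pr\!\left[X \leq \tfrac{1}{2}\E[X]\right] \;\leq\; \exp\!\left(-\tfrac{1}{8}\E[X]\right) \;\leq\; n^{-20/8},
\end{align*}
which is polynomially small in $n$. Hence with high probability the condition on \Cref{ln:dense} fires for $S$ and $S$ is removed from $\hat{\mc{F}}$. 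A union bound over the at most $n$ processed sets then ensures that, with high probability, every surviving set $S \in \hat{\mc{F}}$ must have had $|S \cap \hat{\mc{U}}_S| < 20 \alpha \log n$ when it was processed; since $\hat{\mc{U}} \subseteq \hat{\mc{U}}_S$ at the end of the algorithm, the same bound holds for $|S \cap \hat{\mc{U}}|$.

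The main (minor) obstacle is simply lining up constants so that the threshold $10 \log n$ on \Cref{ln:dense}, the bound $20 \alpha \log n$ in the statement, and the Chernoff deviation are all consistent. If one needs a failure probability below $n^{-c}$ for arbitrary constant $c \geq 2$, it suffices to scale up the constants $10$ and $20$ accordingly; the structure of the argument is unchanged. The only other subtlety is the quantifier order: one must compare $S$ to the universe at the time it was processed, not at the end, and then use the monotone shrinkage of $\hat{\mc{U}}$ to transfer the bound. Beyond these bookkeeping points, no deeper idea is needed.
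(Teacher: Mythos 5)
Your proposal is correct and follows essentially the same route as the paper's proof: dispose of the early-break case trivially, show $\E[X] \geq 20\log n$ for any processed set with $|S \cap \hat{\mc{U}}| \geq 20\alpha\log n$, apply the lower-tail Chernoff bound for negatively associated indicators to conclude the test on \Cref{ln:dense} fires, and union bound over all sets. The only difference is cosmetic — you use a multiplicative deviation $\delta = 1/2$ while the paper uses an additive deviation of $9\log n$ (yielding a stronger $n^{-9}$ per-set failure probability), and you correctly note the constants can be scaled if a smaller failure probability is needed.
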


\begin{proof}
    First, note that if the algorithm stops because of the condition of \Cref{ln:end-condition} and does not process $S$, it holds that $|\hat{\mc{U}}| < 10 \alpha \log n$ and the claim trivially holds.

    Let $S$ be a set such that at the time that \Cref{alg:sparsification-sets} processes $S$, we have $|S \cap \hat{\mc{U}}| \geq 20 \alpha \log n$. Similar to the proof of \Cref{clm:deletion-bound}, let $X_i$ be the random variable that indicates $e_i \in S$ and $X = \sum_{i=1}^{r_1} X_i$. Hence, $\E[X] \geq 20 r_1 \alpha \log n / \lvert \hat{\mc{U}} \rvert = 20 \log n$. Since $X_i$'s are negatively associated random variables, using Chernoff bound (\Cref{pro:NA-chernoff}) for $\lambda = (9 \log n) / \E[X]$, we have
    \begin{align*}
    \Pr[X \leq (1 - \lambda) \E[X]] &\leq \left( \frac{e^\lambda}{(1+\lambda)^{1+\lambda}}\right)^{\E[X]}\\
    & \leq \left( \frac{e^\lambda}{\lambda^{\lambda}}\right)^{\E[X]} & (\text{Since } \lambda > 1) \\
    & = \left( \frac{e}{\lambda}\right)^{9 \log n} & (\text{Since } \lambda = (9 \log n) / \E[X])\\
    & \leq \frac{1}{n^{9}} & (\text{Since } \lambda > e^2)
\end{align*}
which implies that with a probability of at least $1 - n^{-9}$,
\begin{align*}
    X > (1-\lambda)\E[X] = \E[X] - 9\log n \geq 20\log n - 9 \log n > 10\log n.
\end{align*}
Therefore, the condition on \Cref{ln:dense} of \Cref{alg:sparsification-sets} must hold for all such sets with a probability of at least $1 - n^{-8}$ using union bound.
\end{proof}

\begin{lemma}[Sets Sparsification Guarantee]\label{lem:set-sparsification-gaur}
    Let $\hat{\mc{F}}$ and $\mc{U}_{low}$ be outputs of \Cref{alg:sparsification-sets} and \Cref{alg:sparsification-elements}. Also, let $S \in \hat{\mc{F}}$. Then, with high probability, $S$ contains at most $\widetilde{O}(\alpha)$ elements of $\mc{U}_{low}$.
\end{lemma}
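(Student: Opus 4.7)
The lemma is essentially an immediate corollary of \Cref{clm:sparse-set-alg2} combined with the observation that \Cref{alg:sparsification-elements} only partitions its input universe into $\mc{U}_{low}$ and $\mc{U}_{high}$, without adding new elements. So my plan is a two-step argument.

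First, I would point out that, by inspection of \Cref{alg:sparsification-elements}, the returned sets satisfy $\mc{U}_{low} \cup \mc{U}_{high} = \hat{\mc{U}}$ and in particular $\mc{U}_{low} \subseteq \hat{\mc{U}}$. This holds both in the short-circuit branch (where $\mc{U}_{low} = \hat{\mc{U}}$ outright) and in the main branch (where every element of $\hat{\mc{U}}$ is placed into exactly one of the two output sets depending on how many of the sampled sets $S_1, \dots, S_{r_2}$ contain it). No element outside $\hat{\mc{U}}$ is ever introduced.

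Second, I would invoke \Cref{clm:sparse-set-alg2}, which guarantees that, with high probability, every $S \in \hat{\mc{F}}$ satisfies $|S \cap \hat{\mc{U}}| \leq 20 \alpha \log n$. Intersecting with the containment $\mc{U}_{low} \subseteq \hat{\mc{U}}$ yields
\[
|S \cap \mc{U}_{low}| \;\leq\; |S \cap \hat{\mc{U}}| \;\leq\; 20 \alpha \log n \;=\; \widetilde{O}(\alpha),
\]
which is the desired bound. A union bound over the at most $n$ sets in $\hat{\mc{F}}$ is already built into the statement of \Cref{clm:sparse-set-alg2}, so no additional probabilistic accounting is needed.

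There is no real obstacle here; the only thing to be careful about is to justify the containment $\mc{U}_{low} \subseteq \hat{\mc{U}}$ by reading off \Cref{alg:sparsification-elements}, after which the bound is a direct consequence of the high-probability guarantee already established for $\hat{\mc{U}}$. The proof should therefore be only a few lines long.
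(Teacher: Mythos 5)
Your proof is correct and follows essentially the same route as the paper: observe $\mc{U}_{low} \subseteq \hat{\mc{U}}$ and apply \Cref{clm:sparse-set-alg2} to bound $|S \cap \mc{U}_{low}| \leq |S \cap \hat{\mc{U}}| \leq 20\alpha\log n = \widetilde{O}(\alpha)$. No gaps; your extra care in checking both branches of \Cref{alg:sparsification-elements} is a fine, if unnecessary, elaboration.
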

\begin{proof}
    Note that $\mc{U}_{low} \subseteq \hat{\mc{U}}$ where $\hat{\mc{U}}$ is the output of \Cref{alg:sparsification-sets}. Also, by \Cref{clm:sparse-set-alg2}, we have $| S \cap \hat{\mc{U}}| \leq 20 \alpha \log n$. Thus, with high probability we have $| S \cap \mc{U}_{low}| \leq 20 \alpha \log n = \widetilde{O}(\alpha)$.
\end{proof}

\begin{lemma}[Elements Sparsification Guarantee]\label{lem:element-sparsification}
    Let $\mc{U}_{low}$ be as output by \Cref{alg:sparsification-elements} and let $e \in \mc{U}_{low}$. Then, with high probability, there are at most $\widetilde{O}(\beta)$ sets of $\hat{\mc{F}}$ that contain $e$.
\end{lemma}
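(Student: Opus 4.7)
The proof will mirror that of \Cref{lem:set-sparsification-gaur} (Sets Sparsification Guarantee) with the roles of sets and elements swapped, the driver again being a Chernoff bound for negatively associated Bernoulli variables applied to the random sample drawn in \Cref{alg:sparsification-elements}. I work by contrapositive: if an element $e$ is contained in more than a target threshold $T = \wt{O}(\beta)$ of the sets in $\hat{\mc{F}}$, then with high probability the algorithm places $e$ into $\mc{U}_{high}$ rather than $\mc{U}_{low}$.

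Fix $e \in \hat{\mc{U}}$ and first dispose of the short-circuit branch of \Cref{alg:sparsification-elements} (when $r_2 < 20\log n/\epsilon$, we have $|\hat{\mc{U}}| < 20 \beta \log n / \epsilon$, which constrains the situation enough to be handled separately). In the main branch, let $t_e = |\{S \in \hat{\mc{F}} : e \in S\}|$ and, for each uniformly random set $S_i$ sampled without replacement from $\hat{\mc{F}}$, let $Y_i = \mathbf{1}[e \in S_i]$. The variables $Y_1, \ldots, Y_{r_2}$ are negatively associated with $\E[Y_i] = t_e/|\hat{\mc{F}}|$, so $\E[Y] = r_2 t_e / |\hat{\mc{F}}|$ for $Y = \sum_i Y_i$. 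Choosing $T$ large enough that $t_e \geq T$ forces $\E[Y] \geq 40 \log n / \epsilon$, one then applies \Cref{pro:NA-chernoff} with multiplicative deviation $\lambda = (9 \log n)/\E[Y]$ exactly as in the proofs of \Cref{clm:deletion-bound,clm:sparse-set-alg2} to obtain $\Pr[Y \leq 20 \log n / \epsilon] \leq n^{-9}$. Hence whenever $t_e \geq T$, we have $Y > 20 \log n / \epsilon$ with probability at least $1 - n^{-9}$, so $e$ is placed in $\mc{U}_{high}$, not $\mc{U}_{low}$.

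A union bound over the at most $k \leq n$ elements of $\hat{\mc{U}}$ upgrades this into a simultaneous guarantee with probability at least $1 - n^{-8}$, yielding the lemma. The only step requiring care is unpacking $T$: substituting $r_2 = |\hat{\mc{U}}|/\beta$ into $\E[Y] = r_2 t_e / |\hat{\mc{F}}|$ gives $T = \Theta\bigl(\beta \log n \cdot |\hat{\mc{F}}| / (\epsilon |\hat{\mc{U}}|)\bigr)$, and checking that this equals $\wt{O}(\beta)$ is exactly what the calibration $\beta = 10 \max(k/n^{1-y}, 1) \cdot n \log(n)/k$ on \Cref{ln:def_of_beta} was designed to ensure, using $|\hat{\mc{F}}| \leq n$ and the fact that by \Cref{clm:bound-on-if-cond} the set-sparsification step removes only $\wt{O}(k/\alpha)$ sets. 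I expect this bookkeeping, rather than the Chernoff argument itself, to be the main obstacle of the proof.
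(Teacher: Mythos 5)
Your core argument is the same as the paper's proof of \Cref{lem:element-sparsification}: argue by contrapositive, note that the indicators $Y_i=\mathbf{1}[e\in S_i]$ are negatively associated, apply \Cref{pro:NA-chernoff} with $\lambda=(9\log n)/\E[Y]$ exactly as in \Cref{clm:sparse-set-alg2}, and finish with a union bound over the elements; the paper does this with the explicit threshold $40\beta\log n/\epsilon$ on the number of sets containing $e$.

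The genuine gap is the step you yourself flag as ``the main obstacle'' and then only assert. Your (correct) computation $\E[Y]=r_2t_e/|\hat{\mc{F}}|$ with $r_2=|\hat{\mc{U}}|/\beta$ certifies membership in $\mc{U}_{high}$ only for $t_e\geq T=\Theta(\beta\log n\cdot|\hat{\mc{F}}|/(\epsilon|\hat{\mc{U}}|))$, and $T=\wt{O}(\beta)$ requires $|\hat{\mc{U}}|=\wt{\Omega}(|\hat{\mc{F}}|)$. The facts you cite do not give this: $|\hat{\mc{F}}|\leq n$ and \Cref{clm:bound-on-if-cond} bound how many \emph{sets} are removed, but nothing bounds $|\hat{\mc{U}}|$ from below relative to $|\hat{\mc{F}}|$ --- \Cref{alg:sparsification-sets} deletes all elements of every removed set and only halts once $|\hat{\mc{U}}|<10\alpha\log n$, so $|\hat{\mc{F}}|/|\hat{\mc{U}}|$ can be polynomially large, and $\beta$ (\Cref{ln:def_of_beta}) does not depend on $|\hat{\mc{U}}|$, so its calibration cannot absorb that ratio. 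As written, your argument only yields $t_e=\wt{O}(\beta\,|\hat{\mc{F}}|/|\hat{\mc{U}}|)$ for $e\in\mc{U}_{low}$, not $\wt{O}(\beta)$; similarly, the short-circuit branch is not actually ``disposed of,'' since a small $|\hat{\mc{U}}|$ places no bound on how many sets contain $e$. To be fair, the paper's own proof elides the same point: it lower-bounds $\E[X_i]\geq 40\beta\log n/(\epsilon n)$ and then writes $\E[X]\geq r_2\cdot 40\beta\log n/(\epsilon n)=40\log n/\epsilon$, an equality that amounts to taking $r_2=n/\beta$, i.e.\ $|\hat{\mc{U}}|=n$, rather than the algorithm's $r_2=|\hat{\mc{U}}|/\beta$. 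So your diagnosis of where the difficulty sits is accurate, but the step is not closed in your proposal; closing it needs either a lower bound on $|\hat{\mc{U}}|$ in terms of $|\hat{\mc{F}}|$ or a sample size of the form $r_2=\wt{\Theta}(|\hat{\mc{F}}|/\beta)$ instead of $|\hat{\mc{U}}|/\beta$.
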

\begin{proof}
    Let $e$ be an element of $\hat{\mc{U}}$, where $\hat{\mc{U}}$ is the output of \Cref{alg:sparsification-sets}. We show that if at least $40 \beta \log n / \epsilon$ sets in $\hat{\mc{F}}$ contain $e$, then $e \in \mc{U}_{high}$ in the output of \Cref{alg:sparsification-elements} with high probability.

    Let $X_i$ be an indicator variable for $S_i$ containing $e$. Thus, we have $\E[X_i] \geq 40 \beta \log n / (\epsilon n)$. Define $X = \sum_{i=1}^{r_2} X_i$. Hence, $\E[X] \geq r_2 \cdot 40 \beta \log n / (\epsilon n) = 40 \log n / \epsilon$. Since $X_i$'s are negatively associated random variables, using Chernoff bound (\Cref{pro:NA-chernoff}) for $\lambda = (9 \log n) / \E[X]$, we have
    \begin{align*}
    \Pr[X \leq (1 - \lambda) \E[X]] &\leq \frac{1}{n^{9}} & (\text{Similar to the proof of \Cref{clm:sparse-set-alg2}})
\end{align*}
which implies that $X > 20 \log n$ with a probability of at most $1 - n^{-9}$, since
\begin{align*}
    X > (1-\lambda)\E[X] = \E[X] - 9\log n \geq 40\log n / \epsilon - 9 \log n > 20\log n / \epsilon.
\end{align*}
Using a union bound for all elements in $\hat{\mc{U}}$ that are in at least $40\beta \log n / \epsilon$ sets of $\hat{\mc{F}}$, with high probability all of them are going to be included in $\mc{U}_{high}$. As a result, for each $e \in \mc{U}_{low}$, $e$ is in at most $\widetilde{O}(\beta)$ sets of $\hat{\mc{F}}$.
\end{proof}

\begin{claim}\label{clm:covering-high}
    Let $\mc{F}'$  be a random collection of $\epsilon k / 5$ sets of $\hat{\mc{F}}$. Then, with high probability, every element in $\mc{U}_{high}$ is in one of the sets of $\mc{F}'$.
\end{claim}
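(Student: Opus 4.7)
The plan is to first establish a lower bound on how many sets in $\hat{\mc{F}}$ contain each element $e \in \mc{U}_{high}$, and then apply a straightforward coupon-collector style argument to show that a uniform sample of $\epsilon k / 5$ sets from $\hat{\mc{F}}$ covers all of $\mc{U}_{high}$ with high probability. Before anything else, I note that if $r_2 < 20 \log n / \epsilon$, then \Cref{alg:sparsification-elements} returns $\mc{U}_{high} = \emptyset$ and the claim is vacuous, so I may assume $r_2 \geq 20 \log n / \epsilon$, equivalently $|\hat{\mc{U}}| \geq 20 \beta \log n / \epsilon$.

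For the lower bound, define $d(e) := |\{S \in \hat{\mc{F}} : e \in S\}|$ and let $X_e$ denote the number of sampled sets $S_1, \ldots, S_{r_2}$ from \Cref{alg:sparsification-elements} that contain $e$. I will show that with high probability every $e \in \mc{U}_{high}$ satisfies $d(e) \geq c \cdot \beta \log n |\hat{\mc{F}}| / (\epsilon |\hat{\mc{U}}|)$ for a suitable constant $c > 0$. This is a mirror image of the Chernoff calculation in the proof of \Cref{lem:element-sparsification}: if $d(e)$ is below the claimed threshold, then $\E[X_e] = r_2 d(e)/|\hat{\mc{F}}| \leq 10 \log n / \epsilon$, and a multiplicative Chernoff upper-tail bound gives $X_e < 20 \log n / \epsilon$ with probability $\geq 1 - n^{-9}$, which places $e$ in $\mc{U}_{low}$ rather than $\mc{U}_{high}$. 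Taking a union bound over the at most $n$ candidate elements completes this step.

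For the covering argument, fix $e \in \mc{U}_{high}$. Since $\mc{F}'$ consists of $\epsilon k / 5$ uniformly random sets from $\hat{\mc{F}}$, the probability that none of them contains $e$ is at most $(1 - d(e)/|\hat{\mc{F}}|)^{\epsilon k / 5} \leq \exp(-\epsilon k \, d(e) /(5 |\hat{\mc{F}}|))$. Plugging in the lower bound on $d(e)$ and using $|\hat{\mc{U}}| \leq k$, this becomes at most $\exp(-c \beta \log n / 5)$. From the definition $\beta = 10 \max(k/n^{1-y}, 1) \cdot n \log n / k$ on \Cref{ln:def_of_beta}, one has $\beta \geq 10 n \log n / k \geq 10 \log n$ (since $k \leq n$), so the exponent is $\Omega(\log^2 n)$ and the per-element failure probability is superpolynomially small. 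A final union bound over $|\mc{U}_{high}| \leq k \leq n$ elements concludes the proof.

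The only nontrivial step is the reverse Chernoff bound for $d(e)$, which essentially repeats the computation of \Cref{lem:element-sparsification} with the inequalities flipped; once this is in place, the covering bound is a short calculation driven by the definition of $\beta$.
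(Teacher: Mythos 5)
Your proof is correct and follows essentially the same route as the paper: first a reverse Chernoff argument (the contrapositive of the $\mc{U}_{low}$/$\mc{U}_{high}$ test, mirroring \Cref{lem:element-sparsification}) to lower-bound the number of sets of $\hat{\mc{F}}$ containing each element of $\mc{U}_{high}$, then a coverage bound for the random sample $\mc{F}'$ driven by $\beta \ge n\log n/k$, finished with a union bound. The only differences are cosmetic — you bound the miss probability directly by $(1-d(e)/|\hat{\mc{F}}|)^{\epsilon k/5}$ rather than via the expected number of covering sets, keep the exact $|\hat{\mc{F}}|/|\hat{\mc{U}}|$ normalization, and explicitly dispose of the vacuous case $r_2 < 20\log n/\epsilon$ — so this matches the paper's proof in substance.
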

\begin{proof}
    Let $e \in \hat{\mc{U}}$ be such that at most $15 \beta \log n / \epsilon$ sets of $\hat{\mc{F}}$ contain $e$. Similar to the proof of \Cref{lem:element-sparsification}, let $X$ be a random variable that denotes the number of sets $S_i$ (for $i = 1,2,...,r_2$) that contain $e$. Using a Chernoff bound, we can prove that with a probability of at least $1-n^{-2}$, for all such elements $e$, we have $X < 20\log n / \epsilon$. Therefore, all elements in $\mc{U}_{high}$ are in at least $15 \beta \log n / \epsilon$ sets of $\hat{\mc{F}}$.

    Consequently, when $|\mc{F}'| \geq \epsilon k/5$, the expected number of sets in $\mc{F}'$ that cover element $e \in \mc{U}_{high}$ is at least
    \[
    \frac{15 \beta \log n}{\epsilon} \cdot \frac{1}{n} \cdot \frac{\epsilon k}{5} \ge \frac{15 n \log n}{\epsilon k} \cdot \frac{1}{n} \cdot \frac{\epsilon k}{5} = 3 \log n
    \]
    where we used that $\beta \ge n/k$ (see \Cref{ln:def_of_beta} in \Cref{alg:set-cover}). Using Chernoff bounds again, we expect all these elements to be covered by at least one set with high probability, which concludes the proof.
\end{proof}

\begin{claim}\label{clm:bound-on-mu}
    Let $\SC(\mc{U}_{low}, \hat{\mc{F}})$ be the optimal set cover size for the universe of elements $\mc{U}_{low}$ and the collection of sets $\hat{\mc{F}}$ which are the outputs of \Cref{alg:sparsification-elements} and \Cref{alg:sparsification-sets}, respectively. Let $\wt\mu$ be the output of \Cref{lem:rgmm} (i.e., a size estimate of a random greedy maximal matching of $H$) with $\epsilon k/2$ additive error. Then, it holds that 
    \begin{align*}
        \frac{1}{2}\left(|\mc{U}_{low}|- \SC(\mc{U}_{low}, \hat{\mc{F}}) \right) - \frac{\epsilon k}{2} \leq \wt\mu \leq |\mc{U}_{low}| - \SC(\mc{U}_{low}, \hat{\mc{F}}).
    \end{align*}
\end{claim}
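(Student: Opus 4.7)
The plan is to sandwich $\mu_{RG} := \E_\pi |\RGMM(H,\pi)|$ between $\tfrac{1}{2}(|\mc{U}_{low}| - \SC(\mc{U}_{low}, \hat{\mc{F}}))$ and $|\mc{U}_{low}| - \SC(\mc{U}_{low}, \hat{\mc{F}})$, and then apply the estimation guarantee of \Cref{lem:rgmm}, which gives $\mu_{RG} - \epsilon k/2 \leq \wt\mu \leq \mu_{RG}$. Combining these yields exactly the two bounds in the claim.

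For the upper bound $\mu_{RG} \leq |\mc{U}_{low}| - \SC(\mc{U}_{low}, \hat{\mc{F}})$, I will argue that any matching $M$ in the multigraph $H$ yields a feasible cover of $(\mc{U}_{low}, \hat{\mc{F}})$ of size at most $|\mc{U}_{low}| - |M|$. For each edge $(e,e') \in M$, by \Cref{def:graph-h} there is a witness set $S \in \hat{\mc{F}}$ containing both endpoints; pick one such set. For each of the $|\mc{U}_{low}| - 2|M|$ unmatched elements $e$, I pick any set of $\hat{\mc{F}}$ containing $e$; this exists because of the WLOG convention stated at the beginning of \Cref{sec:set-cover} that $\{e\} \in \mc{F}$ for every $e \in \mc{U}$, combined with the observation that a singleton $\{e\}$ can never meet the threshold on \Cref{ln:dense} of \Cref{alg:sparsification-sets} (its intersection with the random sample has size at most $1 < 10 \log n$), so singletons are never deleted in Step~1 and hence $\{e\} \in \hat{\mc{F}}$ whenever $e \in \hat{\mc{U}} \supseteq \mc{U}_{low}$. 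The resulting cover has size $|M| + (|\mc{U}_{low}|-2|M|) = |\mc{U}_{low}|-|M|$, so $|M| \leq |\mc{U}_{low}| - \SC(\mc{U}_{low}, \hat{\mc{F}})$; taking expectation over $\pi$ gives the desired bound.

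For the lower bound $\mu_{RG} \geq \tfrac{1}{2}(|\mc{U}_{low}| - \SC(\mc{U}_{low}, \hat{\mc{F}}))$, I use only the maximality of $\RGMM$. Fix any $\pi$, let $M = \RGMM(H,\pi)$, and let $U \subseteq \mc{U}_{low}$ be the set of elements not matched by $M$; then $|U| = |\mc{U}_{low}| - 2|M|$. By maximality of $M$, no edge of $H$ has both endpoints in $U$, which by \Cref{def:graph-h} is equivalent to saying that no single set of $\hat{\mc{F}}$ contains two elements of $U$. Any feasible cover of $\mc{U}_{low}$ must in particular cover $U$, and therefore must use at least $|U|$ distinct sets, giving $\SC(\mc{U}_{low}, \hat{\mc{F}}) \geq |U|$ and hence $|M| \geq \tfrac{1}{2}(|\mc{U}_{low}| - \SC(\mc{U}_{low}, \hat{\mc{F}}))$ for every $\pi$. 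Taking expectation over $\pi$ completes the lower bound.

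The only moderately delicate ingredient is the persistence of singleton sets in $\hat{\mc{F}}$ after sparsification, which is immediate from the $\geq 10\log n$ threshold in \Cref{ln:dense}. Everything else is the standard matching--set cover duality, and it transfers verbatim from simple graphs to multigraphs since the arguments only count endpoints and set co-occurrences rather than any property specific to simple graphs.
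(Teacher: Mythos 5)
Your proof is correct and follows essentially the same route as the paper: bound any maximal matching $M$ of $H$ by $\frac{1}{2}\left(|\mc{U}_{low}|-\SC(\mc{U}_{low},\hat{\mc{F}})\right) \leq |M| \leq |\mc{U}_{low}|-\SC(\mc{U}_{low},\hat{\mc{F}})$ via the matching--set-cover duality (witness sets for matched pairs plus singletons, and maximality forbidding a set from covering two unmatched elements), then apply the guarantee of \Cref{lem:rgmm}. Your extra observation that singleton sets survive \Cref{alg:sparsification-sets} because they can never meet the $10\log n$ threshold is a nice explicit justification of a step the paper leaves implicit, but it does not change the argument.
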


\begin{proof}
Let $M$ be any maximal matching of $H$. First, we show that \[ \frac{1}{2}\left(|\mc{U}_{low}|- \SC(\mc{U}_{low}, \hat{\mc{F}})\right)  \leq |M| \leq |\mc{U}_{low}| - \SC(\mc{U}_{low}, \hat{\mc{F}}) . \] If for each edge in $M$, we take the corresponding set, and for the rest of the elements we take a set that only covers that element, we will have covered all elements of $\mc{U}_{low}$ with at most $|\mc{U}_{low}| - |M|$ sets, which implies that $|M| \leq |\mc{U}_{low}| - \SC(\mc{U}_{low}, \hat{\mc{F}})$. On the other hand, no set can cover two elements that are unmatched by $M$ at the same time. Thus, we have $|\mc{U}_{low}| - 2|M| \leq \SC(\mc{U}_{low}, \hat{\mc{F}})$.

Since the output of $\Cref{lem:rgmm}$ can be smaller than the minimum-sized maximal matching by an additive factor of at most $\epsilon k / 2$, we get the claim.
\end{proof}

\begin{claim}\label{clm:bound-low-cover-to-all}
    Conditioning on the high-probability events of \Cref{clm:bound-on-if-cond} and \Cref{clm:covering-high}, it holds that $\SC(\mc{U}, \mc{F}) \leq \SC(\mc{U}_{low}, \hat{\mc{F}}) + \epsilon k/2$.
\end{claim}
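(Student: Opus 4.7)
The plan is to build an explicit set cover of $(\mc{U}, \mc{F})$ whose size is at most $\SC(\mc{U}_{low}, \hat{\mc{F}}) + \epsilon k/2$ by stitching together three subfamilies of $\mc{F}$, one for each piece of the partition
\[ \mc{U} = (\mc{U} \setminus \hat{\mc{U}}) \,\sqcup\, \mc{U}_{high} \,\sqcup\, \mc{U}_{low}, \]
where $\hat{\mc{U}}$ is the universe returned by \Cref{alg:sparsification-sets} and $\mc{U}_{low}, \mc{U}_{high}$ are the two parts returned by \Cref{alg:sparsification-elements}.

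First I would handle $\mc{U} \setminus \hat{\mc{U}}$. An element $e$ is dropped from $\hat{\mc{U}}$ during \Cref{alg:sparsification-sets} precisely when some set $S$ containing $e$ satisfies the condition on \Cref{ln:dense} and is simultaneously removed from $\hat{\mc{F}}$ while the counter $c$ is incremented. Collecting all such removed sets yields a subfamily $\mc{F}_{\mathrm{sp}} \subseteq \mc{F}$ of cardinality equal to the final value of $c$ that covers $\mc{U} \setminus \hat{\mc{U}}$ by construction. By \Cref{clm:bound-on-if-cond}, the conditioning gives $|\mc{F}_{\mathrm{sp}}| \le k/\alpha$.

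Next, I would apply \Cref{clm:covering-high} directly: under the stated conditioning, a fixed random subfamily $\mc{F}' \subseteq \hat{\mc{F}}$ of size $\epsilon k / 5$ covers every element of $\mc{U}_{high}$. Finally, by definition of $\SC(\mc{U}_{low}, \hat{\mc{F}})$, there is a subfamily of $\hat{\mc{F}}$ of that size covering $\mc{U}_{low}$. Concatenating the three subfamilies gives a valid set cover of $\mc{U}$ drawn from $\mc{F}$ of total size at most
\[ \SC(\mc{U}_{low}, \hat{\mc{F}}) + \frac{k}{\alpha} + \frac{\epsilon k}{5}. \]

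To close the argument it suffices to check that $k/\alpha + \epsilon k / 5 \le \epsilon k / 2$, i.e., $\alpha \ge 10/(3\epsilon)$. Since $\alpha = n^x$ for the fixed $x>0$ optimized in the algorithm and $\epsilon$ is a fixed constant, this inequality holds for all $n$ larger than the absolute constant $(10/(3\epsilon))^{1/x}$, which we can assume without loss of generality (smaller instances are handled by brute force in $O(1)$ queries). The only real care needed is the bookkeeping step that identifies the removed-set family $\mc{F}_{\mathrm{sp}}$ with the elements dropped from $\hat{\mc{U}}$, and a brief justification that the three covering subfamilies are drawn from $\mc{F}$ (the $\mc{F}_{\mathrm{sp}}$ part is by construction, while the other two lie in $\hat{\mc{F}} \subseteq \mc{F}$); everything else is a routine summation.
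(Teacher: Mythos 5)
Your proposal is correct and follows essentially the same route as the paper: the same three-part decomposition of $\mc{U}$ into the sparsification-removed elements (covered by the $c \le k/\alpha$ deleted sets via \Cref{clm:bound-on-if-cond}), $\mc{U}_{high}$ (covered by $\epsilon k/5$ sets via \Cref{clm:covering-high}), and $\mc{U}_{low}$ (covered by an optimal cover of size $\SC(\mc{U}_{low}, \hat{\mc{F}})$). Your explicit check that $k/\alpha + \epsilon k/5 \le \epsilon k/2$ for large enough $n$ is just a slightly more careful rendering of the paper's remark that $c = o(k)$ since $\alpha = \omega(1)$.
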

\begin{proof}
    All elements of $\mc{U} \setminus (\mc{U}_{low} \cup \mc{U}_{high})$ can be covered using the $c$ sets that are deleted in  \Cref{alg:sparsification-sets}. Moreover, by \Cref{clm:bound-on-if-cond}, we have $c < o(k)$ since $\alpha > \omega(1)$. On the other hand, the elements of $\mc{U}_{high}$ can be covered using $\epsilon k / 5$ sets, as shown in \Cref{clm:covering-high}. Finally, the elements of $\mc{U}_{low}$ can be covered using $\SC(\mc{U}_{low}, \hat{\mc{F}})$ sets. Therefore, the union of these three collections covers all elements, and the size of this solution is at most $\SC(\mc{U}_{low}, \hat{\mc{F}}) + \epsilon k / 2$, which completes the proof.
\end{proof}

\begin{lemma}\label{lem:approx-gaurantee}
    Conditioning on the high-probability events of \Cref{clm:bound-on-if-cond} and \Cref{clm:covering-high}, it holds that $\chi/2 - \epsilon k \leq \wt\chi \leq \chi$.
    (Recall that $\chi = k - \SC(\mc{U}, \mc{F})$ and $\wt\chi$ is the output of \Cref{alg:set-cover}, defined on its \Cref{ln:def_of_wtchi}.)
\end{lemma}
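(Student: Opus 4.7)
My plan is to combine \cref{clm:bound-on-mu}, \cref{clm:bound-low-cover-to-all}, and the definition $\wt\chi = \wt\mu + |\mc{U}\setminus\mc{U}_{low}| - \epsilon k/2$ from \cref{ln:def_of_wtchi}. To simplify notation I will write $\ell := |\mc{U}_{low}|$, $h := |\mc{U}\setminus\mc{U}_{low}| = k - \ell$, and $s := \SC(\mc{U}_{low},\hat{\mc{F}})$; the high-probability events of \cref{clm:bound-on-if-cond} and \cref{clm:covering-high} are assumed throughout, so that \cref{clm:bound-low-cover-to-all} applies. Beyond these two claims, the only extra ingredient I will need is a short monotonicity observation bounding $s$ by $\SC(\mc{U},\mc{F})$; it will be used only for the lower bound.

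For the upper bound $\wt\chi \leq \chi$, I would chain the upper part of \cref{clm:bound-on-mu} ($\wt\mu \leq \ell - s$) with the definition of $\wt\chi$ to get $\wt\chi \leq (\ell - s) + h - \epsilon k/2 = k - s - \epsilon k/2$, and then invoke \cref{clm:bound-low-cover-to-all} in the form $-s \leq \epsilon k/2 - \SC(\mc{U},\mc{F})$ to conclude $\wt\chi \leq k - \SC(\mc{U},\mc{F}) = \chi$.

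For the lower bound $\chi/2 - \epsilon k \leq \wt\chi$, the lower part of \cref{clm:bound-on-mu} ($\wt\mu \geq \tfrac{1}{2}(\ell - s) - \epsilon k/2$) combined with the definition of $\wt\chi$ gives $\wt\chi \geq \tfrac{1}{2}(\ell - s) + h - \epsilon k$, so it suffices to show $\chi/2 \leq \tfrac{1}{2}(\ell - s) + h$. Multiplying by $2$ and using $\ell + h = k$, this reduces to $s \leq \SC(\mc{U},\mc{F}) + h$, for which I would prove the stronger inequality $s \leq \SC(\mc{U},\mc{F})$ directly: given any optimal cover $OPT \subseteq \mc{F}$ of $\mc{U}$, I claim $OPT \cap \hat{\mc{F}}$ already covers $\mc{U}_{low}$. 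The reason is that \cref{alg:sparsification-sets} deletes a set $S$ from $\hat{\mc{F}}$ \emph{together with} every element of $S$ currently in $\hat{\mc{U}}$, so any element of $\mc{U}_{low} \subseteq \hat{\mc{U}}$ lies in no set of $\mc{F}\setminus\hat{\mc{F}}$; hence the set from $OPT$ covering it must lie in $\hat{\mc{F}}$, giving $s \leq |OPT \cap \hat{\mc{F}}| \leq \SC(\mc{U},\mc{F})$. The whole proof is essentially algebraic rearrangement of the two preceding Claims; the only real obstacle is identifying and justifying this monotonicity fact, which is not stated as a separate lemma earlier but follows immediately from the coupled set-and-elements deletion rule of \cref{alg:sparsification-sets}.
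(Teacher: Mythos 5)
Your proposal is correct and follows essentially the same route as the paper: both bounds are obtained by combining \cref{clm:bound-on-mu} and \cref{clm:bound-low-cover-to-all} with the definition of $\wt\chi$, and your monotonicity fact $\SC(\mc{U}_{low},\hat{\mc{F}})\le \SC(\mc{U},\mc{F})$ is exactly the step the paper uses in its lower-bound chain (written there as $\SC(\mc{U}_{low},\hat{\mc{F}})=\SC(\mc{U}_{low},\mc{F})\le\SC(\mc{U},\mc{F})$). Your explicit justification of that step via the coupled set-and-element deletion rule of \cref{alg:sparsification-sets} is a valid (and slightly more detailed) account of what the paper leaves implicit.
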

\begin{proof}
    We have that 
    \begin{align*}
        \wt \chi &= \wt \mu + | \mc{U} \setminus \mc{U}_{low}| - \frac{\epsilon k}{2}  & (\text{Output of \Cref{alg:set-cover}})\\
        & \leq |\mc{U}_{low}| - \SC(\mc{U}_{low}, \hat{\mc{F}}) + | \mc{U} \setminus \mc{U}_{low}| - \frac{\epsilon k}{2}  & (\text{By \Cref{clm:bound-on-mu}}) \\
        & = |\mc{U}| - \SC(\mc{U}_{low}, \hat{\mc{F}}) - \frac{\epsilon k}{2} \\
        & \leq |\mc{U}| - \SC(\mc{U}, \mc{F}) & (\text{By \Cref{clm:bound-low-cover-to-all}})\\
        & = \chi .
    \end{align*}
On the other hand,
\begin{align*}
    2(\wt \chi + \epsilon k)&= 2\left(\wt \mu + | \mc{U} \setminus \mc{U}_{low}|  + \frac{\epsilon k}{2} \right)\\
    & \geq  2\left( \frac{|\mc{U}_{low}|}{2} - \frac{\SC(\mc{U}_{low}, \hat{\mc{F}})}{2} +  | \mc{U} \setminus \mc{U}_{low}|\right) & (\text{By \Cref{clm:bound-on-mu}})\\
    & \geq |\mc{U}| - \SC(\mc{U}_{low}, \hat{\mc{F}})\\
    & = |\mc{U}| - \SC(\mc{U}_{low}, \mc{F})\\
    & \geq |\mc{U}| - \SC(\mc{U}, \mc{F})\\
    & = \chi ,
\end{align*}
which concludes the proof.
\end{proof}

\subsection{Time Complexity}

\begin{claim}\label{clm:alg2-time}
\Cref{alg:sparsification-sets} runs in $O(nk/\alpha)$ time with high probability.
\end{claim}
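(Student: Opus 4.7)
The plan is to decompose the running time of \Cref{alg:sparsification-sets} into two separate contributions and bound each of them: (i) the per-iteration sampling and querying cost that every processed set incurs, and (ii) the occasional expensive element-removal scan that is triggered only when the condition on \Cref{ln:dense} fires. Throughout, I rely on the fact that the algorithm breaks out of the outer loop as soon as $|\hat{\mc{U}}| < 10\alpha \log n$, so while the algorithm is still running we always have $|\hat{\mc{U}}| \le k$.

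First, I would observe that for each set $S \in \mc{F}$ processed by the outer loop, the work outside the inner element-removal loop is dominated by drawing the $r_1 = |\hat{\mc{U}}|/\alpha$ random elements and making the corresponding $r_1$ membership queries between $S$ and those samples. Since $r_1 \le k/\alpha$ in every iteration, and the outer loop runs at most $|\mc{F}| = n$ times, this contributes a total of $O(nk/\alpha)$ time.

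Second, I would analyze the inner loop over $\hat{\mc{U}}$ that scans every current element and removes those belonging to $S$. A single execution of this loop costs $O(|\hat{\mc{U}}|) = O(k)$ time, since it performs one membership query per element in $\hat{\mc{U}}$. Crucially, this loop is invoked only when the condition on \Cref{ln:dense} is satisfied, which is precisely when the counter $c$ is incremented. By \Cref{clm:bound-on-if-cond}, with high probability $c \le k/\alpha$, so the total cost of all invocations of the inner loop is $O((k/\alpha) \cdot k) = O(k^2/\alpha)$, which is at most $O(nk/\alpha)$ using the standing assumption $n \ge k$ from the start of \Cref{sec:set-cover}.

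Summing the two contributions yields the claimed $O(nk/\alpha)$ total running time, holding with high probability via the high-probability event of \Cref{clm:bound-on-if-cond}. There is no serious obstacle here: the only subtlety is that the inner removal loop naively looks linear in $|\hat{\mc{U}}|$ per invocation, but this cost is tamed by observing that such invocations coincide exactly with increments of $c$, which is bounded by \Cref{clm:bound-on-if-cond}.
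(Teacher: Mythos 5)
Your proof is correct and follows essentially the same route as the paper's: bound the sampling/query cost by $r_1 \le k/\alpha$ over at most $n$ iterations, and bound the element-removal scans by invoking \Cref{clm:bound-on-if-cond} to get at most $k/\alpha$ invocations of cost $O(k)$ each, then combine using $k \le n$. No gaps; the argument matches the paper's proof in both decomposition and the key lemma used.
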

\begin{proof}
    First note that \Cref{alg:sparsification-sets} samples at most $k/\alpha$ elements for each set in $\mc{F}$ (\Cref{ln:sample-size-alg2}). So if we ignore the block of if-condition in \Cref{ln:dense}, the runtime is at most $O(nk/\alpha)$. Further, by \Cref{clm:bound-on-if-cond}, the condition on \Cref{ln:dense} holds at most $k/\alpha$ times with high probability. Since each time the algorithm enters the if-condition it spends $O(k)$ time to iterate over all elements, the total running time is at most $O(k^2/\alpha + nk/\alpha) = O(nk/\alpha)$.
\end{proof}

\begin{claim}\label{clm:alg3-time}
    \Cref{alg:sparsification-elements} runs in $O(k^2/\beta)$ time.
\end{claim}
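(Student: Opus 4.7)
The plan is a straightforward line-by-line accounting of the cost of \Cref{alg:sparsification-elements}; there is no real obstacle, since the algorithm is non-adaptive and its dominant cost is a single double loop. First I would observe that computing $r_2$, checking the early-return condition, and (if needed) sampling $r_2$ random sets from $\hat{\mc{F}}$ all take $O(r_2)$ time, which is at most $O(|\hat{\mc{U}}|/\beta) \leq O(k/\beta)$ since $|\hat{\mc{U}}| \leq |\mc{U}| = k$. In the early-return branch ($r_2 < 20\log n/\epsilon$), even returning $\mc{U}_{low} = \hat{\mc{U}}$ and $\mc{U}_{high} = \emptyset$ costs only $O(k)$, which is dominated by $O(k^2/\beta)$ as long as $\beta \leq k$ (and one can verify from the setting of $\beta$ on \Cref{ln:def_of_beta} of \Cref{alg:set-cover} that this holds in the regime of interest).

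The dominant cost is the block that makes all membership queries between the elements in $\hat{\mc{U}}$ and the sampled sets $S_1,\dots,S_{r_2}$. This is exactly $|\hat{\mc{U}}| \cdot r_2 = |\hat{\mc{U}}|^2/\beta \leq k^2/\beta$ queries, and each query takes $O(1)$ time in our oracle model. The subsequent step of identifying $\mc{U}_{low}$ (and thereby $\mc{U}_{high}$) can be folded into this same double loop by maintaining, for each element $e \in \hat{\mc{U}}$, a counter of how many of the sampled sets contain $e$; this adds only $O(1)$ work per query plus $O(k)$ work to scan the counters at the end, all absorbed in $O(k^2/\beta)$.

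Putting these pieces together yields a total running time of $O(k^2/\beta)$, as claimed. The argument is deterministic and holds unconditionally (no high-probability event is needed), which matches the statement of the claim.
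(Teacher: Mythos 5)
Your proof is correct and follows essentially the same approach as the paper: the dominant cost is the $|\hat{\mc{U}}| \cdot r_2 \leq k \cdot (k/\beta)$ membership queries between the sampled sets and all elements, and the remaining bookkeeping is absorbed into this bound. The paper's own proof is just a terser version of the same accounting.
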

\begin{proof}
    The algorithm samples $r_2 = O(k/\beta)$ sets and for each of them makes a membership query between the set and all elements. Hence, the total running time is upper-bounded by $O(k^2/\beta)$.
\end{proof}

We will prove the following lemma about the complexity of RGMM in the next section. However, to complete the time analysis, we state the lemma here.

\begin{restatable}{lemma}{rgmmlemma}\label{lem:rgmm}
    Let $H$ be the multigraph defined in \Cref{def:graph-h}. There exists an algorithm with an expected running time of $\wt{O}_\epsilon(k\beta + \alpha \beta n)$ that estimates the value of $E_{\pi} |\RGMM(H, \pi)|$ with $\epsilon k$ additive error with high probability.
\end{restatable}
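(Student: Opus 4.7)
The plan is to invoke, as a black box, the multigraph extension of Behnezhad's local random-greedy-maximal-matching estimator developed in \Cref{sec:rgmm}, and to implement its oracle access to the implicit multigraph $H$ on the fly using membership queries to $(\mc{U},\mc{F})$. I will rely on two black-box facts about that extension: (i) for target additive error $\epsilon k$ with high probability, the estimator samples $s = \wt O_\epsilon(1)$ root vertices of $\mc{U}_{low}$ and, from each root, visits at most $\wt O(\bar d)$ vertices of $H$, where $\bar d$ is the (multigraph) average degree; and (ii) for every fixed $v \in \mc{U}_{low}$, the expected number of visits $N_v$ to $v$ during one run of the estimator satisfies $\E[N_v] = \wt O(s\,\deg_H(v)/k)$, i.e.\ visits are distributed in proportion to the multigraph degree.

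At each visited vertex $v$ (with $e_v$ the corresponding element), I implement the oracle by two primitives. Primitive (a) lists $\mc{S}_v = \{S \in \hat{\mc{F}} : e_v \in S\}$ by making one membership query for each of the at most $n$ sets, at cost $O(n)$. Primitive (b) samples a uniformly random multi-edge of $H$ incident to $v$ by rejection sampling: draw a pair $(S, u)$ uniformly from $\mc{S}_v \times (\mc{U}_{low} \setminus \{v\})$, query whether $u \in S$, and accept at the first success. Because the incident multi-edges are in bijection with triples $(v, u, S)$ where $S \in \mc{S}_v$ and $u \in (S \cap \mc{U}_{low}) \setminus \{v\}$, and each such triple is picked with probability $1/(|\mc{S}_v|(k-1))$, the distribution conditioned on acceptance is uniform over the incident multi-edges, and the expected number of trials is $|\mc{S}_v|(k-1)/\deg_H(v) = \wt O(\beta k / \deg_H(v))$, where $|\mc{S}_v| \le \wt O(\beta)$ by \Cref{lem:element-sparsification}.

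Using $\bar d \le \wt O(\alpha\beta)$ (by \Cref{lem:set-sparsification-gaur} and \Cref{lem:element-sparsification} every vertex of $H$ has degree at most $\wt O(\alpha)\cdot \wt O(\beta)$), the total cost of primitive (a), summed over visits, is
\[
\sum_{v \in \mc{U}_{low}} \E[N_v]\cdot O(n) \;=\; O(n)\cdot \wt O(s\,\bar d) \;=\; \wt O_\epsilon(\alpha\beta n),
\]
while the total expected cost of primitive (b) (invoked a polylogarithmic number of times per visit, absorbed in $\wt O$), with the degree factors cancelling thanks to fact (ii), is
\[
\sum_{v:\,\deg_H(v)>0} \E[N_v]\cdot \wt O\!\left(\frac{\beta k}{\deg_H(v)}\right) \;=\; \sum_{v} \wt O(s\beta) \;=\; \wt O_\epsilon(k\beta).
\]
Adding the two bounds gives the expected runtime $\wt O_\epsilon(k\beta + \alpha\beta n)$; the $\epsilon k$ additive error and the high-probability guarantee are inherited from the black-box estimator.

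The principal obstacle is justifying facts (i) and (ii) in the multigraph setting. Parallel edges threaten both the exploration-size bound (the round-complexity bounds for random greedy MIS that underlie Behnezhad's estimator are classically proved for simple graphs) and the Yoshida--Yamamoto--Ito-style double-counting argument behind the ``visits $\propto$ degree'' property. Handling these requires extending the analyses of~\cite{Behnezhad21, YoshidaYISTOC09, FischerN18} from simple graphs to multigraphs, which is the content of \Cref{sec:rgmm}; once that groundwork is in place, the cost decomposition above is essentially mechanical.
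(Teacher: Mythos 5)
Your proposal is correct and follows essentially the same route as the paper's own proof: list the sets containing the visited element at cost $O(n)$, sample a uniformly random incident multi-edge of $H$ via random membership queries on pairs at expected cost $\wt{O}(\beta k/\deg_H(v))$, and then charge these costs using the ``visits proportional to degree'' property (the paper's \Cref{lem:outquery} and \Cref{cor:reduction-cost-neighbor}) together with $\bar d \le \wt O(\alpha\beta)$ from the sparsification lemmas, exactly as in your cost decomposition. The two black-box facts you defer are precisely the multigraph extensions the paper establishes in \Cref{sec:rgmm}, so your argument matches the paper's structure.
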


\begin{lemma}\label{lem:total-time}
    The total running time of the algorithm is $\wt{O}(nk/\alpha + k^2/\beta + n\alpha\beta)$ with high probability.
\end{lemma}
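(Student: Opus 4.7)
The plan is to simply combine the three time bounds we have already established for the three phases of \Cref{alg:set-cover} and observe that together they yield the claimed bound. Concretely, \Cref{alg:set-cover} invokes \Cref{alg:sparsification-sets}, then \Cref{alg:sparsification-elements}, and then the RGMM-estimation subroutine of \Cref{sec:rgmm} on the implicit auxiliary multigraph $H$ of \Cref{def:graph-h}, after which it performs only $O(1)$ additional arithmetic work to compute $\wt\chi$ from $\wt\mu$ and $|\mc{U}\setminus\mc{U}_{low}|$. Hence it suffices to add the three runtimes.

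First, by \Cref{clm:alg2-time}, the set-sparsification step runs in $O(nk/\alpha)$ time with high probability. Second, by \Cref{clm:alg3-time}, the element-sparsification step runs in $O(k^2/\beta)$ time deterministically. Third, by \Cref{lem:rgmm}, the RGMM-size estimation on $H$ has expected running time $\wt{O}_\epsilon(k\beta + \alpha\beta n)$ and achieves additive error at most $\epsilon k$ with high probability; a standard Markov-plus-repetition argument (or, equivalently, truncating at a $\polylog$-factor above the expectation and repeating $O(\log n)$ times) converts this expected-time guarantee into a high-probability runtime bound of $\wt{O}(k\beta + \alpha\beta n)$, while still preserving the additive-error guarantee of \Cref{clm:bound-on-mu}.

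Adding the three bounds gives a total running time of
\[
\wt{O}\!\left(\frac{nk}{\alpha} + \frac{k^2}{\beta} + k\beta + \alpha\beta n\right)
\]
with high probability. The only non-matching summand is $k\beta$, but since $\alpha = n^x \ge 1$ we have $k\beta \le n\beta \le \alpha\beta n$, so this term is absorbed into $\alpha\beta n$. The resulting bound is exactly $\wt{O}(nk/\alpha + k^2/\beta + n\alpha\beta)$, as claimed.

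The only mildly subtle point is the conversion of the expected-time guarantee in \Cref{lem:rgmm} to a high-probability bound; this is routine but should be stated explicitly so that the ``with high probability'' qualifier in the lemma conclusion is justified in the same probabilistic sense as the high-probability guarantees used in \Cref{clm:alg2-time}, \Cref{clm:bound-on-if-cond}, \Cref{clm:covering-high}, and \Cref{lem:approx-gaurantee}. No new estimates are required beyond summing the three phase costs and dropping the dominated $k\beta$ term.
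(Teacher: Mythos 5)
Your proposal is correct and follows essentially the same route as the paper: sum the bounds of \Cref{clm:alg2-time}, \Cref{clm:alg3-time}, and \Cref{lem:rgmm}, absorb the $k\beta$ term using $k\le n$ and $\alpha\ge 1$, and boost the expected-time bound of the RGMM step to a high-probability bound by running $\Theta(\log n)$ independent copies (the paper phrases this as running them in parallel and taking the first to finish, noting that the approximation guarantee holds with high probability for each copy). No gaps.
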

\begin{proof}
    The algorithm runs in expected time $\wt{O}(nk/\alpha + n\alpha\beta)$ as shown by \Cref{clm:alg2-time}, \Cref{clm:alg3-time}, \Cref{lem:rgmm}, and the fact that $k \leq n$. To ensure a high-probability bound on the running time, we execute $\Theta(\log n)$ instances of the algorithm in parallel and use the estimate from the first instance that finishes. Since the expected running time is $\wt{O}(kn/\alpha + k^2/\beta + n\alpha\beta)$, the first instance is likely to terminate within $\wt{O}(nk/\alpha + k^2/\beta + n\alpha\beta)$ time with probability $1 - 1/\poly(n)$.
    (We remark that since our approximation guarantees hold with high probability, they also hold with high probability for each of the instances.)
\end{proof}

\subsection{Putting Everything Together}

We combine our ideas to obtain our final theorem for estimating $\chi = k - \SC(\mc{U}, \mc{F})$. Then, we extend our analysis and make slight modifications to the algorithm to estimate $k - \SC(\mc{U}, \mc{F}_{\neq 2})$ which is crucial for designing a sublinear algorithm for the Steiner tree problem.

\maintheorem*
\begin{proof}
    First, if $k \le O(n^{2/3})$, then we can easily query between all elements and sets and compute the estimate in $\wt{O}(n^{5/3})$ time since all the steps of the algorithm such as finding the maximal matching can be run in linear time with respect to the size of the input. Now suppose that $k > n^{2/3}$. We use \Cref{alg:set-cover} to produce the estimate $\wt\chi$, setting $x = 1/3$ and $y=1/3$. By \Cref{lem:approx-gaurantee}, we have that $\wt\chi$ is a multiplicative-additive $(1/2, \epsilon k)$-approximation to the value of $\chi$.

    Moreover, we have $\alpha = n^{1/3}$ and $\beta = 10 \max(k/n^{1-y}, 1) \cdot n \log(n) / k = \wt{O}(n^{1/3})$. By \Cref{lem:total-time}, the total runtime of the algorithm is upper-bounded by $\wt{O}(kn/\alpha + k^2/\beta + n\alpha \beta) = \wt{O}(n^{5/3})$ because of the assumption that $k \leq n$.
\end{proof}

Now we show how we can estimate $k - \SC(\mc{U}, \mc{F}_{\neq 2})$ with a slight modification. The only change that is needed in our algorithm is to remove edges of $H$ that are produced by sets of size 2 in $\hat{\mc{F}}$. Hence, when the algorithm estimates $\E_\pi[\RGMM(H, \pi)]$ using the vertex oracle, the oracles must not use those edges in the exploration as they do not exist in graph $H$. Thus, in the implementation of \Cref{lem:rgmm}, when the algorithm queries pairs $(u, S)$ where $u \in \mc{U}_{low}$ and $S \in \mc{F}_v$ to find neighbors of $v$ in $H$, if $u \in S$, the algorithm starts querying between all elements of $\mc{U} \setminus {v}$ and $S$ until it finds another element in $S$. If the algorithm finds another element of $S$, then it accepts the edge $(v, u)$ as a valid edge, and otherwise it continues the random search. So each time the algorithm finds such pair $(u, S)$, it invokes the above procedure to validate the edge. In the next theorem, we demonstrate how to bound the running time of the modified algorithm and discuss its approximation ratio.

\begin{theorem}\label{thm:setcover-without-two}
    There exists an algorithm that outputs a multiplicative-additive $(1/2, \epsilon k)$-approximation of value of $\chi = k - \SC(\mc{U}, \mc{F}_{\neq 2})$ in $\wt{O}(|\mc{F}|^{5/3})$ time with high probability.
\end{theorem}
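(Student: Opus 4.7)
The plan is to adapt \Cref{alg:set-cover} with one surgical modification to the construction of the auxiliary multigraph $H$, and to reuse the entire analysis pipeline of \Cref{thm:setcover-general} with $\hat{\mc{F}}$ replaced by $\hat{\mc{F}}_{\neq 2} \coloneqq \hat{\mc{F}} \cap \mc{F}_{\neq 2}$. The new technical content is a running-time argument for the edge-validation routine that implements the modified $H$ implicitly.

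First, I would redefine $H$ so that it contains an edge for each pair of distinct elements coming from a set $S \in \hat{\mc{F}}$ with $|S| \neq 2$. Since $H$ is accessed only via the neighbor-sampling subroutine inside the RGMM estimator of \Cref{lem:rgmm}, it suffices to modify that subroutine: after its usual random probing finds a candidate pair $(u, S)$ with $v, u \in S$, insert a validation step that probes random elements of $\mc{U} \setminus \{v\}$ against $S$ until either a third element of $S$ is discovered (the candidate is accepted) or $\mc{U} \setminus \{v\}$ is exhausted (the candidate corresponds to a size-$2$ set and is rejected, after which the outer search resumes). Since every candidate pair in the unmodified subroutine is sampled uniformly and the validation filter is deterministic, the resulting sampler produces uniformly random neighbors of $v$ in the new $H$, so the estimator from \Cref{lem:rgmm} remains valid.

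Next, I would repeat the approximation analysis with $\hat{\mc{F}}_{\neq 2}$ in place of $\hat{\mc{F}}$. \Cref{clm:bound-on-mu} goes through unchanged, since its proof only uses that matching edges correspond to pairs of elements coverable by a single set of the collection defining $H$, together with the fact that singleton sets (which live in $\mc{F}_{\neq 2}$) cover each unmatched element individually. \Cref{clm:bound-low-cover-to-all} also extends: the $c$ deleted sets can be swapped for singletons at a $\wt{O}(\alpha)$-factor blow-up, still $o(k)$ in size, and any size-$2$ set in the $\epsilon k/5$-random-cover of $\mc{U}_{high}$ can be replaced by two singletons, with the overall overhead absorbed into the $\epsilon k/2$ additive budget. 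This yields $\chi/2 - \epsilon k \le \wt\chi \le \chi$ with $\chi = k - \SC(\mc{U}, \mc{F}_{\neq 2})$.

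The main obstacle will be bounding the expected overhead of the validation step. For a single neighbor-finding attempt at vertex $v$, validation runs only when the sampled pair $(u,S) \in (\mc{U} \setminus \{v\}) \times \mc{F}_v$ satisfies $u \in S$, which conditional on $S$ has probability $(|S|-1)/(k-1)$. For $S$ with $|S| \ge 3$ the expected validation time is $O(k/(|S|-2))$, contributing $O(1/|\mc{F}_v|)$ to the per-attempt cost; for $S$ with $|S|=2$ the cost is $O(k)$ but the triggering probability is only $O(1/(|\mc{F}_v| \cdot k))$, also contributing $O(1/|\mc{F}_v|)$. Summing over $S \in \mc{F}_v$ gives $O(1)$ expected validation cost per attempt, so by linearity the total expected cost inflates by only a constant factor over \Cref{lem:total-time}, remaining $\wt{O}(nk/\alpha + k^2/\beta + n\alpha\beta) = \wt{O}(n^{5/3})$ for $\alpha = \beta = \wt\Theta(n^{1/3})$ as in \Cref{thm:setcover-general}. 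The delicate part is this amortization: showing that the very expensive size-$2$ validations are rare enough that they do not dominate the running time.
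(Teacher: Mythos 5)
Your overall plan coincides with the paper's proof: the same edge-validation subroutine (probe random elements of $\mc{U} \setminus \{v\}$ against the candidate set $S$ until a third element certifies $|S| \neq 2$, otherwise reject and resume), the same reuse of \Cref{clm:bound-on-mu}, and a runtime amortization in the same spirit. On the running time, your per-probe accounting (expected $O(1)$ validation overhead per probed pair, summed over all probes) and the paper's per-neighbor-request accounting (extra $\wt{O}(rk/\deg_H(v))$ per request, combined with \Cref{lem:outquery} to get a $\wt{O}(k\beta)$ total) both lead to the same conclusion that the validation cost is dominated by the existing terms of \Cref{lem:total-time}; this part of your argument is sound, modulo the minor point that the probes in \Cref{lem:rgmm} range over $\mc{U}_{low} \times \hat{\mc{F}}_v$ rather than $(\mc{U}\setminus\{v\}) \times \mc{F}_v$, which does not affect the final bound.

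The genuine gap is in your extension of \Cref{clm:bound-low-cover-to-all}: you propose to cover the elements deleted by \Cref{alg:sparsification-sets} by swapping the $c$ deleted sets for singletons ``at a $\wt{O}(\alpha)$-factor blow-up, still $o(k)$ in size.'' This fails for two reasons. First, the $\wt{O}(\alpha)$ bound of \Cref{clm:sparse-set-alg2} applies only to sets that \emph{survive} in $\hat{\mc{F}}$; the deleted sets are precisely the dense ones, and a single deleted set can cover up to $k$ elements (e.g., if some $S$ contains all of $\mc{U}$, it is deleted with $c=1$ and your replacement needs $k$ singletons, destroying the bound $\SC(\mc{U}, \mc{F}_{\neq 2}) \leq \SC(\mc{U}_{low}, \hat{\mc{F}}_{\neq 2}) + \epsilon k/2$). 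Second, even granting $\wt{O}(\alpha)$ elements per deleted set, the count would be $c \cdot \wt{O}(\alpha) \leq (k/\alpha) \cdot \wt{O}(\alpha) = \wt{O}(k)$, which is not $o(k)$ and cannot be absorbed into the $\epsilon k/2$ additive budget. The correct and simpler step, which is what the paper does, is to keep the deleted sets themselves: the condition on \Cref{ln:dense} forces every deleted set to contain at least $10 \log n > 2$ elements (and in fact at least $\alpha$ with high probability, by \Cref{clm:deletion-bound}), so each deleted set already lies in $\mc{F}_{\neq 2}$ and the deleted elements are covered by these $c \leq k/\alpha = o(k)$ sets directly. With that repair, the rest of your argument goes through and matches the paper's proof.
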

\begin{proof}
    We need to show that the new estimation is a multiplicative-additive $(1/2, \epsilon k)$-approximation of $k - \SC(\mc{U}, \mc{F}_{\neq 2})$. We follow the same approach as proof of \Cref{lem:approx-gaurantee}. To follow the same steps, we need analogous claims similar to \Cref{clm:bound-on-mu} and \Cref{clm:bound-low-cover-to-all}. The exact same proof of \Cref{clm:bound-on-mu} also works to provide a bound on $\SC(\mc{U}_{low}, \hat{\mc{F}}_{\neq 2})$. More specifically, we have
    \begin{align*}
        \frac{1}{2}\left(|\mc{U}_{low}|- \SC(\mc{U}_{low}, \hat{\mc{F}}_{\neq 2}) \right) - \frac{\epsilon k}{2} \leq \wt\mu \leq |\mc{U}_{low}| - \SC(\mc{U}_{low}, \hat{\mc{F}}_{\neq 2}).
    \end{align*}
    Further, to show that $\SC(\mc{U}, \mc{F}_{\neq 2}) \leq \SC(\mc{U}_{low}, \hat{\mc{F}}_{\neq 2}) + \epsilon k/2$, note that we have $c = o(k)$ (see \Cref{alg:sparsification-sets} for definition of $c$) which implies that $\mc{U} \setminus (\mc{U}_{low} \cup \mc{U}_{high})$ can be covered using $o(k)$ sets of size larger than 2 (all sets that are removed in \Cref{alg:sparsification-sets} have size larger than 2). Additionally, elements of $\mc{U}_{high}$ can be covered using $\epsilon k /5$ sets of $\hat{\mc{F}}$ according to \Cref{clm:covering-high}. Thus, we can cover elements of $\mc{U}_{high}$ using $2\epsilon k /5$ sets of $\hat{\mc{F}}_{\neq 2}$ since we can replace each set of size 2 with two sets of size 1 that cover a single element. Finally, elements of $\mc{U}_{low}$ can be covered using $\SC(\mc{U}_{low}, \hat{\mc{F}}_{\neq 2})$ sets. The union of all these sets is a valid solution for $\SC(\mc{U}, \hat{\mc{F}}_{\neq 2})$ which has a size of at most $\SC(\mc{U}_{low}, \hat{\mc{F}}_{\neq 2}) + \epsilon k /2$. Therefore, the error of our estimation is $(1/2, \epsilon k)$.

    Now it remains to bound the running time of the modified algorithm. We prove that the same bound on the running time as in \Cref{lem:rgmm} holds. Now consider a vertex $v$ that is corresponding to an element in $\mc{U}_{low}$. Let $\hat{\mc{F}}_v$ be the collection of sets that include $v$. Also, let $\hat{\mc{F}}'_v = \{ S | S \in \hat{\mc{F}}_v \text{ and } |S| = 2 \}$. Let $r = |\hat{\mc{F}}_v|$ and $\hat{\mc{F}}_v = \{S_1, \ldots, S_r\}$. Also, let $\tau_i = |S_i \cap (\mc{U}_{low} \setminus v)|$. Hence, in \Cref{lem:rgmm}, when the algorithm queries a pair $(u, S)$ and it returns $u \in S$, the probability that $S = S_i$ is $\tau_i / \sum_{i=1}^r \tau_i$. For this set, using a Chernoff bound, the algorithm needs to spend at most $\wt{O}(k / \tau_i)$ time to see another element of $S_i$ or explore all elements of $\mc{U}$. Therefore, the expected additional time the algorithm needs to spend compared to \Cref{lem:rgmm} is
    \begin{align*}
       \sum_i^r \left(\frac{\tau_i}{ \sum_{i=1}^r \tau_i} \right) \cdot \wt{O}\left( \frac{k}{\tau_i}\right) &= \wt{O}\left(\sum_i^r \frac{k}{\sum_{i=1}^r \tau_i}\right) \leq \wt{O}\left(\frac{rk}{\deg_H(v)}\right),
    \end{align*}
    where the last inequality follows by the fact that $\deg_H(v) \leq \sum_{i=1}^r \tau_i$ since $\deg_H(v) = (\sum_{i=1}^r \tau_i) - |\{S | S \in \hat{\mc{F}}_v \text{ and } |S|=2\}|$ after the modification of the graph $H$. On the other hand, by \Cref{lem:outquery}, the expected number of times that the oracle calls an adjacent edge of vertex $v$, is at most $\wt{O}(\deg_H(v) / k)$ (note that we use $k$ here and $n$ in \Cref{sec:rgmm} are equivalent since in the next section we used $n$ as the number of vertices of a given graph). Also, these two variables have a negative correlation. Therefore, the total cost for all vertices is at most $\wt{O}(rk)$. Combining with the fact that $r = \wt{O}(\beta)$ (\Cref{lem:element-sparsification}), the total additional cost is $\wt{O}(k\beta)$, which is dominated by other terms in \Cref{lem:total-time} which implies that the algorithm has the same running time as \Cref{thm:setcover-general} if we choose $\alpha$ and $\beta$ similarly.
\end{proof}

\section{Random Greedy Maximal Matching on Multigraphs}\label{sec:rgmm}

In this section, we show that the algorithm of \cite{Behnezhad21} can be extended to work efficiently on multigraphs. First, we prove that if we have access to the adjacency list of a multigraph $H$, then it is possible to estimate the size of a random greedy maximal matching of $H$ in $\wt{O}(\bar{d})$ time, where $\bar{d}$ is the average degree of $H$. We denote $H = (V_H, E_H)$. We slightly abuse notation by using $n$ to denote $|V_H|$ (note that $n$ here is equivalent to $k$ in the set cover section). We also let $\pi$ represent the permutation over the edges of $H$ that we use to select edges of the RGMM.

Similar to the work of \cite{Behnezhad21}, we define a vertex oracle \VO{}, which, given a vertex $v \in V_H$ and a permutation $\pi$ over the edges of $H$, determines whether $v$ is matched in $\RGMM(H, \pi)$. This oracle explores the neighborhood around $v$ locally to answer the query and does not need to examine the entire graph. Additionally, to implement $\VO(v, \pi)$, we define an edge oracle \EO{}, which, given an edge $e \in E_H$ and a permutation $\pi$ over the edges of $E_H$, determines whether $e$ is matched in $\RGMM(H, \pi)$. Similar local oracles for random greedy maximal matching, random greedy maximal independent sets, and their modified versions have been extensively used in the literature \cite{Behnezhad21, BehnezhadRRS-SODA23, TSP-icalp24, NguyenOnakFOCS08, OnakSODA12, YoshidaYISTOC09}.

Both oracles are formally defined in \Cref{alg:vertexoracle} and \Cref{alg:edgeoracle}.

\begin{algorithm}[H]
\caption{Vertex oracle $\VO(v, \pi)$ to determine if vertex $v$ is matched in $\RGMM(H, \pi)$.}
\label{alg:vertexoracle}

	Let $e_1 = (v, u_1), \ldots, e_d = (v, u_d)$ be the edges incident to $v$ (including all copies of multiedges) with $\pi(e_1) < \ldots < \pi(e_d)$.
		
	\For{$i$ in $1 \ldots d$}{
		\If{$\EO(e_i, u_i, \pi) = \true$}{
                \Return \true
            }
	}
	\Return \false
\end{algorithm}

\begin{algorithm}[H]
\caption{Edge oracle $\EO(e, v, \pi)$ to determine an edge $e$ is in $\RGMM(H, \pi)$ where $v$ is an endpoint of $e$.}
\label{alg:edgeoracle}

    \lIf{$\EO(e, v, \pi)$ computed before}{\Return the computed result} \label{ln:cach}
    
	Let $e_1 = (v, u_1), \ldots, e_d = (v, u_d)$ be the edges incident to $e$ (including all copies of multiedges) such that $\pi(e_1) < \ldots < \pi(e_d) < \pi(e)$.
	
	\For{$i$ in $1 \ldots d$}{
		\If{$\EO(e_i, u_i, \pi) =  \true$}{\Return \false}
	}
	
	\Return \true
\end{algorithm}

It is not hard to see that the outputs of both the vertex oracle and the edge oracle are consistent with the RGMM, as the status of an edge (whether it is in the RGMM or not) depends only on the status of edges with a smaller rank in $\pi$. Both oracles utilize this property by querying the neighboring edges with smaller ranks in the permutation to determine if they are matched. Based on the results for these lower-ranked edges, the oracle can then decide if the vertex or edge will be matched.

\begin{observation}
    $\VO(v, \pi) = \true$ if and only if $v$ is matched in $\RGMM(H, \pi)$.
\end{observation}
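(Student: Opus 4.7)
The plan is to derive the observation from a sharper claim about the edge oracle: for every edge $e$ and every endpoint $v$ of $e$, $\EO(e, v, \pi) = \true$ if and only if $e \in \RGMM(H, \pi)$. Granting this, the observation is immediate, because $\VO(v, \pi)$ walks through all edges incident to $v$ and returns \true exactly when some incident edge is reported in the RGMM by \EO; since by definition $v$ is matched in $\RGMM(H, \pi)$ iff some edge incident to $v$ lies in $\RGMM(H, \pi)$, the two conditions coincide.

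I would establish the auxiliary claim by strong induction on the rank $\pi(e)$. In the base case, $e$ has no edge adjacent to it of strictly smaller rank, so the enumeration in \EO is empty and the oracle returns \true; simultaneously, when the greedy scan of RGMM first encounters $e$, both endpoints are still unmatched and $e$ is placed into the matching. For the inductive step, I would use the defining property of RGMM: $e \in \RGMM(H, \pi)$ iff no edge $e'$ adjacent to $e$ with $\pi(e') < \pi(e)$ is in $\RGMM(H, \pi)$. The enumeration in \EO walks through exactly these lower-ranked adjacent edges and recursively queries each; by the induction hypothesis every recursive call returns the correct status, so \EO returns \false iff some lower-ranked adjacent edge is in the RGMM and \true otherwise, matching the RGMM criterion on the nose.

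The main subtlety, relative to the simple-graph argument of \cite{Behnezhad21}, is the multigraph convention: $\pi$ is a permutation over the multiset $E_H$, so each parallel copy of a multi-edge carries its own rank and must be enumerated as a distinct element of the adjacency list used in \EO; each copy is then independently eligible for the matching (though of course at most one copy between a given pair can ultimately be selected). Once this bookkeeping is fixed the induction is verbatim the one in the simple-graph case. The memoization on \Cref{ln:cach} is orthogonal to correctness and enters only in the complexity analysis. I do not anticipate a substantive obstacle; the observation is essentially a structural consistency check that the local oracles faithfully simulate the global greedy procedure, and the induction formalizes this correspondence.
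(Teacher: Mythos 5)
Your proof is correct and follows the same reasoning the paper relies on: the paper simply remarks that the oracles are consistent with RGMM because an edge's status depends only on lower-ranked adjacent edges, and your strong induction on $\pi(e)$ (for the edge oracle) plus the trivial reduction for the vertex oracle is precisely the formalization of that remark, with the multigraph bookkeeping handled as intended.
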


\begin{observation}
    $\EO(e, v, \pi) = \true$ if and only if $e \in \RGMM(H, \pi)$.
\end{observation}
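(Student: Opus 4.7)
I would prove the observation by strong induction on the rank $\pi(e)$ of the edge $e$. This mirrors the classical soundness analysis of local oracles for random greedy maximal matching (cf.\ \cite{YoshidaYISTOC09, Behnezhad21}), and the extension to multigraphs is automatic because $\pi$ gives each parallel copy of an edge its own distinct rank.

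For the base case, let $e$ be the edge of smallest rank in $\pi$. No edge adjacent to $e$ has rank smaller than $\pi(e)$, so the \textbf{for} loop in $\EO(e, v, \pi)$ performs no iterations and the oracle returns \true. On the other side, RGMM processes edges in the order of $\pi$, so when it reaches $e$ first, both endpoints of $e$ are unmatched and $e$ is added to the matching; hence $e \in \RGMM(H, \pi)$ and the equivalence holds.

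For the inductive step, assume the observation for every edge of rank strictly less than $\pi(e)$. By the definition of RGMM, $e \in \RGMM(H,\pi)$ iff at the time $e$ is processed neither endpoint of $e$ is yet matched, which in turn is iff no adjacent edge $e_i$ (counting each parallel copy in the multigraph) with $\pi(e_i) < \pi(e)$ lies in $\RGMM(H,\pi)$. The inductive hypothesis applied to each such $e_i$ turns the latter condition into $\EO(e_i, u_i, \pi) = \false$ for every such $e_i$, which is exactly the condition under which $\EO(e, v, \pi)$ returns \true. Well-foundedness is immediate because every recursive call is on an edge of strictly smaller rank, so the recursion terminates; the memoization on \Cref{ln:cach} is needed only for efficiency, not correctness.

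I do not anticipate any real obstacle: the argument is a routine soundness proof that tracks the recursive structure of the algorithm one level at a time. The only mild point of care is that in the multigraph $H$ distinct parallel copies of an edge receive distinct ranks in $\pi$, so each copy is enumerated and recursed upon separately by the edge oracle, matching the fact that the RGMM definition treats each copy as a separate edge. Correctness of the companion vertex oracle then follows as a one-line corollary, since $v$ is matched in $\RGMM(H,\pi)$ iff at least one edge incident to $v$ is in $\RGMM(H,\pi)$, which is exactly what $\VO(v,\pi)$ checks via its calls to $\EO$.
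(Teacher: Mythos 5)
Your induction on the rank $\pi(e)$ is correct and is exactly the formalization of the paper's own (one-sentence) justification, which observes that the status of an edge in $\RGMM(H,\pi)$ depends only on adjacent edges of smaller rank and that $\EO$ recurses on precisely those edges. No gap; the multigraph point you flag (each parallel copy having its own rank) is handled identically in the paper.
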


We define $T(v, \pi)$ as the number of recursive calls made by $\VO(v, \pi)$ for a vertex $v \in V_H$ and a permutation $\pi$ over $E_H$. The main result of this section is the following bound on $T(v, \pi)$.

\begin{theorem}\label{thm:multigraph-rgmm-bound-vertex}
    Let $\bar{d}$ be the average degree of $H$. It holds that $\E_{v, \pi}[T(v, \pi)] = O(\bar{d} \log n)$.
\end{theorem}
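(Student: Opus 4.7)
The plan is to convert the question into a statement about a local greedy-MIS style oracle on the line graph $L(H)$, and then adapt the analysis of \cite{Behnezhad21} to the multigraph setting via a double-counting argument.

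First I would reduce to an averaged edge statement. For an edge $e$ and endpoint $w \in e$, let $Q(e, w, \pi)$ denote the number of recursive calls made by $\EO(e, w, \pi)$ (counting itself). By inspection of \Cref{alg:vertexoracle,alg:edgeoracle}, the calls initiated by $\VO(v, \pi)$ are exactly of the form $\EO(e, u, \pi)$ with $e = (v, u) \ni v$, so
\begin{align*}
T(v, \pi) \;\le\; \sum_{e = (v,u)\,\ni\,v} Q(e, u, \pi).
\end{align*}
Summing over $v \in V_H$, every directed edge $(e, w)$ with $w \in e$ appears exactly once, which gives
\begin{align*}
\E_{v,\pi}\!\left[T(v, \pi)\right] \;\le\; \frac{1}{n}\sum_{(e,w)\,:\,w \in e}\E_\pi\!\left[Q(e, w, \pi)\right] \;=\; \bar{d} \cdot \E_{(e,w),\pi}\!\left[Q(e, w, \pi)\right],
\end{align*}
where $(e, w)$ is a uniformly random directed edge and we used $2|E_H| = n\bar{d}$. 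Thus it suffices to prove $\E_{(e,w),\pi}[Q(e, w, \pi)] = O(\log n)$.

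Next, I would interpret the $\EO$ recursion as a local MIS query on $L(H)$ and bound its depth. Viewed as a tree, the recursion of $\EO(e, w, \pi)$ visits only nodes of $L(H)$ reachable from $e$ along strictly decreasing-rank paths in $L(H)$; in particular, its depth is at most the round complexity $\rho(L(H), \pi)$ of the parallel greedy MIS on $L(H)$. Since $L(H)$ is a simple graph on $|E_H| \le n^2$ vertices regardless of whether $H$ is a multigraph, the theorem of Fischer and Noever~\cite{FischerN18} gives $\E_\pi[\rho(L(H), \pi)] = O(\log |E_H|) = O(\log n)$. This is the right analogue of the simple-graph round bound, because the parallel copies of an $H$-edge form a clique in $L(H)$ that the parallel MIS resolves in a single round.

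Finally, to pass from this depth bound to a bound on the \emph{average} work $\E_{(e,w),\pi}[Q(e, w, \pi)]$, I would adapt Behnezhad's charging argument~\cite{Behnezhad21}: each recursive call in $Q(e, w, \pi)$ is charged to a decreasing-rank chain in $L(H)$, and averaging over the uniformly random starting directed edge makes the per-level branching factor cancel with the normalization, leaving only the depth $O(\log n)$. The main obstacle will be carrying out this charging on multigraphs so that the resulting bound is in terms of $\bar{d}(H)$ rather than $\bar{d}(L(H))$: a black-box application of the simple-graph MIS result to $L(H)$ would pay $\bar{d}(L(H))$, which can be much larger than $\bar{d}(H)$ in the presence of high-degree vertices or many parallel copies. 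I expect that the alternating-endpoint structure of $\EO$ together with the clique structure of parallel copies in $L(H)$ lets the double count go through correctly and yields the claimed $O(\bar{d} \log n)$ bound.
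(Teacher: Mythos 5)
There is a genuine gap. Your opening reduction is fine and is essentially the paper's own derivation of the theorem: the paper also obtains the vertex bound by double counting, writing $\sum_v T(v,\pi)$ as a sum over edges of per-edge query counts and invoking a per-edge bound of $O(\log n)$ (\Cref{lem:multigraph-rgmm-bound-edge}). But that per-edge bound is the entire technical content of the theorem, and your proposal does not prove it. Two things are missing. First, even your depth claim is not immediate: along a query path the ranks decrease strictly, but the parallel-greedy-MIS round numbers on $L(H)$ need not drop at every step; the paper shows they drop only every other step ($\rho(e_i) > \rho(e_{i+2})$), via a three-case analysis one case of which deals specifically with parallel copies of an edge, before \cite{FischerN18} can be applied (this is \Cref{lem:small-long-perms}). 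Your assertion ``depth is at most $\rho(L(H),\pi)$'' is off by this argument (and a factor of two), and the multiedge case is exactly where a simple-graph intuition can break.

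Second, and more importantly, a depth bound does not control $\E_{(e,w),\pi}[Q(e,w,\pi)]$: the $\EO$ recursion branches over all lower-rank incident edges, so the number of recursive calls is the number of decreasing-rank query paths, which can be far larger than the depth. The step you describe as ``averaging over the random starting edge makes the per-level branching factor cancel, leaving only the depth'' is precisely the heart of the proof in \cite{Behnezhad21} and of this paper's adaptation: one defines the cyclic-shift map $f(\pi,\vec{P})$ on permutations (\Cref{def:function-f}), builds the bipartite permutation graph $H_{\Pi}^{\vec{e}}$, and proves that two query paths of ``good'' permutations mapped to the same permutation must be nested, which bounds the right-side degrees by $c\log n$ (\Cref{lem:low-deg-likley}); the nesting proof again requires multigraph-specific case analysis (\Cref{clm:ei1-in-rgmm} explicitly handles a blocking parallel edge between the same endpoints). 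Your proposal defers all of this (``I would adapt \ldots'', ``I expect the double count goes through''), so the argument as written establishes only the easy outer reduction plus a heuristic, not the claimed $O(\bar d\log n)$ bound.
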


Let $Q(e, \pi)$ be the number of oracle calls to $\EO(e, \cdot, \pi)$ during the execution of $\VO$ for all vertices and permutation $\pi$. The following lemma is the main building block of the proof of \Cref{thm:multigraph-rgmm-bound-vertex}.

\begin{lemma}\label{lem:multigraph-rgmm-bound-edge}
    Let $e \in E_H$. It holds that $\E_{\pi}[Q(e, \pi)] = O(\log n)$.
\end{lemma}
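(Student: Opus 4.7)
The plan is to extend the analysis of \cite{Behnezhad21} from simple graphs to multigraphs by passing to the line graph $L(H)$. Note that $L(H)$ is a \emph{simple} graph on $N := |E(H)| \leq n^2$ vertices, and RGMM on $H$ with edge-permutation $\pi$ is exactly random greedy MIS on $L(H)$ with $\pi$ viewed as a vertex-permutation. This lets us leverage tools developed for simple-graph MIS even though $H$ itself is a multigraph.

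First, I would classify the sources of calls to $\EO(e, \cdot, \pi)$. By direct inspection of \Cref{alg:vertexoracle} and \Cref{alg:edgeoracle}, every such call originates either from some $\VO(v', \pi)$ at one of the two endpoints $v'$ of $e$ (contributing at most $2$ calls), or as a recursive call $\EO(e', v', \pi)$ where $e'$ shares an endpoint $v'$ with $e$ and satisfies $\pi(e') > \pi(e)$. Moreover, this latter recursive call is issued only if the iteration inside $\EO(e', v', \pi)$ actually reaches $e$, equivalently, only if the shared endpoint $v'$ remains unmatched in the greedy at time $\pi(e)$. This yields a clean decomposition of $Q(e, \pi)$ as a sum of indicators over pairs $(e', v')$, one per endpoint-side of $e$.

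Second, I would charge each triggering indicator from Step~1 to the parallel random greedy MIS on $L(H)$. The key observation is that the number of higher-rank neighbors $e'$ of $e$ (at a fixed shared endpoint $v'$) whose $\EO$-cascade eventually re-triggers $\EO(e, \cdot, \pi)$ is controlled by the round in which the corresponding local cluster around $e$ in $L(H)$ is finalized by the parallel random greedy MIS. Invoking the round-complexity theorem of Fischer and Noever \cite{FischerN18} — that the expected round in which any fixed vertex of a simple graph on $N$ vertices is finalized by parallel random greedy MIS is $O(\log N)$ — and applying it to $L(H)$ bounds the contribution per endpoint of $e$ by $O(\log N) = O(\log n)$. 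Summing the two endpoint-sides, together with the at-most-$2$ direct $\VO$-contributions, gives $\E_\pi[Q(e, \pi)] = O(\log n)$.

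The main obstacle will be the charging in Step~2: translating the cascade of $\EO$-calls into a precise statement about rounds of parallel random greedy MIS on $L(H)$ in the multigraph setting. Parallel edges in $H$ can locally inflate $L(H)$ into a clique on copies of each edge-class, so one must verify that these clusters do not cause the per-edge call count to explode. The cushioning observations are (i) $L(H)$ is always simple, so Fischer--Noever applies with $N = |E(H)|$ and $\log N = O(\log n)$ in our regime; (ii) the ordered iteration inside $\EO$ aborts as soon as a matched neighbor is found, so only the $\pi$-prefix of each line-graph neighborhood of $e$ is relevant; and (iii) the caching in \Cref{ln:cach} together with a careful enumeration ensures that each $(e, v)$ call-site is counted with the correct multiplicity. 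Modulo these points, the expectation bound transfers cleanly from \cite{Behnezhad21,FischerN18} and yields the stated lemma.
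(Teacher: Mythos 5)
Your Step 1 bookkeeping is fine (it is essentially the directed-query decomposition behind $Q(\vec{e},\pi)$ in the paper), and your instinct to pass to the simple line graph $L(H)$ and invoke \cite{FischerN18} does correspond to one ingredient of the actual proof. But the heart of your argument, Step 2, does not work: the number of higher-ranked neighbors $e'$ of $e$ whose oracle call reaches $e$ is \emph{not} controlled by the round in which $e$ (or a local cluster around it) is finalized by parallel random greedy MIS. Concretely, suppose $e=(u,v)$ happens to receive the minimum rank among all edges incident to $u$, and the other neighbors $w$ of $u$ are pendant vertices. Then every $\VO(w,\pi)$ invokes $\EO((u,w),u,\pi)$, whose list of lower-ranked edges at $u$ begins with $e$, so it immediately calls $\EO(e,\cdot,\pi)$; hence $Q(e,\pi)=\Theta(\deg_H(u))$ for such $\pi$, while the relevant part of $L(H)$ is a clique that parallel greedy MIS finalizes in a single round. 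So no pointwise statement of the form ``number of callers $\le$ finalization round'' can hold, and the Fischer--Noever bound cannot be applied the way you propose: round complexity controls the \emph{length} of a single chain of recursive calls (a query path), not the \emph{number} of distinct calls arriving at $e$, which is exactly what $\E_\pi[Q(e,\pi)]$ measures and which can be polynomially large for individual permutations even when $\rho(L(H),\pi)=O(\log n)$.

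The lemma is genuinely an averaging-over-$\pi$ statement, and the missing piece is a counting argument across permutations. The paper supplies it by mapping, for each query path $\vec{P}$ ending at $\vec{e}$ under $\pi$, the permutation $\pi$ to $f(\pi,\vec{P})$ obtained by cyclically shifting the ranks along $\vec{P}$ (\Cref{def:function-f}), and analyzing the bipartite graph $H^{\vec{e}}_{\Pi}$ whose left degrees equal $Q(\vec{e},\pi)$ (\Cref{obs:deg-to-complexity}). Two facts are then proved: (i) permutations admitting a query path to $\vec{e}$ of length more than $c\log n$ form at most an $n^{-2}$ fraction of all permutations (\Cref{lem:small-long-perms}; this is the only place where $L(H)$ and \Cref{pro:round-complexity-mis} enter, i.e.\ the only part of your plan that is on target, and it also needs the multigraph-specific case analysis for parallel copies), and (ii) each target permutation has at most $c\log n$ preimages among good permutations, because two query paths to $\vec{e}$ with the same image under $f$ must be nested (\Cref{lem:low-deg-likley}). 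Your outline contains no counterpart to (ii), and without it (or an alternative global argument, e.g.\ a Yoshida--Yamamoto--Ito-style summation over potential query paths adapted to parallel edges) the claimed $O(\log n)$ bound on the expectation does not follow.
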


Before proving \Cref{lem:multigraph-rgmm-bound-edge} we complete the proof of \Cref{thm:multigraph-rgmm-bound-vertex}. Recall that $\deg_H(v)$ denotes the degree of vertex $v$ in $H$, where multiedges are counted according to their number of occurrences.

\begin{proof}[Proof of \Cref{thm:multigraph-rgmm-bound-vertex}]
We have
\begin{align*}
    \E_{v, \pi}[T(v, \pi)] = \frac{1}{n}\E_\pi\Big[\sum_{v\in V_H}T(v,\pi)\Big] & = \frac{1}{n} \E_\pi\Big[\sum_{e \in E_H} Q(e, \pi)\Big] \\ 
    & = \frac{1}{2n} \sum_{v \in V_H} \sum_{e \ni v} \E_\pi[ Q(e, \pi)]\\
    & = \frac{1}{2n} \sum_{v \in V_H} \deg_H(v) \cdot O(\log n)\\
    & = O(\bar{d} \cdot \log n).
\end{align*}
\end{proof}

In what follows, we focus on proving \Cref{lem:multigraph-rgmm-bound-edge}. The general framework and approach we use to demonstrate this claim are similar to those in \cite{Behnezhad21}, with some modifications to accommodate multigraphs. We repeat all the steps to ensure completeness. For edge $e = (u, v)$, we use $\vec{e}$ to denote the directed edge from $u$ to $v$ and $\cev{e}$ for the directed edge from $v$ to $u$. Similarly, we define $Q(\vec{e}, \pi)$ (resp. $Q(\cev{e}, \pi)$) as the total number of queries to the edge $e$ directed from $u$ to $v$ (resp. from $v$ to $u$).

\begin{observation}\label{obs:direction-complxity}
    For any edge $e$, we have $\E_\pi[Q(e, \pi)] = \E_\pi[Q(\vec{e}, \pi)] + \E_\pi[Q(\cev{e}, \pi)]$.
\end{observation}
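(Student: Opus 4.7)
The plan is purely combinatorial: the observation is a deterministic partitioning identity that survives taking expectations. I will argue pointwise in $\pi$ first, and then average. Fix any permutation $\pi$ of $E_H$. By definition, $Q(e,\pi)$ counts every invocation of $\EO(e,\cdot,\pi)$ that occurs during the executions of the vertex oracles. Inspecting \Cref{alg:vertexoracle} and \Cref{alg:edgeoracle}, every recursive call of the form $\EO(e_i,u_i,\pi)$ is produced by processing an edge $e_i=(w,u_i)$ incident to some ``current'' vertex $w$ and passing the opposite endpoint $u_i$ as the second argument. In particular, the second argument of every call to $\EO(e,\cdot,\pi)$ is always one of the two endpoints of $e$, and never anything else.

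Hence the set of all $\EO(e,\cdot,\pi)$ invocations naturally partitions into exactly two classes, according to which endpoint of $e=(u,v)$ appears as the second argument. With the paper's convention that $\vec e$ denotes the directed copy of $e$ from $u$ to $v$, the calls of the form $\EO(e,v,\pi)$ (those that originate at the $u$-side and probe the neighborhood of $v$) are precisely what $Q(\vec e,\pi)$ counts, while the calls $\EO(e,u,\pi)$ are precisely what $Q(\cev e,\pi)$ counts. Since these two classes are disjoint and together exhaust every $\EO(e,\cdot,\pi)$ invocation, we obtain the pointwise identity $Q(e,\pi) = Q(\vec e,\pi) + Q(\cev e,\pi)$. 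Taking expectation over a uniformly random $\pi$ and applying linearity of expectation yields the claimed equality.

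The one subtlety to verify is that the caching on \Cref{ln:cach} of \Cref{alg:edgeoracle} does not disturb the bookkeeping: since the cache is keyed on the pair $(e,v)$ rather than on $e$ alone, calls distinguished by their second argument are kept separate, so the partition above is well-defined no matter whether ``queries'' is interpreted as counting every call or only counting the first (non-cached) call for each distinct $(e,v)$. Beyond this, the argument has no real obstacle — it is a pure bookkeeping step whose only purpose is to split the task of bounding $\E_\pi[Q(e,\pi)]$ into two directional subproblems that will be handled separately in the subsequent analysis.
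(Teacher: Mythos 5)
Your proof is correct and follows essentially the same route as the paper: the set of all $\EO(e,\cdot,\pi)$ invocations splits disjointly by which endpoint of $e$ is the second argument, giving the pointwise identity $Q(e,\pi)=Q(\vec e,\pi)+Q(\cev e,\pi)$, and expectation is then taken by linearity. Your extra remark that the caching in \Cref{alg:edgeoracle} is keyed on the pair $(e,v)$ and hence respects the partition is a small but welcome clarification that the paper leaves implicit.
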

\begin{proof}
    The proof follows from the fact that $Q(e, \pi) = Q(\vec{e}, \pi) \cup Q(\cev{e}, \pi)$ and $Q(\vec{e}, \pi) \cap Q(\cev{e}, \pi) = \emptyset$.
\end{proof}

Let $\mc{R}$ be the set of edges stored in memory for recursive calls during the execution of the vertex oracle defined in \Cref{alg:vertexoracle}. Additionally, if the edge $e = (u, v)$ is visited with direction from $u$ to $v$ (i.e., $\EO(e, v, \pi)$ is called), we assume that $\vec{e}$ is stored in $\mc{R}$; otherwise, $\cev{e}$ is stored in $\mc{R}$. It is not hard to observe that at any point during the oracle calls, the edges stored in $\mc{R}$ have decreasing ranks according to the permutation $\pi$.

\begin{observation}\label{obs:decreasing-edges}
    Let $(e_1, e_2, \ldots, e_d)$ be edges that are stored in $\mc{R}$ during the execution of vertex oracle ($e_1$ stored first). Then, it holds that $\pi(e_1) > \pi(e_2) > \ldots > \pi(e_d)$.
\end{observation}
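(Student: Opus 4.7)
The plan is a short induction on the position within the stack, driven by direct inspection of how $\EO$ generates recursive calls. Fix any moment during the execution of the vertex oracle and let $(e_1, e_2, \ldots, e_d)$ denote the contents of $\mc{R}$ listed in chronological order of insertion. I would prove $\pi(e_{i+1}) < \pi(e_i)$ for each $i \in \{1, \ldots, d-1\}$, which immediately chains to $\pi(e_1) > \pi(e_2) > \cdots > \pi(e_d)$.

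For the inductive step, note that $\mc{R}$ behaves as a LIFO call stack: since $e_{i+1}$ was pushed after $e_i$ and both are still present, the call that pushed $e_{i+1}$ must have been issued from inside the currently-executing $\EO$ frame associated with $e_i$ (no frame between them has yet returned, else it would have been popped). Inspecting \Cref{alg:edgeoracle}, the only recursive calls that $\EO(e_i, \cdot, \pi)$ ever makes are to edges $e'$ incident to $e_i$ at its opposite endpoint whose rank satisfies $\pi(e') < \pi(e_i)$. Therefore $\pi(e_{i+1}) < \pi(e_i)$, as needed.

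The base element $e_1$ requires no comparison; it is simply the edge that the outermost $\VO(v, \pi)$ call is currently iterating over when we take the snapshot of $\mc{R}$. I do not foresee any substantive obstacle here: the whole argument is a restatement of the fact that both oracle routines only recurse on strictly lower-ranked incident edges, and the only mild care needed is observing that the LIFO discipline of recursion forces the parent of the push that produced $e_{i+1}$ to be precisely the $\EO$ call associated with $e_i$.
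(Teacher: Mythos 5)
Your proposal is correct and follows essentially the same argument as the paper: both rest on the single fact that $\EO(e,\cdot,\pi)$ only recurses on incident edges of strictly smaller rank, so the chain of active calls recorded in $\mc{R}$ has strictly decreasing ranks. Your added remark about the LIFO discipline (each stored edge being pushed from the frame of its predecessor) merely makes explicit what the paper leaves implicit.
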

\begin{proof}
    The proof follows from the fact that $\EO(e, \cdot, \pi)$ only queries edges $e'$ for which $\pi(e') < \pi(e)$. Consequently, the edges stored in memory at any point are ranked in decreasing order.
\end{proof}

\begin{observation}
    Let $(e_1, e_2, \ldots, e_d)$ represent the edges stored in $\mc{R}$ at some point during the execution of the vertex oracle, in the order they are visited by the oracle. Then, $(e_1, e_2, \ldots, e_d)$ forms a path in $H$.
\end{observation}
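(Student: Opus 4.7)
The plan is to read off the conclusion directly from the recursive call structure of $\EO$. Every edge $e_{j+1}$ that appears in $\mc{R}$ (other than the first) was pushed onto the stack from within the body of its immediate parent $\EO(e_j, u_j, \pi)$. By inspection of \Cref{alg:edgeoracle}, the for-loop inside that call only iterates over edges that are incident to $u_j$ (the second argument of the oracle call) and whose rank is strictly smaller than $\pi(e_j)$. Hence $e_{j+1}$ must be incident to $u_j$, so $e_j$ and $e_{j+1}$ share the endpoint $u_j$.

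To turn this into a vertex sequence I would write $e_j = (a_{j-1}, a_j)$ where $a_j = u_j$ is the second argument of the oracle call that pushed $\vec{e_j}$ onto $\mc{R}$, and $a_0$ is the vertex from which the enclosing $\VO$ call was made. The observation of the previous paragraph then gives, by induction on $j$, that the tail of $\vec{e_{j+1}}$ equals the head $a_j$ of $\vec{e_j}$, and the head $a_{j+1}$ is the other endpoint of $e_{j+1}$. Thus $\vec{e_1}, \vec{e_2}, \ldots, \vec{e_d}$ traces a walk $a_0 \to a_1 \to \cdots \to a_d$ in $H$. Moreover, by \Cref{obs:decreasing-edges} the ranks satisfy $\pi(e_1) > \pi(e_2) > \cdots > \pi(e_d)$, so the edges $e_1, \ldots, e_d$ are in particular pairwise distinct.

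The argument poses no real obstacle; it is a one-step unrolling of the recursion combined with the directional convention for $\vec{\cdot}$ in $\mc{R}$. The one subtlety worth flagging is that in the presence of cycles in the multigraph $H$ the walk $a_0, a_1, \ldots, a_d$ may in principle revisit a vertex (e.g.\ on a triangle whose three edges have decreasing rank around the cycle), so the statement ``forms a path in $H$'' is best read as asserting that consecutive edges share an endpoint and the edges are all distinct, i.e.\ a trail in $H$, equivalently a path in the line graph $L(H)$ in the strict sense.
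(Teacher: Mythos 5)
There is a genuine gap. Your argument only establishes that the stored edges form a walk with pairwise distinct edges (a trail): consecutive edges share an endpoint because each recursive call enumerates edges incident to the second argument, and distinctness follows from \Cref{obs:decreasing-edges}. That is exactly the part the paper dismisses in one sentence (``clearly a walk''). The actual content of the observation — and what the paper proves — is that the walk never revisits a vertex, i.e., it is a genuine (vertex-simple) path in $H$, and you explicitly disclaim this, suggesting a triangle with decreasing ranks as a potential counterexample and proposing to weaken the statement to ``trail''. This is not a harmless reinterpretation: the claim as stated is stronger, and the paper proves it.

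Moreover, the stronger claim is true, and the proof shows why your triangle scenario cannot occur. Suppose $e_i=(w,z)$ is the first edge whose far endpoint $z$ was already visited by the walk, and let $e_j$ (with $j<i$) be the edge that was queried as an edge incident to $z$ when the walk was at $z$. By \Cref{obs:decreasing-edges}, $\pi(e_i)<\pi(e_j)$; since the oracle at $z$ enumerates incident edges in increasing rank order, it must have queried $e_i$ (in the direction towards $w$) \emph{before} $e_j$, and since it went on to query $e_j$, that earlier call on $e_i$ returned \false. Hence some edge incident to $w$ with rank smaller than $\pi(e_i)$ is in the matching. But then the later call $\EO(e_{i-1},w,\pi)$, which scans the edges at $w$ in increasing rank, reaches that matched edge before $e_i$ and stops, so $e_i$ is never queried from $w$ — contradicting that $(e_1,\ldots,e_i)$ is stored in $\mc{R}$. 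In the concrete triangle example this manifests as the cheapest closing edge being explored first and terminating the exploration, so the closed walk you worry about never arises. To repair your proof you need to add this contradiction argument (or an equivalent one) rather than downgrade the statement.
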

\begin{proof}
    From the definition of the edge oracle, $P = (e_1, e_2, \ldots, e_d)$ is clearly a walk in the graph since each time the edge oracle queries an incident edge. Now we show that $P$ is a path. For the sake of contradiction, let $e_i$ be the smallest index where $(e_1, \ldots, e_{i-1})$ is a path but $(e_1, \ldots, e_{i})$ is not. Let $e_i = (w, z)$ where $w$ is shared with $e_{i-1}$. Hence, vertex $z$ is one of the endpoints of $e_1, \ldots, e_{i-1}$ (it might be $e_{i-1}$ since we have multiedges). Let $e_j$ ($j < i$) be the edge that the edge oracle queries as an incident edge of $z$ at the time it visits $z$. By \Cref{obs:decreasing-edges} we have that $\pi(e_i) < \pi(e_j)$. Thus, the edge oracle already queried $e_i$ before $e_j$ and the reason it did not add $e_i$ to the matching is because some other neighbor of $w$ with smaller ranking is already in the matching. Therefore, that neighbor must be queried before $e_i$ and the edge oracle must stop querying neighbors of $w$ at that point, which is a contradiction. Therefore, $P$ is a path.
\end{proof}

We define a query path as a path that is stored in $\mc{R}$ at some point during the execution of vertex oracle. Let $\vec{P} = (\vec{e_1}, \vec{e_2}, \ldots, \vec{e_d})$ be a query path for some permutation $\pi \in \Pi$, where $\Pi$ denotes the set of all permutations over the edges of multigraph $H$. Next, we define a function that maps a permutation $\pi$ to some permutation based on $\vec{P}$.

\begin{definition}[Function $f$]\label{def:function-f}
    Let $\vec{P} = (\vec{e_1}, \vec{e_2}, \ldots, \vec{e_d})$ be a query path and $\pi \in \Pi$ be some permutation over the edges of $H$. We define a function $f(\pi, \vec{P})\in \Pi$ that maps permutation $\pi$ to another permutation $f(\pi, \vec{P})$ such that $f(\pi, \vec{P})(e_i) = \pi(e_{i+1})$ for $i < k$, $f(\pi, \vec{P})(e_k) = \pi(e_1)$, and $f(\pi, \vec{P})(e) = \pi(e)$ for $e \notin \vec{P}$.
\end{definition}

Now based on \Cref{def:function-f}, we are able to construct a bipartite graph $H_{\Pi}^{\vec{e}} = (X^{\vec{e}}_{\Pi}, Y^{\vec{e}}_{\Pi}, E^{\vec{e}}_{\Pi})$ for a given edge $\vec{e}$ that is crucial for proving \Cref{lem:multigraph-rgmm-bound-edge}.

\begin{definition}[Bipartite Graph $H_{\Pi}^{\vec{e}} = (X^{\vec{e}}_{\Pi}, Y^{\vec{e}}_{\Pi}, E^{\vec{e}}_{\Pi})$]
    Let $\vec{e}$ be an edge in $H$ with some direction. We define $H_{\Pi}^{\vec{e}} = (X^{\vec{e}}_{\Pi}, Y^{\vec{e}}_{\Pi}, E^{\vec{e}}_{\Pi})$ to be a bipartite graph such that $|X^{\vec{e}}_{\Pi}| = |Y^{\vec{e}}_{\Pi}| = |E_H|!$, where each vertex in each part corresponds to a permutation over the edges of $H$. Let $x_{\pi} \in X_{\Pi}$ be a vertex that corresponds to permutation $\pi$ and $\vec{P}$ be a query path obtained on permutation $\pi$ for some vertex oracle call that ends to $\vec{e}$. Then, $E^{\vec{e}}_{\Pi}$ contains the edge $(x_{\pi}, y_{f(\pi, \vec{P})})$, where $y_{f(\pi, \vec{P})} \in Y^{\vec{e}}_{\Pi}$ corresponds to permutation $f(\pi, \vec{P})$.
\end{definition}

\begin{observation}\label{obs:deg-to-complexity}
    $\deg_{H_{\Pi}^{\vec{e}}}(x_{\pi}) = Q(\vec{e}, \pi)$.
\end{observation}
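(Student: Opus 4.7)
The plan is to verify the observation by a chain of bijections: edges of $H_\Pi^{\vec{e}}$ incident to $x_\pi$ $\leftrightarrow$ query paths obtained under $\pi$ ending in $\vec{e}$ $\leftrightarrow$ calls to $\EO(\vec{e},\cdot,\pi)$ made during $\VO(\cdot,\pi)$ invocations. By definition of $Q(\vec{e},\pi)$, the cardinality of the last set is exactly $Q(\vec{e},\pi)$, so establishing these two bijections will complete the proof.

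First, I would match query paths with $\EO$ calls. Every call $\EO(\vec{e},\cdot,\pi)$ occurs at a specific node of the recursion tree rooted at some $\VO(v,\pi)$ invocation (cached returns counted as calls as well). At the moment of the call, $\mathcal{R}$ holds a particular sequence of directed edges terminating in $\vec{e}$; this is the associated query path. Different calls correspond to different nodes of the recursion tree and therefore to different snapshots of $\mathcal{R}$, while conversely every query path in $\mathcal{R}$ ending in $\vec{e}$ arises from a unique such call.

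Second---the key step---I would prove injectivity of the map $\vec{P}\mapsto f(\pi,\vec{P})$ on query paths ending in $\vec{e}$, which implies that each query path contributes a distinct edge incident to $x_\pi$ in $H_\Pi^{\vec{e}}$. Given $\pi$ and $\pi'=f(\pi,\vec{P})$, the underlying edges of $\vec{P}=(\vec{e_1},\ldots,\vec{e_d})$ are exactly $\{e:\pi(e)\neq \pi'(e)\}$, since by \Cref{def:function-f} the map $f$ alters only the ranks of edges on $\vec{P}$. To recover the order, I use that $\pi'(e_i)=\pi(e_{i+1})$ for $i<d$ and $\pi'(e_d)=\pi(e_1)$, combined with the fact (\Cref{obs:decreasing-edges}) that the $\pi$-ranks decrease along $\vec{P}$: the edge $e_d$ is the unique one in the symmetric difference with $\pi'(e_d)=\max_{e\in\vec{P}}\pi(e)$, and iteratively $e_{i-1}$ is determined as the unique edge satisfying $\pi'(e_{i-1})=\pi(e_i)$. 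Finally, the directions of the $\vec{e_i}$ are recovered from the forced orientation $\vec{e_d}=\vec{e}$ together with the fact (shown earlier in the excerpt) that $\vec{P}$ is a simple path in $H$, so each consecutive pair of edges shares a unique vertex that orients the preceding edge.

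The main obstacle is ensuring the multigraph structure is harmless: several parallel copies of the same underlying edge may appear in $H$, but each is a distinct edge with its own rank under $\pi$, and $f$ acts on this indexed sequence rather than on vertex pairs---so the symmetric-difference argument and the cyclic-rotation reconstruction go through unchanged. Composing the two bijections yields $\deg_{H_\Pi^{\vec{e}}}(x_\pi)=Q(\vec{e},\pi)$.
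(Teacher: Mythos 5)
Your proposal is correct and follows essentially the same route as the paper, which proves the observation simply by noting that the construction of $H_{\Pi}^{\vec{e}}$ adds one edge incident to $x_\pi$ for each query path ending at $\vec{e}$, and these paths are in correspondence with the oracle calls counted by $Q(\vec{e},\pi)$. Your additional step—proving injectivity of $\vec{P}\mapsto f(\pi,\vec{P})$ via the symmetric difference of ranks, the decreasing-rank property, and recovery of orientations along the simple path—is extra rigor that the paper leaves implicit (it effectively counts parallel edges in $E_{\Pi}^{\vec{e}}$), and it is carried out correctly, including the multigraph edge cases.
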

\begin{proof}
    For each query path that ends with $\vec{e}$ that can be produced by some vertex oracle call on permutation $\pi$, we add an incident edge to $x_{\pi}$. Hence, $\deg_{H_{\Pi}^{\vec{e}}}(x_{\pi}) = Q(\vec{e}, \pi)$.
\end{proof}

As a result of \Cref{obs:deg-to-complexity}, in order to prove \Cref{lem:multigraph-rgmm-bound-edge}, it is sufficient to prove that $\E_{x \sim X^{\vec{e}}_{\Pi}}[\deg_{H_{\Pi}^{\vec{e}}}(x)] = O(\log n)$. We call a permutation $\pi \in \Pi$ a {\em bad permutation} if and only if there exists a query path for this permutation that ends at $\vec{e}$ and it has a length larger than $c \log n$ for some large constant $c$. We use $\overline{X}^{\vec{e}}_{\Pi} \subset X^{\vec{e}}_{\Pi}$ to denote the set of vertices that correspond to bad permutations in $X^{\vec{e}}_{\Pi}$. We prove the following two lemmas, which are sufficient to achieve our goal.

\begin{lemma}\label{lem:small-long-perms}
    If $c$ is  a large enough constant, then $|\overline{X}^{\vec{e}}_{\Pi}| \leq |E_H|!/n^2$.
\end{lemma}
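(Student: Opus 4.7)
My plan is to bound the probability that a uniformly random permutation $\pi$ lies in $\overline{X}^{\vec{e}}_{\Pi}$, i.e.\ that some invocation of the edge oracle produces a query path ending at $\vec{e}$ of length more than $c \log n$, and to show this probability is at most $1/n^2$. Since $|\Pi| = |E_H|!$, this immediately yields the stated cardinality bound. The structural observation I would use is that, by \Cref{obs:decreasing-edges} together with the recursion in \Cref{alg:edgeoracle}, such a query path is a sequence of edges $e_1, e_2, \ldots, e_d = e$ of $H$ in which consecutive edges share an endpoint and the ranks are strictly decreasing: $\pi(e_1) > \pi(e_2) > \cdots > \pi(e_d)$. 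Moreover, each $e_{i+1}$ is not an arbitrary lower-ranked neighbor of $e_i$ through the appropriate endpoint, but one whose recursion was not terminated by an even-lower-ranked incident edge already in the matching. So it suffices to upper-bound, over random $\pi$, the probability that a rank-decreasing incidence chain of length exceeding $c \log n$ ending at $\vec{e}$ exists.

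The key idea is to cast this event as an elimination chain of length $d$ in the parallel randomized greedy maximal independent set process on the line multigraph $L(H)$, driven by the edge-rank permutation $\pi$. A query path of length $d$ ending at $\vec{e}$ forces the vertex of $L(H)$ corresponding to $\vec{e}$ to be undecided for at least $d$ parallel MIS rounds, since each recursive call inspects only strictly lower-ranked incident edges (exactly the $L(H)$-neighbors), and the $i$-th step of the chain can only be eliminated after the $(i{+}1)$-st has been resolved. I would then invoke the round-complexity bound of Fischer and Noever~\cite{FischerN18}: for any fixed vertex of an $N$-vertex graph, the probability that the random greedy parallel MIS does not decide its status within $\Theta(\log N)$ rounds is $1/\poly(N)$. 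Applied to $\vec{e}$ in $L(H)$ whose vertex set is $E_H$, with $c$ chosen large enough, this gives $\Pr_\pi[\pi \in \overline{X}^{\vec{e}}_{\Pi}] \le 1/n^2$, which is what we need.

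The main obstacle is that the Fischer--Noever analysis is stated for simple graphs, while $L(H)$ arises from the multigraph $H$ and can contain very many parallel edges and extremely high local degrees. I expect to address this in one of two ways. The first is to verify that the Fischer--Noever argument is robust to parallel edges: additional lower-ranked copies in the neighborhood of a vertex in $L(H)$ can only make that vertex \emph{more} likely to be decided within a given round, so the simple-graph upper bound on the number of rounds transfers verbatim. The second, more self-contained route is to do a direct union bound: for any fixed incidence chain $e_1, \ldots, e_d = e$ of length $d$ in $H$, the probability that $\pi$ induces a strictly decreasing ranking on it is $1/d!$, and the ``lowest-ranked unexplored neighbor'' discipline enforced by \Cref{alg:edgeoracle} prunes the branching sufficiently that the number of admissible chains of length $d$ emanating from $\vec{e}$ is dominated by $d!$ once $d = \Omega(\log n)$, yielding an $o(1/n^2)$ bound after a union bound over the constant number of ``starting directions'' at $\vec{e}$. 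Either route closes the claim.
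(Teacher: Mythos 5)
Your high-level route is the same as the paper's: interpret a query path ending at $\vec{e}$ of length more than $c\log n$ as forcing a large round complexity for the parallel randomized greedy MIS on the line graph $L(H)$, and then invoke the Fischer--Noever bound (\Cref{pro:round-complexity-mis}, \Cref{cor:round-complexity-mis}) to conclude that only an $n^{-2}$ fraction of permutations can be bad. The problem is that the step you dispatch in one sentence --- ``the $i$-th step of the chain can only be eliminated after the $(i{+}1)$-st has been resolved'' --- is exactly where all the work lies, and as stated it is false: consecutive edges of a valid query path can be removed from $L(H)$ in the \emph{same} round (for instance, if $e_{i+1}$ itself is a local minimum and enters the matching, it eliminates $e_i$ simultaneously, and the oracle call at $e_{i+1}$ still recurses on its lower-ranked neighbors, so the query path legitimately continues past it). Hence a path of length $d$ does not force $d$ rounds. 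What is true, and what the paper proves via a case analysis on which incident edge of $e_i$ enters the MIS (a parallel copy of $e_i$, an edge through the endpoint shared with $e_{i-1}$, or an edge through the endpoint shared with $e_{i+1}$), is the two-step decrease $\rho(e_i) > \rho(e_{i+2})$, which yields $\rho(L(H),\pi) \ge r/2 - 1$ and suffices after adjusting $c$. Each of those cases uses the validity of the query path --- the oracle scans lower-ranked incident edges in increasing rank order and stops at a matched one --- to derive a contradiction; your proposal asserts the conclusion of this argument rather than supplying it, so the reduction from ``long query path'' to ``many rounds'' is a genuine gap, though a fixable one.

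Two smaller points. First, the ``multigraph obstacle'' you raise for Fischer--Noever is largely moot: the line graph of a multigraph is itself a simple graph (parallel edges of $H$ become distinct, mutually adjacent vertices of $L(H)$), so \Cref{pro:round-complexity-mis} applies to $L(H)$ as is; moreover, your proposed monotonicity fix (``extra parallel copies only make a vertex more likely to be decided'') would itself require proof, since the round complexity of greedy MIS is not obviously monotone under adding edges. Second, your fallback union-bound route does not work as sketched: a fixed rank-decreasing chain of length $d$ indeed has probability $1/d!$, but the number of incidence chains of length $d$ out of $\vec{e}$ is only bounded by roughly $\Delta^d$ with $\Delta$ the maximum degree of $L(H)$, so $\Delta^d/d!$ is small only when $d = \Omega(\Delta)$, not $d = \Omega(\log n)$; quantifying the pruning provided by the greedy discipline is precisely the difficulty that the round-complexity argument (here and in the simple-graph analyses it extends) is designed to bypass, so you cannot simply assert that the admissible chains are ``dominated by $d!$''.
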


\begin{lemma}\label{lem:low-deg-likley}
    Let $y \in Y^{\vec{e}}_{\Pi}$. Then the number of neighbors of $y$ in $X^{\vec{e}}_{\Pi} \setminus \overline{X}^{\vec{e}}_{\Pi}$ is at most $c \log n$.
\end{lemma}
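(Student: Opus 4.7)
The plan is to bound the degree of $y = y_\sigma$ in $X_\Pi^{\vec{e}} \setminus \overline{X}_\Pi^{\vec{e}}$ by counting, for each possible path length $\ell$, the pairs $(\pi, \vec{P})$ with $f(\pi, \vec{P}) = \sigma$, where $\vec{P}$ is a query path of $\pi$ ending at $\vec{e}$ and $|\vec{P}| = \ell$. Since $\pi \notin \overline{X}_\Pi^{\vec{e}}$ means $\pi$ is non-bad, every such $\vec{P}$ has $|\vec{P}| \le c \log n$. I would show that for each $\ell \in \{1, \ldots, c \log n\}$ at most one such pair exists; summing then yields the claimed bound of $c \log n$ on $\deg(y)$.

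Fix $\ell$ and a candidate path $\vec{P} = (\vec{e}_1, \ldots, \vec{e}_\ell = \vec{e})$. Combining \Cref{def:function-f} with \Cref{obs:decreasing-edges}, which forces $\pi(e_1) > \pi(e_2) > \cdots > \pi(e_\ell)$, the $\sigma$-ranks along $\vec{P}$ satisfy
\[
  \sigma(e_\ell) > \sigma(e_1) > \sigma(e_2) > \cdots > \sigma(e_{\ell-1}),
\]
while $\pi$ agrees with $\sigma$ on every edge outside $\vec{P}$. Starting from $e_\ell = \vec{e}$ and the target length $\ell$, I would reconstruct $\vec{P}$ backwards: at each step, given the already-identified suffix $(e_{i+1}, \ldots, e_\ell)$ and the resulting partial specification of $\pi$, the plan is to use the oracle's deterministic rank-ordered exploration rule to argue that the predecessor $e_i$ is uniquely pinned down. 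Parallel edges of the multigraph $H$ cause no ambiguity since each has its own $\pi$-rank (inherited from $\sigma$ whenever the edge is not on $\vec{P}$), so the argument extends the simple-graph analysis of \cite{Behnezhad21} to our multigraph setting.

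The main obstacle will be this per-step uniqueness. What has to be shown is that, among all edges adjacent to $e_{i+1}$ at the appropriate endpoint, exactly one is simultaneously consistent with (a) the rank constraint imposed by the cyclic shift, and (b) the oracle's execution on $\pi$ actually descending from the candidate $e_i$ into $e_{i+1}$ --- that is, no adjacent edge of $\pi$-rank in the interval $(\pi(e_{i+1}), \pi(e_i))$ returns \true from $\EO$. Because $\pi$ and $\sigma$ agree off $\vec{P}$, condition (b) can be reformulated purely in terms of $\sigma$-ranks of edges that are already fixed, which is what singles out the predecessor. Once per-step uniqueness is established, iterating backwards pins down $\vec{P}$ completely; the relation $f(\pi, \vec{P}) = \sigma$ then determines $\pi$, giving at most one pair per length $\ell$ and hence the stated bound.
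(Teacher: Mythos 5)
Your counting frame---at most one valid pair $(\pi,\vec P)$ with $f(\pi,\vec P)=\sigma$ for each path length $\ell\le c\log n$, summed over lengths---is just a reformulation of what the paper actually proves (any two query paths of non-bad permutations mapping to the same $\sigma$ are nested, hence each is determined by its length), so the route is essentially the paper's. The problem is that your ``per-step uniqueness'' is exactly where all of the content of the paper's proof lives, and you do not prove it: you name it as ``the main obstacle'' and assert that the oracle's rank-ordered exploration ``singles out the predecessor,'' but no argument is given.

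Moreover, the mechanism you sketch is not right as stated. First, condition (b) is misidentified: for the execution on $\pi$ to descend from the caller $e_i$ into $e_{i+1}$, what is needed is that every edge examined by $\EO(e_i,\cdot,\pi)$ with $\pi$-rank \emph{below} $\pi(e_{i+1})$ returns \false; edges with rank in the interval $(\pi(e_{i+1}),\pi(e_i))$ have not even been reached at the moment of descent, so they are irrelevant to that step. Second, and more importantly, whether those $\EO$ calls return \false is a property of the random greedy matching under the \emph{entire} permutation $\pi$, and $\pi$ differs between candidate reconstructions precisely on the still-unknown path edges; indeed $\pi(e_{i+1})=\sigma(e_i)$ itself depends on which predecessor you hypothesize. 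So the condition cannot simply be ``reformulated purely in terms of $\sigma$-ranks of edges that are already fixed.'' Making this rigorous requires what the paper does: take two hypothetical valid pairs that agree on the portion of the path nearest $\vec e$ and diverge at the callers, show that the two permutations agree on every edge whose rank lies below the rank at the divergence point (so the two greedy matchings agree on that prefix of the ordering), then establish the nontrivial \Cref{clm:ei1-in-rgmm}---that the divergent continuation edge of the first path is actually matched in $\RGMM$ under the second permutation, via a case analysis on where a hypothetical blocking edge attaches, including the parallel-edge case specific to multigraphs---and finally conclude that the second query path could not have been valid. None of this is present in your proposal, so as it stands the argument has a genuine gap at its central step.
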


Their proofs are deferred to \Cref{sec:proof-of-small-long-perms,sec:proof-of-low-deg-likley}, respectively.

\begin{proof}[Proof of \Cref{lem:multigraph-rgmm-bound-edge}]
    First, note that $\deg_{H^{\vec{e}}_{\Pi}}(x_{\pi}) \leq O(n^2)$ for $x_{\pi} \in X^{\vec{e}}_{\Pi}$. To see this, suppose that we run $\VO(v, \pi)$ for some vertex $v$. The edge oracle for $e$ is either directly called by  $\VO(v, \pi)$ or from some neighboring edge oracle calls (the first time that the oracle visits the edge because of caching in \Cref{ln:cach} of \Cref{alg:edgeoracle}). Hence, $\VO(v, \pi)$ produces at most $n$ edge oracle calls to $e$. Since we have $n$ different options for $v$, the total number of query paths to $\vec{e}$ is at most $O(n^2)$. Further, each edge of $x_{\pi}$ in graph $H^{\vec{e}}_{\Pi}$ corresponds to a query path that ends with $\vec{e}$. Therefore, we have $\deg_{H^{\vec{e}}_{\Pi}}(x_{\pi}) \leq O(n^2)$. 
  
    Now, we prove that $\E_{x \sim X^{\vec{e}}_{\Pi}}[\deg_{H_{\Pi}^{\vec{e}}}(x)] = O(\log n)$. The total number of edges between $Y^{\vec{e}}_{\Pi}$ and $X^{\vec{e}}_{\Pi} \setminus \overline{X}^{\vec{e}}_{\Pi}$ is at most $|Y^{\vec{e}}_{\Pi}| \cdot c\log n$ by \Cref{lem:low-deg-likley}. Moreover, the total number of edges between $Y^{\vec{e}}_{\Pi}$ and $\overline{X}^{\vec{e}}_{\Pi}$ is at most $|\overline{X}^{\vec{e}}_{\Pi}| \cdot O(n^2)$ because $\deg_{H^{\vec{e}}_{\Pi}}(x_{\pi}) \leq O(n^2)$. Therefore, we have
    \begin{align*}
        |E^{\vec{e}}_{\Pi}| \leq |Y^{\vec{e}}_{\Pi}| \cdot c\log n + |\overline{X}^{\vec{e}}_{\Pi}| \cdot O(n^2) \leq O(|E_H|! \cdot  \log n),  
    \end{align*}
    where the last inequality follows by \Cref{lem:small-long-perms}. Thus
    \begin{align*}
        \E_{x \sim X^{\vec{e}}_{\Pi}}[\deg_{H_{\Pi}^{\vec{e}}}(x)] = \frac{|E^{\vec{e}}_{\Pi}|}{|E_H|!} \leq \frac{O(|E_H|! \cdot c \log n)}{|E_H|!} \leq O(\log n).
    \end{align*}
    Plugging in \Cref{obs:deg-to-complexity}, we obtain $\E_\pi[Q(\vec{e}, \pi)] = O(\log n)$. Finally, by \Cref{obs:direction-complxity}, we have $\E_\pi[Q(e, \pi)] = O(\log n)$.
\end{proof}

\subsection{Implementation Details of \RGMM{} for Multigraphs}

Until now, we demonstrate that the query complexity of our vertex oracle is $O(\bar{d} \log n)$. Now we show how to utilize this oracle to estimate the size of \RGMM{}. The following result is due to \cite{Behnezhad21}.

\begin{proposition}[See Appendix A in \cite{Behnezhad21}]\label{lem:reduction-cost-neighbor}
    Let $T(v)$ be the time needed to return a random neighbor of vertex $v$ that is not exposed to $v$ yet. Also, let $Q(v)$ be the expected number of times that the algorithm needs a random neighbor of $v$ if we start the oracle calls from a random vertex for a random permutation. Then, the expected time to run the vertex oracle for a random vertex and a random permutation is at most $\sum_v T(v) \cdot Q(v)$.
\end{proposition}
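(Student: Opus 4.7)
The plan is to decompose the total running time by attributing each unit of work to the vertex whose neighborhood is currently being sampled, and then to apply a Wald-type identity vertex by vertex. Concretely, I would bookkeep the computation as follows: whenever the recursive execution of $\VO$ (invoked at the random starting vertex under the random permutation) needs to pull a fresh, previously-unexposed neighbor of some vertex $v$, I charge every operation performed to retrieve that neighbor to vertex $v$. Every step of the overall simulation can be attributed this way, since the only nontrivial work done by the oracles in \Cref{alg:vertexoracle,alg:edgeoracle} is looking up the next incident edge (of $v$ or of an endpoint visited during the recursion).

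With this accounting, for each fixed vertex $v$ the total time charged to $v$ equals $\sum_{i=1}^{N(v)} \tau_i(v)$, where $N(v)$ is the (random) number of times during the whole execution that the algorithm asks for a new unexposed neighbor of $v$, and $\tau_i(v)$ is the time taken by the $i$-th such request. By assumption, $\E[N(v)] = Q(v)$ and each $\E[\tau_i(v)] \le T(v)$. The crucial independence observation is that the random bits used to sample the $i$-th neighbor of $v$ are fresh and are not consulted when deciding whether an $(i{+}1)$-st neighbor request will be made: the latter is determined by the recursive outcomes at the already-returned neighbors (i.e.\ by the permutation and by prior sampling), so $N(v)$ is a stopping time with respect to the filtration generated by the neighbor-sampling randomness. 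Hence by Wald's identity (or by a straightforward conditional-expectation argument, conditioning on $N(v)$ first),
\begin{equation*}
\E\!\left[\sum_{i=1}^{N(v)} \tau_i(v)\right] \;\le\; T(v)\cdot Q(v).
\end{equation*}

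Summing this bound over all vertices $v$ and using linearity of expectation gives the claimed $\sum_v T(v)\cdot Q(v)$ bound on the expected total running time of the vertex oracle.

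The main delicate point, and the piece I would spend the most care on, is the independence argument justifying the Wald-style step: $T(v)$ is defined as the time to return a random \emph{unexposed} neighbor, so in principle $\tau_i(v)$ could grow as $i$ increases and more neighbors of $v$ have been exposed. I would handle this either by interpreting $T(v)$ as a uniform upper bound on $\E[\tau_i(v)]$ over all histories (which is the standard reading in \cite{Behnezhad21} and suffices for the inequality above), or — if one wants to be fully explicit — by noting that the exposed-neighbor set at the $i$-th request is $\sigma(\pi,\text{oracle coin flips so far})$-measurable while the fresh coins used to draw the $i$-th neighbor are independent of this $\sigma$-algebra, so one may couple and upper-bound each $\tau_i(v)$ in expectation by $T(v)$ before summing.
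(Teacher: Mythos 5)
Your proposal is correct, but note that the paper does not actually prove \Cref{lem:reduction-cost-neighbor} at all: it is imported wholesale from Appendix~A of \cite{Behnezhad21}, with only a one-line remark that Behnezhad's argument, stated there for $T(v)=O(1)$, goes through verbatim for arbitrary $T(v)$. Your write-up is therefore a self-contained reconstruction of that external argument rather than a variant of anything in this paper, and it is the right reconstruction: charge every unit of work to the vertex whose unexposed neighbor is currently being drawn, observe that the total time charged to $v$ is $\sum_{i=1}^{N(v)}\tau_i(v)$ with $\E[N(v)]=Q(v)$, and close with the optional-stopping/Wald computation $\E\bigl[\sum_{i\ge 1}\tau_i(v)\,\mathbf{1}\{N(v)\ge i\}\bigr]\le T(v)\sum_{i\ge 1}\Pr[N(v)\ge i]=T(v)\,Q(v)$. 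You also correctly isolate the one genuinely delicate point, which the terse statement of the proposition glosses over: since the set of exposed neighbors of $v$ grows over the execution, the per-request cost is history-dependent, so $T(v)$ must be read as a uniform bound on the conditional expected cost of a request given the past (equivalently, $\{N(v)\ge i\}$ is measurable with respect to the information available before the $i$-th draw, and the fresh sampling coins are independent of it); this is exactly the reading under which the proposition is applied later, where $T(v)=\wt{O}(n+k\beta/\deg_H(v))$ holds for every request regardless of what has been exposed. In short, your proof fills in a step the paper delegates to \cite{Behnezhad21}, and it does so correctly.
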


\begin{remark}
    Note that \cite{Behnezhad21} proved the above lemma for $T(v) = O(1)$, but essentially the same proof applies to any $T(v)$.
\end{remark}

Up to this point, we are able to estimate $E_{\pi} |\RGMM(H, \pi)|$ (using $\wt{\Theta}(1)$ uniformly random vertex oracle calls) efficiently if we have access to the adjacency list of the multigraph $H$ because of \Cref{lem:reduction-cost-neighbor}. However, this is not feasible in our application because we do not have access to the adjacency list of $H$, and building it before running the vertex oracle is too costly. We use the following property of \RGMM{} to refine \Cref{lem:reduction-cost-neighbor} into a tighter bound that can be applied to our set cover problem.

\begin{lemma}\label{lem:outquery}
    Let $Q(v)$ be the expected number of times that the oracle queries an adjacent edge of $v$ if we start the oracle calls from a random vertex, for a random permutation over the edges of the multigraph $H$. It holds that $Q(v) = \wt{O}(\deg_H(v) / n)$.
\end{lemma}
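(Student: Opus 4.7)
}
The plan is to reduce the statement to the per-edge bound already established in \Cref{lem:multigraph-rgmm-bound-edge}, by a double counting/symmetrization over the starting vertex. Concretely, for a fixed permutation $\pi$ and starting vertex $u$, let $N(v; u, \pi)$ denote the number of edge-oracle calls on edges incident to $v$ triggered during the execution of $\VO(u, \pi)$ (counted with multiplicity, so a multi-edge between $v$ and another vertex contributes once per call on each of its copies). By definition, $Q(v) = \E_{u, \pi}[N(v; u, \pi)]$, where $u$ is drawn uniformly from $V_H$ and $\pi$ uniformly from $\Pi$.

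The key observation is the swap identity
\begin{align*}
    \sum_{u \in V_H} N(v; u, \pi) = \sum_{e \ni v} Q(e, \pi),
\end{align*}
which holds because both sides count, for the fixed permutation $\pi$, the total number of edge-oracle calls made to edges incident to $v$ over the $n$ executions $\{\VO(u, \pi)\}_{u \in V_H}$ that underlie the definition of $Q(e, \pi)$. The sum on the right naturally treats each copy of a multi-edge as its own edge $e$, matching the multiplicity convention in $\deg_H(v)$ and in $N(v;u,\pi)$.

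From here the bound is immediate: averaging over $u$ and $\pi$ and applying \Cref{lem:multigraph-rgmm-bound-edge},
\begin{align*}
    Q(v) = \frac{1}{n}\,\E_\pi\Bigl[\sum_{u \in V_H} N(v; u, \pi)\Bigr] = \frac{1}{n} \sum_{e \ni v} \E_\pi[Q(e, \pi)] = \frac{1}{n} \cdot \deg_H(v) \cdot O(\log n) = \wt{O}\!\left(\frac{\deg_H(v)}{n}\right).
\end{align*}

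The only subtle point, and the main thing to verify carefully, is the swap identity and in particular that the multiplicity bookkeeping is consistent between $N(v;u,\pi)$, $Q(e,\pi)$, and $\deg_H(v)$; this is just the observation that the oracles in \Cref{alg:vertexoracle} and \Cref{alg:edgeoracle} treat each copy of a multi-edge as a distinct edge with its own $\pi$-rank, so the counting conventions all align. Once that is noted, the lemma is a one-line consequence of \Cref{lem:multigraph-rgmm-bound-edge}.
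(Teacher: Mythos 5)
Your proposal is correct and follows essentially the same route as the paper: the paper's proof also writes $Q(v) = \frac{1}{n}\sum_{e \ni v}\E_\pi[Q(e,\pi)]$ (the double-counting step you spell out via $N(v;u,\pi)$ is there implicitly, since $Q(e,\pi)$ is already defined as the total over all starting vertices) and then invokes \Cref{lem:multigraph-rgmm-bound-edge} to get $\deg_H(v)\cdot O(\log n)/n$. Your explicit handling of the multi-edge multiplicity convention is a fine extra remark but not a point of divergence.
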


\begin{proof}
    Let $e_1, e_2, \ldots, e_{\deg_H(v)}$ be all edges incident to $v$ in $H$. Thus, we have 
    \begin{align*}
        Q(v) = \frac{1}{n} \cdot \frac{1}{|E_H|!}\sum_{\pi \in \Pi}\sum_{i=1}^{\deg_H(v)} Q(e_i, \pi) &= \frac{1}{n}\sum_{i=1}^{\deg_H(v)} \frac{1}{|E_H|!} \sum_{\pi \in \Pi} Q(e_i, \pi)\\
        & = \frac{1}{n}\sum_{i=1}^{\deg_H(v)} \E_\pi[Q(e_i, \pi)]\\
        & = \frac{1}{n}\sum_{i=1}^{\deg_H(v)}O(\log n) &(\text{By \Cref{lem:multigraph-rgmm-bound-edge}})\\
        & = \wt{O}(\deg_H(v) / n)
    \end{align*}
\end{proof}

\begin{corollary}\label{cor:reduction-cost-neighbor}
    Let $T(v)$ be the time needed to return a random neighbor of vertex $v$ that is not exposed to $v$ yet. Then, the expected time to run the vertex oracle for a random vertex and a random permutation is $\sum_v \wt{O}(T(v) \cdot \deg_H(v) / n)$.
\end{corollary}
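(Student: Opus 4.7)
The plan is to combine the two preceding results directly. Lemma~\ref{lem:reduction-cost-neighbor} already establishes that the expected runtime of the vertex oracle, when invoked on a uniformly random vertex with a uniformly random permutation, is bounded by $\sum_v T(v) \cdot Q(v)$, where $Q(v)$ is the expected number of requests for an unexposed random neighbor of $v$ generated by that invocation. What remains is to control $Q(v)$.

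Next I would argue that the quantity $Q(v)$ appearing in Lemma~\ref{lem:reduction-cost-neighbor} coincides with the quantity bounded in Lemma~\ref{lem:outquery}. Indeed, whenever the edge oracle needs to inspect the neighborhood of $v$, it does so by scanning the edges incident to $v$ in increasing order of their rank under $\pi$, and each such scan corresponds to exactly one request for an unexposed random neighbor of $v$. Consequently, the expected number of such requests is precisely the expected number of times an adjacent edge of $v$ is queried during the run of the vertex oracle starting from a uniformly random vertex with a uniformly random permutation, which is exactly what Lemma~\ref{lem:outquery} bounds by $\wt{O}(\deg_H(v)/n)$.

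Substituting this bound into $\sum_v T(v) \cdot Q(v)$ yields $\sum_v T(v) \cdot \wt{O}(\deg_H(v)/n) = \sum_v \wt{O}(T(v) \cdot \deg_H(v)/n)$, which is the claimed bound. There is essentially no obstacle: the corollary is a one-line composition of Lemma~\ref{lem:reduction-cost-neighbor} (which contributes the $T(v) \cdot Q(v)$ decomposition by reducing the cost of the oracle to the cost of neighbor-sampling steps) and Lemma~\ref{lem:outquery} (which contributes the $\wt{O}(\deg_H(v)/n)$ bound on the number of neighbor-sampling steps at $v$, via the edge-oracle complexity bound of Lemma~\ref{lem:multigraph-rgmm-bound-edge}). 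The only point that deserves a sentence of care is to explicitly identify the two $Q(v)$'s, since they are defined slightly differently in the two lemmas, but are operationally the same under the oracle in Algorithms~\ref{alg:vertexoracle}--\ref{alg:edgeoracle}.
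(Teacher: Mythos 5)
Your proposal is correct and matches the paper's proof exactly: the paper also obtains the corollary by plugging the bound $Q(v) = \wt{O}(\deg_H(v)/n)$ from Lemma~\ref{lem:outquery} into the decomposition $\sum_v T(v)\cdot Q(v)$ of Lemma~\ref{lem:reduction-cost-neighbor}. Your extra remark identifying the two notions of $Q(v)$ is a reasonable clarification but does not change the argument.
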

\begin{proof}
    The proof can be obtained by plugging \Cref{lem:outquery} into \Cref{lem:reduction-cost-neighbor}.
\end{proof}

\subsection{Proof of \Cref{lem:small-long-perms}} \label{sec:proof-of-small-long-perms}

We use the following result for the round-complexity of the parallel randomized greedy maximal independent set by \cite{FischerN18}.

\begin{proposition}[\cite{FischerN18}]\label{pro:round-complexity-mis}
    Let $\pi$ be a permutation of the vertices of a graph $G$ with $n$ vertices, drawn uniformly at random. For any constant $c$, with probability $1 - n^{-c}$, we have $\rho(G, \pi) = O(\log n)$.
\end{proposition}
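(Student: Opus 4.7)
The plan is a union bound over vertices: it suffices to show that for any fixed $v \in V(G)$ and $t = C \log n$ with $C$ a sufficiently large constant depending on $c$, the probability over a uniformly random $\pi$ that $v$ is still alive after round $t$ of the parallel greedy MIS process is at most $n^{-c-1}$. A union bound over the $n$ choices of $v$ then yields $\rho(G,\pi) \le t = O(\log n)$ with probability at least $1 - n^{-c}$.

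For the per-vertex bound, the strategy is to exhibit a combinatorial witness for $v$'s survival inside $(G, \pi)$ and then union-bound over witnesses. The primitive observation is that $v$ survives a round only because some alive neighbor $u$ with $\pi(u) < \pi(v)$ blocks it; and $u$ itself is alive only because $u$ has an alive smaller-ranked neighbor, and so on. Iterating this ``blocker chain'' across the $t$ rounds yields a walk $v = v_0, v_1, \ldots, v_\ell$ in $G$ along which the ranks strictly decrease, so (by rank strictness) a simple path. One then counts candidate simple paths of length $\ell$ starting at $v$ and multiplies by the probability $1/(\ell+1)!$ that a uniformly random $\pi$ induces the required monotone rank pattern on them. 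The factorial decay is what ultimately provides room for the polynomial count of candidate witnesses at $\ell = \Theta(\log n)$.

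The principal obstacle is the structural claim that $\ell = \Omega(t)$ --- i.e., that each round of survival contributes a genuinely new chain vertex on average. A naive blocker-chain extraction stalls: the same blocker $u_1$ may block $v$ for many consecutive rounds, and one cannot a priori control the rate at which new chain vertices appear. The classical Blelloch--Fineman--Shun approach handles this with a phase-based edge-decrease analysis, at the cost of a $\log n$ factor (yielding only $O(\log^2 n)$ rounds). The Fischer--Noever improvement, which I would implement, is to track a canonical blocker per round (specifically, the minimum-ranked alive neighbor at that moment) and to show that any canonical blocker is itself removed within $O(1)$ subsequent rounds, forcing the witness path to grow by a constant fraction per round and hence $\ell = \Omega(t)$.

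The delicate part of executing this plan is reconciling the structural extraction with the witness count: naive bounds like $\Delta^\ell$ or $n^\ell$ on candidate paths are too weak on dense graphs to be dominated by $1/(\ell+1)!$ at $\ell = \Theta(\log n)$. One must exploit the additional rank constraints imposed by the canonical-blocker construction --- each chain vertex $v_i$ has not just $\pi(v_i) < \pi(v_{i-1})$ but is the \emph{minimum} alive rank among $v_{i-1}$'s neighborhood at the relevant round --- to cut the effective witness count down by a further factorial factor. Combining the refined structural extraction with this tightened count gives $n^{-\Omega(t / \log \log n)}$ or $n^{-\Omega(t)}$ (depending on how the auxiliary constraints are accounted), which for $t = \Theta(c \log n)$ yields the required $n^{-c-1}$ per-vertex bound and completes the argument.
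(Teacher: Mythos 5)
The paper does not prove this statement at all: it is imported verbatim from Fischer and Noever~\cite{FischerN18} as a black box, so the only thing the paper "does" here is cite the $O(\log n)$ round-complexity bound for parallel randomized greedy MIS. Your proposal instead attempts to reprove that theorem, and as a proof it has a genuine gap: the two steps that constitute the actual content of the Fischer--Noever result are exactly the ones you defer. First, the structural claim that the canonical blocker (the minimum-ranked alive neighbor) is "itself removed within $O(1)$ subsequent rounds" is asserted, not proved, and it is not true as a deterministic statement --- that blocker can itself be blocked by a chain of smaller-ranked alive vertices for many rounds, which is precisely why the naive blocker-chain extraction stalls. Second, the witness count: you correctly observe that monotone-rank paths alone cannot work (on dense graphs the expected number of decreasing-rank paths of length $\Theta(\log n)$ is enormous, and indeed $K_n$ has decreasing paths of length $n-1$ while its round complexity is $1$), but the fix is described only as "exploit the additional rank constraints \dots to cut the effective witness count down by a further factorial factor," with the outcome left ambiguous between $n^{-\Omega(t)}$ and $n^{-\Omega(t/\log\log n)}$. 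That ambiguity is not cosmetic: the weaker bound only yields $\rho = O(\log n \log\log n)$, which is the pre-Fischer--Noever regime, not the $O(\log n)$ claimed in the proposition. So the plan names the right landmarks but does not supply the argument that distinguishes the $O(\log n)$ bound from the easier $O(\log^2 n)$ or $O(\log n\log\log n)$ bounds.

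Given that this proposition is used in the paper purely as an external ingredient (to bound the length of query paths in \Cref{lem:small-long-perms}), the appropriate "proof" here is the citation itself; if you do want a self-contained proof, you would need to reproduce the actual Fischer--Noever dependency-chain analysis, in which the probability that a fixed path is a genuine dependency chain is bounded using the conditional rank constraints at every step (not just monotonicity), and the union bound is taken over chains weighted by those constraints --- the per-vertex failure probability $n^{-c-1}$ at $t=\Theta(\log n)$ only emerges from that finer accounting.
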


Given a graph $G$, we can construct its line graph $L(G)$ by adding a vertex in $L(G)$ for each edge of $G$ and adding an edge between two vertices of $L(G)$ if their corresponding edges share an endpoint in $G$. It is easy to see that a random greedy MIS on $L(G)$ is equivalent to a random greedy maximal matching of $G$, which implies the following corollary as a result of \Cref{pro:round-complexity-mis}.

\begin{corollary}\label{cor:round-complexity-mis}
    Let $\pi$ be a permutation of the edges of a graph $G$ with $n$ vertices, drawn uniformly at random. For any constant $c$, with probability $1 - n^{-c}$, we have $\rho(L(G), \pi) = O(\log n)$.
\end{corollary}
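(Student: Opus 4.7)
The plan is to reduce the corollary directly to Proposition~\ref{pro:round-complexity-mis} by passing through the standard line-graph reformulation of matchings. Concretely, I would first observe that for any graph $G$, the random greedy maximal matching process on $G$ with a permutation $\pi$ of $E(G)$ is \emph{identical} to the random greedy maximal independent set process on the line graph $L(G)$, where $\pi$ is re-interpreted as a permutation of $V(L(G))$ (using the natural bijection $e \leftrightarrow v_e$ between edges of $G$ and vertices of $L(G)$). Under this identification, two edges of $G$ conflict (share an endpoint) if and only if the corresponding vertices of $L(G)$ are adjacent, so the parallel rounds of the MIS process on $L(G)$ coincide with the parallel rounds of the matching process on $G$. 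In particular $\rho(L(G), \pi)$ is well-defined under this permutation.

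The key point is that a uniformly random permutation of the edges of $G$ induces, via this bijection, a uniformly random permutation of the vertices of $L(G)$. Hence Proposition~\ref{pro:round-complexity-mis} applies verbatim to the graph $L(G)$ and the permutation $\pi$. The proposition gives, for any constant $c'$, a bound $\rho(L(G), \pi) = O(\log N)$ with probability at least $1 - N^{-c'}$, where $N := |V(L(G))| = |E(G)|$.

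To convert this into the statement of the corollary, I just need to account for the fact that the probability bound in Proposition~\ref{pro:round-complexity-mis} is phrased in terms of the number of vertices of the graph on which MIS is run, whereas the corollary's exponent $c$ refers to $n = |V(G)|$. Since $N = |E(G)| \le \binom{n}{2} \le n^2$, we have $\log N = O(\log n)$, which turns the round-complexity bound into $\rho(L(G), \pi) = O(\log n)$. Moreover, applying Proposition~\ref{pro:round-complexity-mis} with $c' = \lceil c/2 \rceil + 1$ yields failure probability at most $N^{-c'} \le n^{-2c'} \le n^{-c}$ (for $N \ge 2$; the edgeless case is trivial since then $\rho = 0$). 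Combining these two observations gives exactly the corollary.

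There is essentially no technical obstacle here: the proof is a one-line reduction plus a small bookkeeping step on the exponent. The only thing to be a bit careful about is ensuring that when the proposition is invoked on $L(G)$, the exponent is scaled up so that the $N^{-c'}$ bound, expressed in terms of $n$, dominates $n^{-c}$; this is handled by the choice $c' = \lceil c/2 \rceil + 1$ above.
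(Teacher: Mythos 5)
Your reduction is exactly the paper's own proof: the paper likewise observes that, under the bijection between edges of $G$ and vertices of $L(G)$, the random greedy maximal matching process on $G$ coincides with random greedy MIS on $L(G)$, and that a uniform edge permutation induces a uniform vertex permutation of $L(G)$, so \Cref{pro:round-complexity-mis} applied to $L(G)$ gives the corollary. So the approach matches.

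One caveat about your exponent bookkeeping: the inequality you write is in the wrong direction. From $N = |E(G)| \le n^2$ one gets $N^{-c'} \ge n^{-2c'}$, not $N^{-c'} \le n^{-2c'}$; the upper bound $N \le n^2$ only helps for turning $O(\log N)$ into $O(\log n)$, whereas controlling the failure probability requires a \emph{lower} bound on $N$. Concretely, if $N = \Theta(\log^2 n)$, then \Cref{pro:round-complexity-mis} applied to $L(G)$ only gives failure probability $N^{-c'} = (\log n)^{-2c'}$, which is far larger than $n^{-c}$ for any constant $c'$, and the trivial bound $\rho(L(G),\pi) \le N$ is also not $O(\log n)$, so for such sparse graphs the corollary does not follow from the proposition used as a black box. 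The clean fixes are either to note that in the intended application one may assume $N \ge n^{\Omega(1)}$ (then $c' = O(c)$ suffices, as you intended), or to invoke the Fischer--Noever analysis in its tail-bound form with round threshold $\Theta(c\log n)$ rather than through the ``high probability in $|V(L(G))|$'' statement. To be fair, the paper's one-line derivation glosses over the same point, so your proposal is on par with the paper modulo this reversed inequality.
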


Note that for any edge that appears in the solution of parallel random greedy MIS of $L(H)$, if the edge oracle queries that edge, the answer to this query is going to be consistent since they simulate the same greedy algorithm with respect to the given permutation. Now we prove that the round-complexity of random greedy MIS on the line graph of $H$ is large for bad permutations which is enough to show \Cref{lem:small-long-perms}. Let $\pi$ be a bad permutation. Hence, there exists a query path $\vec{P} = (\vec{e_1}, \vec{e_2}, \ldots, \vec{e_r})$ for this permutation such that $r > c \log n$. Let $\rho(e_i)$ be the round edge $e_i$ is removed from $L(H)$ when we run parallel randomized greedy MIS on $L(H)$ with respect to $\pi$. We claim that $\rho(e_i) > \rho(e_{i+2})$ for $1 < i < r - 1$. For the sake of contradiction, assume that $\rho(e_i) \leq \rho(e_{i+2})$. Thus, $\rho(e_i) \leq \rho(e_{i+1})$ since if $e_{i+1}$ removed from the graph before $e_i$, it means that it is removed because an edge adjacent to $e_{i+1}$ but not adjacent to $e_i$ is in MIS of $L(H)$ which implies that $e_{i+2}$ must be removed at the same or before, so $\rho(e_{i+2}) \leq \rho(e_{i+1}) < \rho(e_i)$ which is a contradiction. So it must hold that $\rho(e_i) \leq \rho(e_{i+1})$ which implies that either $e_i$ and $e_{i+1}$ are getting deleted from $L(H)$ in the same round or $e_i$ is removed before $e_{i+1}$. Hence, when we are removing $e_i$, it is not a local minimum with respect to $\pi$. Now consider the round that $e_i$ is removed from $L(H)$. Let $e_i = (u, v)$ and $v$ be the shared endpoint with $e_{i+1}$. There are three possible scenarios:
\begin{itemize}
    \item \textbf{$e_i$ is removed because another multiedge $(u,v)$ is in MIS:} Let $e' = (u, v)$ be the multiedge that is a local minimum in the round we remove $e_i$. Note that we have $\pi(e') < \pi(e_i)$. Thus, the edge oracle of $e_{i-1}$ must first queries $e'$ before $e_i$. Then, since $e'$ is in random greedy MIS of $L(H)$ it must stop the process at this point and does not query $e_i$ which is a contradiction.

    \item \textbf{$e_i$ is removed because edge $(u, w)$ is in MIS where $w$ some vertex of the graph:} Let $e' = (u, w)$. Similar to the previous case, the edge oracle for $e_{i-1}$ must first query $e'$ and since $e'$ is in the solution it must stop the process which is a contradiction.

    \item \textbf{$e_i$ is removed because edge $(v, w)$ is in MIS where $w$ some vertex of the graph:} Let $e' = (v, w)$. Similar to the previous case, the edge oracle for $e_i$ must first query $e'$ and since $e'$ is in the solution it must stop the process which is a contradiction.
\end{itemize}

Therefore, we have $\rho(e_i) > \rho(e_{i+2})$ which implies that $\rho(e_2) \geq \rho(e_4) + 1 \geq \rho(e_6) + 2 \geq \ldots \geq r/2 - 1$. Let $c$ be sufficiently large enough. Hence, we have $\rho(L(H), \pi) \geq r/2 - 1 \geq c\log n / 2 - 1$. By \Cref{cor:round-complexity-mis}, this situation occurs for at most a $1/n^2$ fraction of permutations, which completes the proof of \Cref{lem:small-long-perms}.

\subsection{Proof of \Cref{lem:low-deg-likley}} \label{sec:proof-of-low-deg-likley}

Let $y_\pi \in Y^{\vec{e}}_{\Pi}$. Also, let $x_{\pi_1}, x_{\pi_2} \in X^{\vec{e}}_{\Pi} \setminus \overline{X}^{\vec{e}}_{\Pi}$ be two vertices corresponding to permutations $\pi_1$ and $\pi_2$ such that they are connected to $y_\pi$. Let $\vec{P}_1$ and $\vec{P}_2$ be the two query paths such that $f(\pi_1, \vec{P}_1) = f(\pi_2, \vec{P}_2) = \pi$. We show that either $P_1 \subseteq P_2$ or $P_2 \subseteq P_1$. In other words, one of $P_1$ or $P_2$ must be a subpath of the other path. This is enough to complete the proof of \Cref{lem:low-deg-likley} the longest query path of all permutations in $X^{\vec{e}}_{\Pi}$ is $O(\log n)$ by the definition. So for the rest of this subsection, we focus on proving this claim.

Let $\vec{P}_1 = (\vec{e_{k_1}}, \ldots , \vec{e_2},\vec{e_1})$ and $\vec{P}_2 = (\vec{e_{k_2}}', \ldots , \vec{e_2}',\vec{e_1}')$ where we have $e_1 = e_1'=e$. For the sake of contradiction, suppose that none of the two paths is a subpath of the other one.  Because of this assumption, there exists some $i$ such that $\vec{e_j} = \vec{e_j}'$ for $j \leq i$ and $\vec{e_{i+1}} \neq \vec{e_{i+1}}'$. Without loss of generality, assume that $\pi_1(e_i) \leq \pi_2(e_i)$ (because of the symmetry up to this point in the proof). By the definition of function $f$ (\Cref{def:function-f}) and the fact that $\pi = f(\pi_1, \vec{P}_1) = f(\pi_2, \vec{P}_2)$, we have
\begin{align}
    \pi_2(e_{i+1}) = f(\pi_2, \vec{P}_2)(e_{i+1}) = f(\pi_1, \vec{P}_1)(e_{i+1}) = \pi_1(e_i). \label{eq:equal-position-perm}
\end{align}
Moreover, combining with equality \Cref{eq:equal-position-perm} and the assumption that $\pi_1(e_i) \leq \pi_2(e_i)$, we get
\begin{align*}
    \pi_2(e_{i+1}) = \pi_1(e_{i}) \leq \pi_2(e_i).
\end{align*}
On the other hand, we know $\pi_2$ is a permutation over edges of $H$ which implies that
\begin{align}
    \pi_2(e_{i+1}) < \pi_2(e_i). \label{ln:smaller-edge}
\end{align}
Let $\hat{e}$ be an arbitrary edge. We claim that if $\min(\pi_1(\hat{e}), \pi_2(\hat{e})) < \pi_1(e_i)$, then $\pi_1(\hat{e}) = \pi_2(\hat{e})$. If $\hat{e} \notin P_1 \cup P_2$, then the claim clearly holds. If $\hat{e} \in \{e_1, \ldots, e_{i-1}\}$, i.e. $\hat{e} = e_j$ for $j <i$, we have
\begin{align*}
    \pi_1(\hat{e}) = \pi_1(e_j) = f(\pi_1, \vec{P}_1)(e_{j+1}) = f(\pi_2, \vec{P}_2)(e_{j+1}) = \pi_2(e_j) = \pi_2(\hat{e}).
\end{align*}
In all other cases, $\min(\pi_1(\hat{e}), \pi_2(\hat{e})) \geq \pi_1(e_i)$.

\begin{claim}\label{clm:ei1-in-rgmm}
    Assuming that neither of $P_1$ or $P_2$ is a subpath of the other, the edge $e_{i+1}$ is in the random greedy maximal matching of $H$ with respect to permutation $\pi_2$.
\end{claim}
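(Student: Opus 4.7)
The plan is to show $e_{i+1} \in \RGMM(H, \pi_2)$ by verifying that every edge $\hat e$ incident to $e_{i+1}$ with $\pi_2(\hat e) < \pi_2(e_{i+1})$ fails to lie in $\RGMM(H, \pi_2)$; together with the already-established inequality $\pi_2(e_{i+1}) < \pi_2(e_i)$, this is enough to put $e_{i+1}$ into $\RGMM(H, \pi_2)$.

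The main enabling tool I would prove first is an agreement lemma: for every edge $\hat e$ with $\min(\pi_1(\hat e), \pi_2(\hat e)) < \pi_1(e_i)$, $\hat e \in \RGMM(H, \pi_1)$ if and only if $\hat e \in \RGMM(H, \pi_2)$. This follows by induction on the common rank of $\hat e$, leveraging the claim already proved above that $\pi_1(\hat e) = \pi_2(\hat e)$ for every such $\hat e$: the set of smaller-rank neighbors of $\hat e$, together with all of their ranks, is then identical under $\pi_1$ and $\pi_2$, so the inductive RGMM-decisions coincide.

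I would then fix notation for the two endpoints of $e_{i+1}$: let $y$ be the endpoint shared with $e_i$ (so that $\EO(e_{i+1}, y, \pi_1)$ is the call sitting on the stack inside $P_1$), and let $z$ be the other endpoint, either shared with $e_{i+2}$ when $i+1 < k_1$ or else the starting vertex of the outer $\VO$ call. For any candidate $\hat e$ incident to $e_{i+1}$ with $\pi_2(\hat e) < \pi_2(e_{i+1}) = \pi_1(e_i)$, the established property gives $\pi_1(\hat e) = \pi_2(\hat e) < \pi_1(e_i)$, and since ranks along $P_1$ decrease toward $e_1$ we also obtain $\pi_1(\hat e) < \pi_1(e_{i+1})$. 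Splitting on which endpoint carries $\hat e$: if $\hat e$ is at $y$, the iteration inside $\EO(e_{i+1}, y, \pi_1)$ must have processed $\hat e$ (whose $\pi_1$-rank is below $\pi_1(e_i)$) before reaching $e_i$, and the fact that the path continues past $e_i$ means every such earlier call returned false, so $\hat e \notin \RGMM(H, \pi_1)$; if $\hat e$ is at $z$, the same reasoning applied to the parent call $\EO(e_{i+2}, z, \pi_1)$ (or to the outer $\VO$ when $i+1 = k_1$) yields $\hat e \notin \RGMM(H, \pi_1)$, using that $\pi_1(\hat e) < \pi_1(e_{i+1})$ forces $\hat e$ to be iterated before $e_{i+1}$. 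The agreement lemma now transfers both conclusions to $\RGMM(H, \pi_2)$.

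The main obstacle is unpacking the query-path snapshot into per-endpoint iteration data and then chaining it with the agreement lemma: the two endpoints of $e_{i+1}$ are handled asymmetrically --- one by $e_{i+1}$'s own call, the other by its parent in the recursion --- and the argument relies crucially on the strict inequality $\pi_1(e_i) < \pi_1(e_{i+1})$ coming from the decreasing-rank structure of query paths, which guarantees that $\hat e$ is already past during the relevant iteration. Everything else is essentially bookkeeping.
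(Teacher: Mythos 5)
Your proposal is correct and takes essentially the same route as the paper: the paper argues by contradiction, assuming a $\pi_2$-blocker $\hat e$ of $e_{i+1}$ exists, transferring it into $\RGMM(H,\pi_1)$ via the rank agreement $\pi_1(\hat e)=\pi_2(\hat e)$ below rank $\pi_1(e_i)$, and showing this would invalidate the query path $\vec{P}_1$, while you run the identical ingredients in the contrapositive direction, using the validity of $\vec{P}_1$ (and the decreasing ranks along it) to conclude that every low-rank neighbor of $e_{i+1}$ at either endpoint is unmatched under $\pi_1$ and hence under $\pi_2$. Your explicit ``agreement lemma'' proved by induction on rank is exactly the paper's one-line observation that the two permutations coincide on all edges of rank below $\pi_1(e_i)$, so the greedy decisions there agree.
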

\begin{proof}
    If $e_{i+1}$ is not in the \RGMM{} of permutation $\pi_2$, there must exists some edge $\hat{e}$ that blocks $e_{i+1}$ from being in \RGMM{} such that $\pi_2(\hat{e}) < \pi_2(e_{i+1})$. Combining with equality \Cref{eq:equal-position-perm}, we obtain $\pi_2(\hat{e}) < \pi_1(e_i)$. Since $\min(\pi_1(\hat{e}), \pi_2(\hat{e})) < \pi_1(e_i)$, we have $\pi_1(\hat{e}) = \pi_2(\hat{e})$. Thus, $\pi_1(\hat{e}) < \pi_1(e_i)$. Because both $\pi_1$ and $\pi_2$ are similar up to ranking $\pi(e_i)$, then edge $\hat{e}$ must also be in \RGMM{} of permutation $\pi_1$.

    Let $\vec{e_{i+1}} = (u, v)$. There are three possible scenarios for $\hat{e}$:
    \begin{itemize}
        \item \textbf{$\hat{e} = (u, v)$, i.e. is one of the multiedges between $u$ and $v$ similar to $e_{i+1}$:} In this case, either $\VO(u, \pi_1)$ or $\EO(e_{i+2}, u, \pi_1)$ query $\EO(\hat{e}, v, \pi_1)$ before $\EO(e_{i+1}, v, \pi_1)$ since $\pi_1(\hat{e}) < \pi_1(e_{i+1})$. Since $\hat{e}$ is in \RGMM{} of permutation $\pi_1$, the process terminates and $\vec{P}_1$ is not a valid query path.
 
        \item \textbf{$\hat{e} = (u, w)$ for some vertex $w$:} In this case, either $\VO(u, \pi_1)$ or $\EO(e_{i+2}, u, \pi_1)$ query $\EO(\hat{e}, w, \pi_1)$ before $\EO(e_{i+1}, v, \pi_1)$ since $\pi_1(\hat{e}) < \pi_1(e_{i+1})$. Since $\hat{e}$ is in \RGMM{} of permutation $\pi_1$, the process terminates and $\vec{P}_1$ is not a valid query path.

        \item \textbf{$\hat{e} = (v, w)$ for some vertex $w$:} In this case, either $\EO(e_{i+1}, v, \pi_1)$ query $\EO(\hat{e}, w, \pi_1)$ before $\EO(e_{i}, \cdot, \pi_1)$ since $\pi_1(\hat{e}) < \pi_1(e_i)$. Since $\hat{e}$ is in \RGMM{} of permutation $\pi_1$, the process terminates and $\vec{P}_1$ is not a valid query path.
    \end{itemize}
Therefore, $e_{i+1}$ is in the random greedy maximal matching of $H$ with respect to permutation $\pi_2$.
\end{proof}

Now we are ready to prove the contradiction which completes the proof of \Cref{lem:low-deg-likley}. By inequality \Cref{ln:smaller-edge}, edge oracle $\EO(e_{i+1}', \cdot, \pi_2)$ must query $e_{i+1}$ before $e_{i}$. Also, by \Cref{clm:ei1-in-rgmm}, edge $e_{i+1}$ is in the random greedy maximal matching of $H$ with respect to permutation $\pi_2$, which means that $P_2$ is not a valid query path.

\subsection{Implementation of RGMM for Set Cover with Our Access Model} \label{sec:implementation-of-rgmm}

In this section, we demonstrate how we can change the \RGMM{} algorithm to work with our access model in the set cover problem instead of having access to the adjacency list. The challenge arises when the algorithm needs to find a random neighbor of a vertex 
$v$ (which corresponds to an element in the set cover problem) in graph $H$. Naively, the algorithm queries all sets in $\hat{\mc{S}}$ and finds all sets that cover the element corresponding to $v$. Then, for each of the sets, it queries all elements in $\mc{U}_{low}$ to find all neighbors of $v$ in $H$.

\rgmmlemma*

\begin{proof}
    First, it is important to mention that the number of vertices in $H$ is $k$. In order to estimate $E_{\pi} |\RGMM(H, \pi)|$ with $\epsilon k$ additive error, it is sufficient to run the vertex oracle for $\Theta(1/\epsilon \cdot \poly \log k))$ random vertices and permutations. Then, using a Chernoff bound, it is easy to show that we can estimate  $E_{\pi} |\RGMM(H, \pi)|$ with $\epsilon k$ additive error. 

    In order to simulate the oracles in our access model and use \Cref{cor:reduction-cost-neighbor}, we demonstrate how we can find a random neighbor of a vertex $v$ in $H$. Consider the first time that the algorithm needs to find a random neighbor of $v$. We first query all sets in $\hat{\mc{S}}$ to identify those that contain the element corresponding to $v$. Let $\hat{\mc{S}}_v$ be the collection of sets that include $v$. This step takes $O(n)$ time since $|\hat{\mc{S}}| \leq n$. By \Cref{lem:element-sparsification}, we have $|\hat{\mc{S}}_v| \leq \wt{O}(\beta)$ with high probability. Now consider all pairs of $(u, S)$ where $v \in \mc{U}_{low}$ and $S \in \hat{\mc{S}}_v$. There are at most $\wt{O}(k\beta)$ such pairs. We start to query these pairs randomly until finding an element that exists in one of the sets of $\hat{\mc{S}}_v$. Note that in expectation, the algorithm needs to make $T(v) = \wt{O}(k\beta / \deg_H(v))$ queries to find such an element which is a neighbor of $v$ in $H$. Therefore, using a concentration inequality such as a Chernoff bound, with high probability the algorithm finds such an element in $T(v) \cdot \poly \log n = \wt{O}(k\beta / \deg_H(v))$ attempts. Also, since the algorithm makes all the queries in random order, it has the same probability of seeing any edges adjacent to $v$.

    Therefore, each time that the algorithm needs to find a neighbor of vertex $v$ in $H$, it spends at most $\wt{O}(n + k\beta / \deg_H(v))$ time. Now by \Cref{cor:reduction-cost-neighbor}, the expected time to run the vertex oracle for a random vertex and a random permutation is
    \begin{align*}
        \sum_v \wt{O}\left(\frac{T(v) \cdot \deg_H(v)}{k}\right) 
        = \sum_v \wt{O}\left(\frac{(n + k\beta / \deg_H(v)) \cdot \deg_H(v)}{k}\right)
        & = \sum_v \wt{O}\left( \frac{n\cdot \deg_H(v)}{k} + \beta \right)\\
        & = \wt{O}\left(k\beta + n\cdot \sum_v \frac{\deg_H(v)}{k}\right)\\
        & = \wt{O}(k\beta + n \bar{d}),
    \end{align*}
    where $\bar{d}$ is the average degree of multigraph $H$.

    On the other hand, each set in $\hat{\mc{S}}_v$ has at most $\wt{O}(\alpha)$ neighbors in $\mc{U}_{low}$ by \Cref{lem:set-sparsification-gaur}. Therefore, vertex $v$ has at most $\wt{O}(\alpha \cdot \beta)$ neighbors in $H$ since $|\hat{\mc{S}}_v| \leq \wt{O}(\beta)$. Thus, we have $\bar{d} \leq \wt{O}(\alpha \cdot \beta)$ which completes the proof. 
\end{proof}

\section{Connection to Steiner Tree}\label{sec:steiner-tree}
In this section, we show how our improved algorithm for set cover (from Section~\ref{sec:set-cover}) implies an improved sublinear algorithm for metric Steiner tree. Formally, we show the following theorem.

\maintheoremsteiner*

The overall structure of our algorithm is similar to the algorithm of Chen, Khanna and Tan~\cite{chen2023query}. The main difference in our algorithm compared to~\cite{chen2023query} is in the set cover component.
In the following, we first provide an overview of their algorithm in \Cref{sec:steiner-highlevel}. We then provide the query-efficient implementation of their algorithm and our modification to it in \cref{sec:setp-2-query}. We will finally provide the query complexity analysis and the proof of \cref{thm:steiner-tree} in \cref{sec:steiner-analysis}. Note that the approximation analysis of our algorithm follows directly from the proof of Theorem 3 in~\cite{chen2023query}.

\subsection{Algorithm at a High Level}\label{sec:steiner-highlevel}
\paragraph{Step 1: Minimum spanning tree over terminals.}
The algorithm of~\cite{chen2023query} as a first step starts with an MST $\mathcal{T}^*$ over the terminals $T$ (whose cost can be estimated in nearly linear time using the sublinear MST algorithm of Czumaj and Sohler~\cite{czumaj2009estimating}).
It is known that $w(\mathcal{T}^*) /2 \le \ST(V,T,w) \le w(\mathcal{T}^*)$. To get a strictly better-than-$2$ approximation of $\ST(V,T,w)$, it suffices to detect whether $\ST(V, T, w)$ is closer to $w(\mathcal{T}^*)/2$ or $w(\mathcal{T}^*)$. Hence, the rest of the algorithm is either to provide ``significant'' local improvements over $w(\mathcal{T}^*)$ using ``set cover'' like structure (i.e., step 2 in Section 4 of~\cite{chen2023query}) or ``local structure'' (i.e., step 3 in Section 4 of~\cite{chen2023query}), and thus output $(1-O(\eta)) w(\mathcal{T}^*)$ as the estimate of $\ST(V, T, w)$; or conclude that $\ST(V, T, w)$ is closer to $w(\mathcal{T}^*)$ and output it as the estimate of $\ST(V, T, w)$. Then, they show how to implement these steps using sublinear queries to~$\mathcal{O}$.

\paragraph{Step 2: Improvement using set cover.} 
First, they partition the edges  into $L=O((\log k)/\varepsilon)$ buckets such that the edges of the $i$th bucket have weights in $[(1+\varepsilon)^{i-1}, (1+\varepsilon)^{i})$. Let $H_i$ be the graph built on all the terminals and all the edges upto the $i$th bucket. They define an instance of set cover corresponding to each level $i$ where \emph{ideally}, 
\begin{itemize}
    \item The elements correspond to the connected components of $H_{i-1}$.
    \item The sets correspond to the Steiner vertices.
    \item A set $W_v$ (corresponding to a Steiner vertex $v$) contains an element $u_S$ (corresponding to a component $S$) if the distance between $v$ and some terminal in $S$ is less than a threshold $\tau$ (think of it as $\frac{3}{5} \cdot (1+\varepsilon)^i$). 
\end{itemize}

\paragraph{How to use set cover in making a decision about the Steiner tree cost.} They show that if one can solve each of these set cover instances approximately, then one can check whether the total contribution of these set cover improvements is more than $O(\eta)\cdot w(\mathcal{T}^*)$. In particular, in that case $\ST(V, T, w)$ is strictly less than $(1-O(\eta)) \cdot w(\mathcal{T}^*)$. Intuitively, this is because one can include the Steiner vertices corresponding to the set cover solution and remove a subset of the edges in $\mathcal{T}^*$, while still maintaining a feasible solution to the Steiner tree instance. Hence, this implies that the cost of the constructed solution of the Steiner tree instance  is less than $(1-O(\eta)) \cdot w(\mathcal{T}^*)$.

\paragraph{A challenge and the notion of representatives.} The main challenge with the above algorithm is computing the set cover instance. More precisely, in the third bullet point above, in order to check whether a set (corresponding to a Steiner vertex $v$) contains an element (corresponding to a connected component $S$), they need to compute the distance of $v$ to all terminals $t\in T$ which could be very costly. Instead, they define a \emph{net} on $S$ which is a maximal subset $\tilde{S}\subseteq S$ such that any pair of terminals in $\tilde{S}$ has distance at least $\varepsilon \cdot (1+\varepsilon)^i$. They call the terminals in $\tilde{S}$ the \emph{representatives}. 

\paragraph{Modified set cover instance and the notion of light/heavy levels.} Now, to detect if a set contains an element, we only need to check the distance of $v$ to all terminals in $\tilde{S}$. When $|\tilde{S}|$ is small, this can be done efficiently. So, in their algorithm they \emph{only} assign an element to a connected component $S$ if the size of its representatives is \emph{small}. To show that this does not introduce a large error, they define a level $i$ to be \emph{light} if the total sum of the edges of $\mathcal{T}^*$ in bucket $i$ is ``small'', and define it to be \emph{heavy} otherwise. They show that one can ignore all the levels $i$ that are light, and moreover if a level is heavy, then most of its components have small sets of representatives and thus the error introduced by ignoring the components $S$ with large net size is negligible.

Finally, we note that the notion of representatives will be used in other parts of the overall algorithm such as computing $\mathcal{T}^*$ which we will go over when describing the implementation of this step.

\paragraph{Step 3: Improvement using ``local structure''.}
In this step, they consider the hierarchical structure of the connected components. Specifically, they focus on components $S$ that have exactly two child components, $S_1$ and $S_2$, where each of these child components also has exactly two child components: $S_{11}$, $S_{12}$ for $S_1$ and $S_{21}$, $S_{22}$ for $S_2$. Then, they check whether there exists a single Steiner vertex $v$ that can be used to connect components $S_{11}, S_{12}, S_{21}, S_{22}$ and instead remove the corresponding edges in $\mathcal{T}^*$ connecting these components. Similarly to step 2, if the overall advantage of all these 2-level local improvements is more than $O(\eta)\cdot w(\mathcal{T}^*)$, then the algorithm outputs $(1-O(\eta))\cdot w(\mathcal{T}^*)$ as its estimate of $\ST(V, T,w)$.

Given that our algorithm does not change this step at all, we refer the reader to~\cite{chen2023query} for further details. Moreover, the query complexity is exactly the same as in~\cite{chen2023query}.

\subsection{Implementation of the Algorithm}\label{sec:setp-2-query}
Here, we focus on a query-efficient implementation of the algorithm, and particularly highlight where the set cover component was used and how we modify it.

First, we note that one cannot compute $H_i$ exactly, so~\cite{chen2023query} shows that it suffices to work with an approximate graph $H'_i$ such that $H_i \subseteq H'_{i} \subseteq H_{i+1}$. 

\paragraph{Subroutines.} Next, \cite{chen2023query}~define some useful subroutines for simulating the set cover instance on $H'_i$:
\begin{itemize}
    \item{{\sf Find}($u,i$):} This receives a terminal $u$ and a level $i$, and finds some representative terminal $u'$ in the same connected component of $H'_i$ that contains $u$. Moreover, they show that this subroutine can be implemented using $\wt{O}(k)$ queries.
    \item{{\sf BFS}($u,i$):} This subroutine reports all representative terminals that are in the same connected component in $H'_i$ as $u$, if the total number of such representatives is $\wt{O}(L/\varepsilon) = \wt{O}(1/\varepsilon)$. Otherwise, the procedure is terminated. This procedure employs {\sf Find} subroutines and has total query complexity of $\wt{O}(k/\varepsilon)$ which is $\wt{O}(k)$ given that $\varepsilon$ is a constant.
\end{itemize}

\paragraph{Parameters.} The algorithm uses four parameters, whose values will be later set to optimize the query complexity of the algorithm.

\begin{itemize}
    \item $M$ is a threshold parameter used on the number of components that contain a ``small'' number of representatives. Note that these are the components to which we assign an element in the universe $\sU_i$ of the  corresponding set cover instance $(\sU_i, \sF_i)$.
    \item $R$ is a threshold parameter defining ``low-degree'' and ''high-degree'' elements.
    \item $P$ is a threshold parameter denoting ``low-degree'' or ``high-degree'' sets.
    \item $\kappa$ is a threshold parameter on the value of $k$. At a high level, when $k<\kappa$, we can afford to query all distances between terminals and the Steiner vertices. 
\end{itemize}

\paragraph{Simulation of Step 2 and its query complexity.}
We now outline the implementation of Step 2 and specify the query complexity of each step, along with potential conditions they impose on the parameters we need to set.

\begin{itemize}[leftmargin=*]
    \item {\bf The case of small number of terminals: if} $k \le \kappa$, then we query all distances between terminals and Steiner vertices, which takes $O(n\kappa)$ queries. Then, we estimate $|\sU| - \SC(\sU, \sF)$ using our algorithm, but without any further queries.
    \item {\bf Otherwise, } for each level $i$, they show that one of the following cases hold:  
        
        \paragraph{Case 1:} The total number of representative terminals in all connected components is $\wt{O}(M/\varepsilon)$. 
        To detect this case, they use greedy MIS which can be implemented by the {\sf BFS} and {\sf Find} subroutines and will take $\wt{O}(Mk/\varepsilon)$  queries (for further details, see~\cite{YoshidaYISTOC09}). If this is the case, then again the set cover instance can easily be computed by querying the distance of all Steiner vertices to all representative terminals which requires $\wt{O}(nM/\varepsilon)$ queries. Then, similarly to the case of $k \le \kappa$, $|\sU_i|-\SC(\sU_i, \sF_i)$ can be estimated using our algorithm, without any further queries.

        \paragraph{Case 2:} $|\sU_i|\leq M$. In this case they show that level $i$ is in fact light and thus can be ignored. To detect this case, they estimate the size of $|\sU_i|$ using calls to BFS starting from $\wt{O}(k/M)$ random terminals, which overall takes $\tilde O(k^2/M)$ queries. 
        Note that   if we are in this case, we take no further action. Also, this step requires $k > M$, which we will ensure in our parameter setup.

        \paragraph{Case 3:} The last case is when $|\sU_i|\geq M$. 
        \begin{itemize}[leftmargin=*]
            \item {\bf Partitioning of the terminals based on their degree.} First, they partition the terminals into $T_{low}$ and $T_{high}$ based on whether the number of ``close-by'' Steiner vertices (roughly within distance $(3/5)(1+\varepsilon)^i$) to them is smaller than or larger than $R$. This partitioning can be computed using $\wt{O}(kn/R)$ queries by randomly sampling $\wt{O}(n/R)$ Steiner vertices and checking their distance to all the terminals.
            \item {\bf Handling high-degree terminals.} Then, by picking $\wt{O}(n/R)$ sets uniformly at random, with high probability, all elements corresponding to the components containing at least one terminal in $T_{high}$ are covered. As they can only afford an $\varepsilon |\sU_i|$ additive error in their estimate of $|\sU_i| - \SC(\sU_i, \sF_i)$, they require that $n /R < \wt{O}(\varepsilon M) = \wt{O}(\varepsilon |\sU_i|)$.   
            \item {\bf Handling low-degree terminals.} Next, they solve the set cover instance on $\sU_{low}$, i.e., the connected components that have no terminal in $T_{high}$.
            \begin{itemize}[label=$\bullet$, leftmargin=*]
                \item {\bf Partitioning of the Steiner vertices based on their degree.} They partition the sets of $\sF_i$ into $\sW_1$ and $\sW_2$ based on whether their degree to $T_{low}$ is less than $\Theta(P)$ or higher. This partitioning can be computed using $\wt{O}(nk/P)$ queries by randomly sampling $k/P$ terminals from $T_{low}$. This requires $k> \wt{\Omega}(P)$ which we will ensure in our parameter setup. Then, they consider the set cover instances $(\sU_{low},\sW_1)$ and $(\sU_{low}, \sW_2)$ separately, and return the better of the two solutions.
                
                \medskip
                {\bf Our modification to this step:} In our set cover algorithm, however, we define $\sW_2$ slightly differently, as described in Set Sparsification, see~\Cref{alg:sparsification-sets}. 
                More precisely, we iterate over Steiner vertices (i.e., sets in $\sF_i$) one by one in an arbitrary order, and at every round $j \le |\sF_i|$, we check whether the degree of the Steiner vertex $v_j$ to $T_{low}$ is more than $\Theta(P)$. If so, we add its corresponding set, $W_j$, to $\sW_2$. 
                Similarly to their test, our test can also be implemented using $\wt{O}(nk/P)$ queries. However, each time we add a set $W_j$ to $\sW_2$, we find all its ``nearby'' terminals and remove them from $T_{low}$, more precisely, $T_{low} \leftarrow T_{low} \setminus W_j$. This step can be simply done by querying the distance of the Steiner vertex $v_j$ and all terminals in $T_{low}$. Hence, each time a set is added to $\sW_2$, we perform an extra $\wt{O}(k)$ queries compared to the algorithm of~\cite{chen2023query}. However, as we can simply bound $|\sW_2|$ by $k/P$, the overall query complexity remains as $\wt{O}(nk/P + k^2/P) = \wt{O}(nk/P)$.  
                
                \medskip
                \item {\bf Handling high-degree Steiner vertices.} To solve $(\sU_{low},\sW_2)$, note that by a simple double-counting argument, $|\sW_2|\leq kR/P$. So if $kR/P\leq \varepsilon M\leq \varepsilon |\sU_i|$, which will be ensured in the parameter setting, we can afford to pick all sets in $\sW_2$ and thus, similarly to their argument, we only need to estimate $|\bigcup_{W\in \sW_2} W|$ in the set cover instance. This is done by randomly sampling the terminals and using BFS and will take an overall $\wt{O}(k^2 / M)$ queries. 
                
                \medskip
                {\bf Our modification to this step:} With the adjusted partitioning of $\sF_i$ into $\sW_1$ and $\sW_2$ in our algorithm, the size of $\sW_2$ is at most $k/P$. Therefore, by setting $k/P \leq \varepsilon M \le \varepsilon |\sU_i|$, we can afford to select all sets in $\sW_2$. Notably, this modification relaxes the required condition of~\cite{chen2023query} from ``$kR/P \le \varepsilon M$'' to ``$k/P \le \varepsilon M$''.
                
                \medskip
                \item {\bf Handling low-degree Steiner vertices.} Finally, we need to solve $(\sU_{low}, \sW_1)$. In their approach, this part takes $O(RP\cdot RPk)$ queries, and this is the step where our main improvement comes from.

                \medskip
                {\bf Our modification to this step.} By our improved bound for set cover (from Section~\ref{sec:set-cover}), the query complexity of this part reduces to $\wt{O}(k^2/M + RP (n + k))$. More precisely, to simulate the algorithm in \Cref{lem:rgmm}, we do the following.
                \begin{enumerate}
                    \item To run the \RGMM{} oracle, we need to sample $\wt{O}(1)$ elements from $\sU_i$ uniformly at random. Note that each element of $\sU_i$ corresponds to a small component (a component with a small number of representatives). To find a small component uniformly at random, we first pick a terminal uniformly at random and run a BFS to determine whether it lies in a small component, and if so, whether it is a representative terminal.
                    If the terminal is a representative and lies in a small component, we choose the corresponding connected component with probability $1/z$ where $z$ is the number of representative terminals in that connected component (note that BFS returns this number  as well). This approach ensures that each small connected component has an equal probability of being sampled. Moreover, due to the bound on the number of small connected components, i.e., $|\sU_i|\geq M$, we expect to encounter a terminal in a small connected component every $\wt{O}(k/M)$ samples. Therefore, the total cost of running all these BFS subroutines is $\wt{O}(k^2/M)$, since each BFS takes $\wt{O}(k)$ time.
                    
                    \item For a small component (a vertex in graph $H$ of \Cref{lem:rgmm}), we need to identify all the Steiner nodes within a distance of at most $\tau$ (which is set roughly as $(3/5) (1+\varepsilon)^i$). This step can be completed in $\wt{O}(n)$ time. A similar step with the same time complexity also appears in the proof of \Cref{lem:rgmm}.
                    
                    \item Note that the number of Steiner nodes within this close distance is at most $R$. Let $\hat{\mc{S}}$ be the set of these Steiner nodes. Next, the \RGMM{} algorithm requires a random neighbor of a small component. To get such a neighbor, we keep picking pairs $(v,t)$ in a random order, where $t\in T_{low}$ and $v\in \hat{\mc{S}}$, and querying their distance. The first time that we find a pair $(v,t)$ in close distance, we run a BFS from $t$ to check if it is a representative terminal and if it lies in a small component, which takes $\wt{O}(k)$ time. If it is not a representative terminal or does not lie in a small connected component, we skip this terminal. Otherwise, we return its small connected component with probability $1/z$, where $z$ is the number of representative terminals in that connected component, ensuring that all neighbors have an equal probability of being selected. We run the above procedure until we find a random neighbor. Therefore, using the running time from \Cref{lem:rgmm} (substituting $\alpha$ and $\beta$ for $R$ and $P$), and considering that we run the BFS at most $\wt{O}(RP)$ times (which corresponds to the maximum degree of $H$), we obtain an $\wt{O}(RP (n + k))$ time algorithm to estimate the size of the matching.

                \end{enumerate}
            \end{itemize}
        \end{itemize} 
\end{itemize}

\paragraph{Simulation of Step 3 and its query complexity.} 
As we are following the exact implementation of~\cite{chen2023query} for this step, the additional query complexity of this step (compared to the Step 2) is equal to $\wt{O}(nk/M)$ for both their algorithm and our algorithm.

\subsection{Query Complexity Analysis and Proof of Theorem \ref{thm:steiner-tree}}\label{sec:steiner-analysis}
For completeness, we start with the query complexity analysis of~\cite{chen2023query}.

\paragraph{Analysis of the query complexity of the algorithm of~\cite{chen2023query}.} As computed in~\Cref{sec:setp-2-query}, the overall query complexity of their algorithm is bounded by
\begin{align*}
 &\wt{O}(n\kappa + \frac{Mk}{\varepsilon} + \frac{nM}{\varepsilon} + \frac{k^2}{M} + \frac{nk}{R} + \frac{nk}{P} + \frac{k^2}{M} + (RP)^2 k + \frac{nk}{M}) \\
 = \;&\wt{O}(n\kappa + nM + \frac{nk}{R} + \frac{nk}{P} + (RP)^2 k + \frac{nk}{M}). &&\rhd \varepsilon =O(1), k\le n
\end{align*}

\noindent Furthermore, the conditions that need to be satisfied are 
\begin{itemize}
    \item $k\leq \kappa$, or 
    \item $k>M$ and $n/R<\wt{O}(\varepsilon M)$ and $k > \wt{\Omega}(P)$ and $kR/P\leq \varepsilon M$.  
\end{itemize}

\noindent The query complexity of their algorithm under the above conditions can be optimized by setting $\kappa = M = n^{6/7}$, $R = n^{1/7}$, and $P = n^{2/7}$, which gives the total query complexity of $\wt{O}(n^{13/7})$.

Now, we prove the main theorem of this section.
\begin{proof}[Proof of~\Cref{thm:steiner-tree}]
    As we are implementing the same algorithm as~\cite{chen2023query}, except replacing their set cover subroutine with a more efficient algorithm, the approximation analysis follows exactly from their proof. It only remains to bound the query complexity of our proposed algorithm for metric Steiner tree using the improved sublinear algorithm for set cover. In~\Cref{sec:setp-2-query}, we analyzed the query complexity of the component that is implemented differently in our algorithm. Now, we put the query complexity of all parts together and compute the overall complexity. 

    \paragraph{Analysis of the query complexity of our algorithm.} The overall query complexity of our algorithm is bounded by
    \begin{align*}
        &\wt{O}(n\kappa + \frac{Mk}{\varepsilon} + \frac{nM}{\varepsilon} + \frac{k^2}{M} + \frac{nk}{R} + \frac{nk}{P} + \frac{k^2}{M} + RP (k+n) + \frac{nk}{M}) \\
        =\;&\wt{O}(n\kappa + nM + \frac{nk}{R} + \frac{nk}{P} + RPn + \frac{nk}{M}) &&\rhd \varepsilon = O(1), k\le n
    \end{align*}

    Note again that the main difference is that the term $P^2R^2k$ is replaced by $RPn$. 
    Furthermore, the conditions that need to be satisfied are also slightly more relaxed, as follows:
    \begin{itemize}
        \item $k\leq \kappa$, or 
        \item $k>M$ and $n/R<\wt{O}(\varepsilon M)$ and $k > \wt{\Omega}(P)$ and $k/P\leq \varepsilon M$.
    \end{itemize}
    Specifically, ``$kR/P \le \varepsilon M$'' is replaced by ``$k/P \leq \varepsilon M$''.
    Then, our algorithm can be optimized by setting $\kappa = M = n^{2/3}, R = \wt{\Theta}(P) = \wt{\Theta}(n^{1/3})$ which gives the total query complexity of $\wt{O}(n^{5/3})$.
\end{proof}

\section*{Acknowledgment} The work was conducted in part while Sepideh Mahabadi and Ali Vakilian were long-term visitors at the Simons Institute for the Theory of Computing as part of the Sublinear Algorithms program.

\bibliographystyle{plainnat}
\bibliography{references}

\begin{thebibliography}{44}
\providecommand{\natexlab}[1]{#1}
\providecommand{\url}[1]{\texttt{#1}}
\expandafter\ifx\csname urlstyle\endcsname\relax
  \providecommand{\doi}[1]{doi: #1}\else
  \providecommand{\doi}{doi: \begingroup \urlstyle{rm}\Url}\fi

\bibitem[Assadi(2017)]{assadi2017tight}
Sepehr Assadi.
\newblock Tight space-approximation tradeoff for the multi-pass streaming set cover problem.
\newblock In \emph{Proceedings of the 36th ACM SIGMOD-SIGACT-SIGAI Symposium on principles of database systems}, pages 321--335, 2017.

\bibitem[Assadi et~al.(2016)Assadi, Khanna, and Li]{assadi2016tight}
Sepehr Assadi, Sanjeev Khanna, and Yang Li.
\newblock Tight bounds for single-pass streaming complexity of the set cover problem.
\newblock In \emph{Proceedings of the forty-eighth annual ACM symposium on Theory of Computing}, pages 698--711, 2016.

\bibitem[Awerbuch et~al.(2004)Awerbuch, Azar, and Bartal]{awerbuch2004line}
Baruch Awerbuch, Yossi Azar, and Yair Bartal.
\newblock On-line generalized steiner problem.
\newblock \emph{Theoretical Computer Science}, 324\penalty0 (2-3):\penalty0 313--324, 2004.

\bibitem[Azarmehr et~al.(2024)Azarmehr, Behnezhad, and Roghani]{AzarmehrBR24}
Amir Azarmehr, Soheil Behnezhad, and Mohammad Roghani.
\newblock Fully dynamic matching: -approximation in polylog update time.
\newblock In David~P. Woodruff, editor, \emph{Proceedings of the 2024 {ACM-SIAM} Symposium on Discrete Algorithms, {SODA} 2024, Alexandria, VA, USA, January 7-10, 2024}, pages 3040--3061. {SIAM}, 2024.
\newblock \doi{10.1137/1.9781611977912.109}.
\newblock URL \url{https://doi.org/10.1137/1.9781611977912.109}.

\bibitem[Bateni et~al.(2017)Bateni, Esfandiari, and Mirrokni]{bateni2017almost}
MohammadHossein Bateni, Hossein Esfandiari, and Vahab Mirrokni.
\newblock Almost optimal streaming algorithms for coverage problems.
\newblock In \emph{Proceedings of the 29th ACM Symposium on Parallelism in Algorithms and Architectures}, pages 13--23, 2017.

\bibitem[Behnezhad(2021)]{Behnezhad21}
Soheil Behnezhad.
\newblock Time-optimal sublinear algorithms for matching and vertex cover.
\newblock In \emph{62nd {IEEE} Annual Symposium on Foundations of Computer Science, {FOCS}}, pages 873--884, 2021.

\bibitem[Behnezhad et~al.(2023{\natexlab{a}})Behnezhad, Roghani, and Rubinstein]{BehnezhadRR23a}
Soheil Behnezhad, Mohammad Roghani, and Aviad Rubinstein.
\newblock Sublinear time algorithms and complexity of approximate maximum matching.
\newblock In Barna Saha and Rocco~A. Servedio, editors, \emph{Proceedings of the 55th Annual {ACM} Symposium on Theory of Computing, {STOC} 2023, Orlando, FL, USA, June 20-23, 2023}, pages 267--280. {ACM}, 2023{\natexlab{a}}.
\newblock \doi{10.1145/3564246.3585231}.
\newblock URL \url{https://doi.org/10.1145/3564246.3585231}.

\bibitem[Behnezhad et~al.(2023{\natexlab{b}})Behnezhad, Roghani, and Rubinstein]{BehnezhadRR23b}
Soheil Behnezhad, Mohammad Roghani, and Aviad Rubinstein.
\newblock Local computation algorithms for maximum matching: New lower bounds.
\newblock In \emph{64th {IEEE} Annual Symposium on Foundations of Computer Science, {FOCS} 2023, Santa Cruz, CA, USA, November 6-9, 2023}, pages 2322--2335. {IEEE}, 2023{\natexlab{b}}.
\newblock \doi{10.1109/FOCS57990.2023.00143}.
\newblock URL \url{https://doi.org/10.1109/FOCS57990.2023.00143}.

\bibitem[Behnezhad et~al.(2023{\natexlab{c}})Behnezhad, Roghani, Rubinstein, and Saberi]{BehnezhadRRS-SODA23}
Soheil Behnezhad, Mohammad Roghani, Aviad Rubinstein, and Amin Saberi.
\newblock Beating greedy matching in sublinear time.
\newblock In Nikhil Bansal and Viswanath Nagarajan, editors, \emph{Proceedings of the 2023 {ACM-SIAM} Symposium on Discrete Algorithms, {SODA} 2023}, pages 3900--3945. {SIAM}, 2023{\natexlab{c}}.

\bibitem[Behnezhad et~al.(2024{\natexlab{a}})Behnezhad, Roghani, and Rubinstein]{BehnezhadRR24}
Soheil Behnezhad, Mohammad Roghani, and Aviad Rubinstein.
\newblock Approximating maximum matching requires almost quadratic time.
\newblock In Bojan Mohar, Igor Shinkar, and Ryan O'Donnell, editors, \emph{Proceedings of the 56th Annual {ACM} Symposium on Theory of Computing, {STOC} 2024, Vancouver, BC, Canada, June 24-28, 2024}, pages 444--454. {ACM}, 2024{\natexlab{a}}.
\newblock \doi{10.1145/3618260.3649785}.
\newblock URL \url{https://doi.org/10.1145/3618260.3649785}.

\bibitem[Behnezhad et~al.(2024{\natexlab{b}})Behnezhad, Roghani, Rubinstein, and Saberi]{TSP-icalp24}
Soheil Behnezhad, Mohammad Roghani, Aviad Rubinstein, and Amin Saberi.
\newblock Sublinear algorithms for {TSP} via path covers.
\newblock In \emph{51st International Colloquium on Automata, Languages, and Programming, {ICALP}}, volume 297 of \emph{LIPIcs}, pages 19:1--19:16. Schloss Dagstuhl - Leibniz-Zentrum f{\"{u}}r Informatik, 2024{\natexlab{b}}.

\bibitem[Bhattacharya et~al.(2023{\natexlab{a}})Bhattacharya, Kiss, and Saranurak]{BhattacharyaKS23}
Sayan Bhattacharya, Peter Kiss, and Thatchaphol Saranurak.
\newblock Dynamic (1+{$\epsilon$})-approximate matching size in truly sublinear update time.
\newblock In \emph{64th {IEEE} Annual Symposium on Foundations of Computer Science, {FOCS} 2023, Santa Cruz, CA, USA, November 6-9, 2023}, pages 1563--1588. {IEEE}, 2023{\natexlab{a}}.
\newblock \doi{10.1109/FOCS57990.2023.00095}.
\newblock URL \url{https://doi.org/10.1109/FOCS57990.2023.00095}.

\bibitem[Bhattacharya et~al.(2023{\natexlab{b}})Bhattacharya, Kiss, Saranurak, and Wajc]{BhattacharyaKSW23}
Sayan Bhattacharya, Peter Kiss, Thatchaphol Saranurak, and David Wajc.
\newblock Dynamic matching with better-than-2 approximation in polylogarithmic update time.
\newblock In Nikhil Bansal and Viswanath Nagarajan, editors, \emph{Proceedings of the 2023 {ACM-SIAM} Symposium on Discrete Algorithms, {SODA} 2023, Florence, Italy, January 22-25, 2023}, pages 100--128. {SIAM}, 2023{\natexlab{b}}.
\newblock \doi{10.1137/1.9781611977554.CH5}.
\newblock URL \url{https://doi.org/10.1137/1.9781611977554.ch5}.

\bibitem[Byrka et~al.(2010)Byrka, Grandoni, Rothvo{\ss}, and Sanita]{byrka2010improved}
Jaroslaw Byrka, Fabrizio Grandoni, Thomas Rothvo{\ss}, and Laura Sanita.
\newblock An improved lp-based approximation for steiner tree.
\newblock In \emph{Proceedings of the forty-second ACM symposium on Theory of computing}, pages 583--592, 2010.

\bibitem[Chazelle et~al.(2005)Chazelle, Rubinfeld, and Trevisan]{chazelle2005approximating}
Bernard Chazelle, Ronitt Rubinfeld, and Luca Trevisan.
\newblock Approximating the minimum spanning tree weight in sublinear time.
\newblock \emph{SIAM Journal on computing}, 34\penalty0 (6):\penalty0 1370--1379, 2005.

\bibitem[Chen et~al.(2023)Chen, Khanna, and Tan]{chen2023query}
Yu~Chen, Sanjeev Khanna, and Zihan Tan.
\newblock Query complexity of the metric steiner tree problem.
\newblock In \emph{Proceedings of the 2023 Annual ACM-SIAM Symposium on Discrete Algorithms (SODA)}, pages 4893--4935. SIAM, 2023.

\bibitem[Chleb{\'\i}k and Chleb{\'\i}kov{\'a}(2008)]{chlebik2008steiner}
Miroslav Chleb{\'\i}k and Janka Chleb{\'\i}kov{\'a}.
\newblock The steiner tree problem on graphs: Inapproximability results.
\newblock \emph{Theoretical Computer Science}, 406\penalty0 (3):\penalty0 207--214, 2008.

\bibitem[Czumaj and Sohler(2009)]{czumaj2009estimating}
Artur Czumaj and Christian Sohler.
\newblock Estimating the weight of metric minimum spanning trees in sublinear time.
\newblock \emph{SIAM Journal on Computing}, 39\penalty0 (3):\penalty0 904--922, 2009.

\bibitem[Demaine et~al.(2014)Demaine, Indyk, Mahabadi, and Vakilian]{demaine2014streaming}
Erik~D Demaine, Piotr Indyk, Sepideh Mahabadi, and Ali Vakilian.
\newblock On streaming and communication complexity of the set cover problem.
\newblock In \emph{Distributed Computing: 28th International Symposium, DISC 2014, Austin, TX, USA, October 12-15, 2014. Proceedings 28}, pages 484--498. Springer, 2014.

\bibitem[Emek and Ros{\'e}n(2016)]{emek2016semi}
Yuval Emek and Adi Ros{\'e}n.
\newblock Semi-streaming set cover.
\newblock \emph{ACM Transactions on Algorithms (TALG)}, 13\penalty0 (1):\penalty0 1--22, 2016.

\bibitem[Fischer and Noever(2018)]{FischerN18}
Manuela Fischer and Andreas Noever.
\newblock Tight analysis of parallel randomized greedy {MIS}.
\newblock In Artur Czumaj, editor, \emph{Proceedings of the Twenty-Ninth Annual {ACM-SIAM} Symposium on Discrete Algorithms, {SODA}}, pages 2152--2160. {SIAM}, 2018.

\bibitem[Garg et~al.(2008)Garg, Gupta, Leonardi, and Sankowski]{GargGLS08}
Naveen Garg, Anupam Gupta, Stefano Leonardi, and Piotr Sankowski.
\newblock Stochastic analyses for online combinatorial optimization problems.
\newblock In Shang{-}Hua Teng, editor, \emph{Proceedings of the Nineteenth Annual {ACM-SIAM} Symposium on Discrete Algorithms, {SODA}}, pages 942--951, 2008.

\bibitem[Gilbert and Pollak(1968)]{gilbert1968steiner}
Edgar~N Gilbert and Henry~O Pollak.
\newblock Steiner minimal trees.
\newblock \emph{SIAM Journal on Applied Mathematics}, 16\penalty0 (1):\penalty0 1--29, 1968.

\bibitem[Grunau et~al.(2020)Grunau, Mitrovic, Rubinfeld, and Vakilian]{GrunauMRV20}
Christoph Grunau, Slobodan Mitrovic, Ronitt Rubinfeld, and Ali Vakilian.
\newblock Improved local computation algorithm for set cover via sparsification.
\newblock In Shuchi Chawla, editor, \emph{Proceedings of the 2020 {ACM-SIAM} Symposium on Discrete Algorithms, {SODA} 2020, Salt Lake City, UT, USA, January 5-8, 2020}, pages 2993--3011. {SIAM}, 2020.
\newblock \doi{10.1137/1.9781611975994.181}.
\newblock URL \url{https://doi.org/10.1137/1.9781611975994.181}.

\bibitem[Gupta and Kumar(2014)]{gupta2014online}
Anupam Gupta and Amit Kumar.
\newblock Online steiner tree with deletions.
\newblock In \emph{Proceedings of the twenty-fifth annual ACM-SIAM symposium on Discrete algorithms}, pages 455--467. SIAM, 2014.

\bibitem[Gupta and P{\'a}l(2005)]{gupta2005stochastic}
Anupam Gupta and Martin P{\'a}l.
\newblock Stochastic steiner trees without a root.
\newblock In \emph{Automata, Languages and Programming: 32nd International Colloquium, ICALP 2005, Lisbon, Portugal, July 11-15, 2005. Proceedings 32}, pages 1051--1063. Springer, 2005.

\bibitem[Gupta et~al.(2007)Gupta, Hajiaghayi, and Kumar]{gupta2007stochastic}
Anupam Gupta, MohammadTaghi Hajiaghayi, and Amit Kumar.
\newblock Stochastic steiner tree with non-uniform inflation.
\newblock In \emph{Approximation, Randomization, and Combinatorial Optimization. Algorithms and Techniques}, pages 134--148. Springer, 2007.

\bibitem[Har{-}Peled et~al.(2016)Har{-}Peled, Indyk, Mahabadi, and Vakilian]{Har-PeledIMV16}
Sariel Har{-}Peled, Piotr Indyk, Sepideh Mahabadi, and Ali Vakilian.
\newblock Towards tight bounds for the streaming set cover problem.
\newblock In Tova Milo and Wang{-}Chiew Tan, editors, \emph{Proceedings of the 35th {ACM} {SIGMOD-SIGACT-SIGAI} Symposium on Principles of Database Systems, {PODS} 2016, San Francisco, CA, USA, June 26 - July 01, 2016}, pages 371--383. {ACM}, 2016.
\newblock \doi{10.1145/2902251.2902287}.
\newblock URL \url{https://doi.org/10.1145/2902251.2902287}.

\bibitem[Imase and Waxman(1991)]{imase1991dynamic}
Makoto Imase and Bernard~M Waxman.
\newblock Dynamic steiner tree problem.
\newblock \emph{SIAM Journal on Discrete Mathematics}, 4\penalty0 (3):\penalty0 369--384, 1991.

\bibitem[Indyk et~al.(2017)Indyk, Mahabadi, Rubinfeld, Ullman, Vakilian, and Yodpinyanee]{indyk2017fractional}
Piotr Indyk, Sepideh Mahabadi, Ronitt Rubinfeld, Jonathan Ullman, Ali Vakilian, and Anak Yodpinyanee.
\newblock Fractional set cover in the streaming model.
\newblock In \emph{20th International Workshop on Approximation Algorithms for Combinatorial Optimization Problem (APPROX 2017)}, 2017.

\bibitem[Indyk et~al.(2018)Indyk, Mahabadi, Rubinfeld, Vakilian, and Yodpinyanee]{IndykMRVY18}
Piotr Indyk, Sepideh Mahabadi, Ronitt Rubinfeld, Ali Vakilian, and Anak Yodpinyanee.
\newblock Set cover in sub-linear time.
\newblock In Artur Czumaj, editor, \emph{Proceedings of the Twenty-Ninth Annual {ACM-SIAM} Symposium on Discrete Algorithms, {SODA} 2018, New Orleans, LA, USA, January 7-10, 2018}, pages 2467--2486. {SIAM}, 2018.
\newblock \doi{10.1137/1.9781611975031.158}.
\newblock URL \url{https://doi.org/10.1137/1.9781611975031.158}.

\bibitem[Joag-Dev and Proschan(1983)]{kumarDevProschen}
Kumar Joag-Dev and Frank Proschan.
\newblock Negative association of random variables with applications.
\newblock \emph{The Annals of Statistics}, 11\penalty0 (1):\penalty0 286--295, 1983.

\bibitem[Kapralov et~al.(2020)Kapralov, Mitrovic, Norouzi{-}Fard, and Tardos]{KapralovMNT20}
Michael Kapralov, Slobodan Mitrovic, Ashkan Norouzi{-}Fard, and Jakab Tardos.
\newblock Space efficient approximation to maximum matching size from uniform edge samples.
\newblock In Shuchi Chawla, editor, \emph{Proceedings of the 2020 {ACM-SIAM} Symposium on Discrete Algorithms, {SODA} 2020, Salt Lake City, UT, USA, January 5-8, 2020}, pages 1753--1772. {SIAM}, 2020.
\newblock \doi{10.1137/1.9781611975994.107}.
\newblock URL \url{https://doi.org/10.1137/1.9781611975994.107}.

\bibitem[Karp(1972)]{Karp72}
Richard~M. Karp.
\newblock Reducibility among combinatorial problems.
\newblock In Raymond~E. Miller and James~W. Thatcher, editors, \emph{Proceedings of Symposium on the Complexity of Computer Computations}, pages 85--103. Plenum Press, New York, 1972.

\bibitem[Khursheed and Saxena(1981)]{saxenaKhursheed}
Alam Khursheed and K.~M.~Lai Saxena.
\newblock Positive dependence in multivariate distributions.
\newblock \emph{Communications in Statistics - Theory and Methods}, 10\penalty0 (12):\penalty0 1183--1196, 1981.

\bibitem[Levi et~al.(2015)Levi, Rubinfeld, and Yodpinyanee]{LeviRY15}
Reut Levi, Ronitt Rubinfeld, and Anak Yodpinyanee.
\newblock Brief announcement: Local computation algorithms for graphs of non-constant degrees.
\newblock In Guy~E. Blelloch and Kunal Agrawal, editors, \emph{Proceedings of the 27th {ACM} on Symposium on Parallelism in Algorithms and Architectures, {SPAA} 2015, Portland, OR, USA, June 13-15, 2015}, pages 59--61. {ACM}, 2015.
\newblock \doi{10.1145/2755573.2755615}.
\newblock URL \url{https://doi.org/10.1145/2755573.2755615}.

\bibitem[Megow et~al.(2016)Megow, Skutella, Verschae, and Wiese]{megow2016power}
Nicole Megow, Martin Skutella, Jos{\'e} Verschae, and Andreas Wiese.
\newblock The power of recourse for online mst and tsp.
\newblock \emph{SIAM Journal on Computing}, 45\penalty0 (3):\penalty0 859--880, 2016.

\bibitem[Nguyen and Onak(2008)]{NguyenOnakFOCS08}
Huy~N. Nguyen and Krzysztof Onak.
\newblock {Constant-Time Approximation Algorithms via Local Improvements}.
\newblock In \emph{49th Annual {IEEE} Symposium on Foundations of Computer Science, {FOCS}}, pages 327--336, 2008.

\bibitem[Onak et~al.(2012)Onak, Ron, Rosen, and Rubinfeld]{OnakSODA12}
Krzysztof Onak, Dana Ron, Michal Rosen, and Ronitt Rubinfeld.
\newblock {A Near-Optimal Sublinear-Time Algorithm for Approximating the Minimum Vertex Cover Size}.
\newblock In \emph{Proceedings of the Twenty-Third Annual {ACM-SIAM} Symposium on Discrete Algorithms, {SODA}}, pages 1123--1131, 2012.

\bibitem[Parnas and Ron(2007)]{ParnasR07}
Michal Parnas and Dana Ron.
\newblock Approximating the minimum vertex cover in sublinear time and a connection to distributed algorithms.
\newblock \emph{Theor. Comput. Sci.}, 381\penalty0 (1-3):\penalty0 183--196, 2007.
\newblock \doi{10.1016/J.TCS.2007.04.040}.
\newblock URL \url{https://doi.org/10.1016/j.tcs.2007.04.040}.

\bibitem[Robins and Zelikovsky(2005)]{robins2005tighter}
Gabriel Robins and Alexander Zelikovsky.
\newblock Tighter bounds for graph steiner tree approximation.
\newblock \emph{SIAM Journal on Discrete Mathematics}, 19\penalty0 (1):\penalty0 122--134, 2005.

\bibitem[Saha and Getoor(2009)]{saha2009maximum}
Barna Saha and Lise Getoor.
\newblock On maximum coverage in the streaming model \& application to multi-topic blog-watch.
\newblock In \emph{Proceedings of the 2009 siam international conference on data mining}, pages 697--708. SIAM, 2009.

\bibitem[Wajc(2017)]{wajc2017negative}
David Wajc.
\newblock Negative association: definition, properties, and applications.
\newblock \emph{Manuscript, available from https://goo. gl/j2ekqM}, 2017.

\bibitem[Yoshida et~al.(2009)Yoshida, Yamamoto, and Ito]{YoshidaYISTOC09}
Yuichi Yoshida, Masaki Yamamoto, and Hiro Ito.
\newblock An improved constant-time approximation algorithm for maximum matchings.
\newblock In Michael Mitzenmacher, editor, \emph{Proceedings of the 41st Annual {ACM} Symposium on Theory of Computing, {STOC}}, pages 225--234. {ACM}, 2009.

\end{thebibliography}
	
\end{document}